   \newcommand\SkipToFmtEnd{}%
   \newcommand\EndFmtInput{}%
   \long\def\SkipToFmtEnd#1\EndFmtInput{}%
\newcommand\ReadOnlyOnce[1]{\@ifundefined{#1}{\@namedef{#1}{}}\SkipToFmtEnd}
\DeclareFontFamily{OT1}{cmtex}{}
\DeclareFontShape{OT1}{cmtex}{m}{n}
  {<5><6><7><8>cmtex8
   <9>cmtex9
   <10><10.95><12><14.4><17.28><20.74><24.88>cmtex10}{}
\DeclareFontShape{OT1}{cmtex}{m}{it}
  {<-> ssub * cmtt/m/it}{}
\DeclareFontShape{OT1}{cmtt}{bx}{n}
  {<5><6><7><8>cmtt8
   <9>cmbtt9
   <10><10.95><12><14.4><17.28><20.74><24.88>cmbtt10}{}
\DeclareFontShape{OT1}{cmtex}{bx}{n}
  {<-> ssub * cmtt/bx/n}{}
\newcommand{\Conid}[1]{\mathit{#1}}
\newcommand{\Varid}[1]{\mathit{#1}}
\newcommand{\anonymous}{\kern0.06em \vbox{\hrule\@width.5em}}
\newcommand{\plus}{\mathbin{+\!\!\!+}}
\newcommand{\bind}{\mathbin{>\!\!\!>\mkern-6.7mu=}}
\renewcommand{\geq}{\geqslant}
\newdimen\mathindent\mathindent\leftmargini}%
\def\resethooks{%
  \global\let\SaveRestoreHook\empty
  \global\let\ColumnHook\empty}
\newcommand*{\savecolumns}[1][default]%
  {\g@addto@macro\SaveRestoreHook{\savecolumns[#1]}}
\newcommand*{\restorecolumns}[1][default]%
  {\g@addto@macro\SaveRestoreHook{\restorecolumns[#1]}}
\newcommand*{\aligncolumn}[2]%
  {\g@addto@macro\ColumnHook{\column{#1}{#2}}}
\newcommand{\onelinecommentchars}{\quad-{}- }
\newcommand{\commentbeginchars}{\enskip\{-}
\newcommand{\commentendchars}{-\}\enskip}
\newcommand{\visiblecomments}{%
  \let\onelinecomment=\onelinecommentchars
  \let\commentbegin=\commentbeginchars
  \let\commentend=\commentendchars}
\newcommand{\invisiblecomments}{%
  \let\onelinecomment=\empty
  \let\commentbegin=\empty
  \let\commentend=\empty}
\newlength{\blanklineskip}
\newcommand{\hsindent}[1]{\quad}
\let\hspre\empty
\let\hspost\empty
\newcommand{\hsnewpar}[1]%
  {{\parskip=0pt\parindent=0pt\par\vskip #1\noindent}}
\newcommand{\hscodestyle}{}
\newcommand{\sethscode}[1]%
  {\expandafter\let\expandafter\hscode\csname #1\endcsname
   \expandafter\let\expandafter\endhscode\csname end#1\endcsname}
   \let\hspre\(\let\hspost\)%
   \let\hspre\(\let\hspost\)%
\newcommand{\plainhs}{\sethscode{plainhscode}}
\def\codeframewidth{\arrayrulewidth}
   \let\endoflinesave=\\
   \framedhslinecorrect\endoflinesave{.5ex}\hline
\newcommand{\framedhslinecorrect}[2]%
  {#1[#2]}
\def\column##1##2{}%
   \newcommand\>[1][]{}\newcommand\<[1][]{}\newcommand\\[1][]{}%
   \def\fromto##1##2##3{##3}%
\let\orighscode=\hscode
   \let\origendhscode=\endhscode
   \def\endhscode{\def\hscode{\endgroup\def\@currenvir{hscode}\\}\begingroup}
\def\hscode{\endgroup\def\@currenvir{hscode}}}%
   \global\let\hscode=\orighscode
   \global\let\endhscode=\origendhscode}%
\let\HaskellResetHook\empty
\newcommand*{\AtHaskellReset}[1]{%
  \g@addto@macro\HaskellResetHook{#1}}
\newcommand*{\HaskellReset}{\HaskellResetHook}
\newcommand\hsforall{\global\let\hsdot=\hsperiodonce}
\newcommand*\hsperiodonce[2]{#2\global\let\hsdot=\hscompose}
\newcommand*\hscompose[2]{#1}
\definecolor{codecolor}{rgb}{1,1,.667}
\newlength{\coderulewidth}
\definecolor{red}{RGB}{128,0,0}
\definecolor{blue}{RGB}{0,0,20}
\definecolor{Gray}{RGB}{80,80,80}
\def\commentbegin{\color{Gray}$\langle$\ }
\def\commentend{$\rangle$}
\renewcommand{\Conid}[1]{{\color{red}\mathtt{#1}}}
\renewcommand{\Varid}[1]{\mathtt{#1}}
\title{Free Applicative Functors}
\author{Paolo Capriotti
\institute{University of Nottingham, United Kingdom}\\
\email{pvc@cs.nott.ac.uk}
\and
Ambrus Kaposi
\institute{University of Nottingham, United Kingdom}\\
\email{auk@cs.nott.ac.uk}}
\newtheorem{prop}{Proposition}
\newtheorem{lem}{Lemma}
\newtheorem{defn}{Definition}
\newcommand{\A}{\mathcal{A}}
\newcommand{\B}{\mathcal{B}}
\newcommand{\C}{\mathcal{C}}
\newcommand{\D}{\mathcal{D}}
\newcommand{\colim}{\operatornamewithlimits{\mathsf{colim}}}
\begin{document}

\maketitle

\begin{abstract}
Applicative functors \cite{applicative} are a generalisation of monads. Both
allow the expression of effectful computations into an otherwise pure
language, like Haskell \cite{haskell2010}.
Applicative functors are to be preferred to monads when the structure of a
computation is fixed \emph{a priori}.  That makes it possible to perform certain
kinds of static analysis on applicative values.
We define a notion of \emph{free applicative functor}, prove that it satisfies
the appropriate laws, and that the construction is left adjoint to a suitable
forgetful functor.
We show how free applicative functors can be used to implement embedded DSLs
which can be statically analysed.
\end{abstract}

%
%
%
%

\section{Introduction}
\label{sec_intro}

\emph{Free} monads in Haskell are a very well-known and practically used
construction.  Given any
endofunctor \ensuremath{\Varid{f}}, the free monad on \ensuremath{\Varid{f}} is given by a simple inductive
definition:

\begin{hscode}\SaveRestoreHook
\column{B}{@{}>{\hspre}l<{\hspost}@{}}%
\column{3}{@{}>{\hspre}c<{\hspost}@{}}%
\column{3E}{@{}l@{}}%
\column{6}{@{}>{\hspre}l<{\hspost}@{}}%
\column{E}{@{}>{\hspre}l<{\hspost}@{}}%
\>[B]{}\texttt{\color{blue}\textbf{data}}\;\Conid{Free}\;\Varid{f}\;\Varid{a}{}\<[E]%
\\
\>[B]{}\hsindent{3}{}\<[3]%
\>[3]{}\mathrel{=}{}\<[3E]%
\>[6]{}\Conid{Return}\;\Varid{a}{}\<[E]%
\\
\>[B]{}\hsindent{3}{}\<[3]%
\>[3]{}\mid {}\<[3E]%
\>[6]{}\Conid{Free}\;\texttt{(}\Varid{f}\;\texttt{(}\Conid{Free}\;\Varid{f}\;\Varid{a}\texttt{)}\texttt{)}{}\<[E]%
\ColumnHook
\end{hscode}\resethooks

The typical use case for this construction is creating embedded DSLs (see for
example \cite{data_types_a_la_carte}, where \ensuremath{\Conid{Free}} is called \ensuremath{\Conid{Term}}).  In this
context, the functor \ensuremath{\Varid{f}} is usually obtained as the coproduct of a number of
functors representing ``basic operations'', and the resulting DSL is the minimal
embedded language including those operations.

One problem of the free monad approach is that programs written in a monadic DSL
are not amenable to static analysis.  It is impossible to examine the structure
of a monadic computation without executing it.

In this paper, we show how a similar ``free construction'' can be realised in
the context of applicative functors.  In particular, we make the following
contributions:

\begin{itemize}
\item We give two definitions of \emph{free applicative functor} in Haskell
(section \ref{sec:def}), and show that they are equivalent (section \ref{sec:isomorphism}).
\item We prove that our definition is correct, in the sense that it really is an
applicative functor (section \ref{sec:applicative}), and that it is ``free'' in a precise sense
(section \ref{sec:adjoint}).
\item We present a number of examples where the use of free applicative functors
helps make the code more elegant, removes duplication or enables certain kinds
of optimizations which are not possible when using free monads. We describe the
differences between expressivity of DSLs using free applicatives and free monads
(section \ref{sec:applications}).
\item We compare our definition to other existing implementations of the same
idea (section \ref{sec:related}).
\end{itemize}

Applicative functors can be regarded as monoids in the category of endofunctors
with Day convolution (see for instance \cite{day_thesis}, example 3.2.2).  There
exists a general theory for constructing free monoids in monoidal categories
\cite{kelly_transfinite}, but in this paper we aim to describe the special
case of applicative functors using a formalism that is accessible to an audience
of Haskell programmers.

Familiarity with applicative functors is not required, although it is helpful to
understand the motivation behind this work.  We make use of category theoretical
concepts to justify our definition, but the Haskell code we present can also
stand on its own.

The proofs in this paper are carried out using equational reasoning in an
informally defined total subset of Haskell.  In sections \ref{sec:totality} and
\ref{sec:semantics} we will show how to interpret all our definitions and proofs
in a general (locally presentable) cartesian closed category, such as the
category of sets.

\subsection{Applicative functors}

\emph{Applicative functors} (also called \emph{idioms}) were first introduced in
\cite{applicative} as a generalisation of monads that provides a lighter
notation for expressing monadic computations in an applicative style.

They have since been used in a variety of different applications, including
efficient parsing (see section \ref{example:applicative_parsers}), regular
expressions and bidirectional routing.

Applicative functors are defined by the following type class:

\begin{hscode}\SaveRestoreHook
\column{B}{@{}>{\hspre}l<{\hspost}@{}}%
\column{3}{@{}>{\hspre}l<{\hspost}@{}}%
\column{E}{@{}>{\hspre}l<{\hspost}@{}}%
\>[B]{}\mathbf{class}\;\Conid{Functor}\;\Varid{f}\Rightarrow \Conid{Applicative}\;\Varid{f}\;\texttt{\color{blue}\textbf{where}}{}\<[E]%
\\
\>[B]{}\hsindent{3}{}\<[3]%
\>[3]{}\Varid{pure}\mathbin{::}\Varid{a}\to \Varid{f}\;\Varid{a}{}\<[E]%
\\
\>[B]{}\hsindent{3}{}\<[3]%
\>[3]{}\texttt{(}\mathbin{\texttt{<*>}}\texttt{)}\mathbin{::}\Varid{f}\;\texttt{(}\Varid{a}\to \Varid{b}\texttt{)}\to \Varid{f}\;\Varid{a}\to \Varid{f}\;\Varid{b}{}\<[E]%
\ColumnHook
\end{hscode}\resethooks

The idea is that a value of type \ensuremath{\Varid{f}\;\Varid{a}} represents an ``effectful'' computation
returning a result of type \ensuremath{\Varid{a}}.  The \ensuremath{\Varid{pure}} method creates a trivial computation
without any effect, and \ensuremath{\texttt{(}\mathbin{\texttt{<*>}}\texttt{)}} allows two computations to be sequenced, by
applying a function returned by the first, to the value returned by the second.

Since every monad can be made into an applicative functor in a
canonical way,\footnote{To be precise, there are two canonical ways to
turn a monad into an applicative functor, with opposite orderings of
effects.}  the abundance of monads in the practice of Haskell
programming naturally results in a significant number of practically
useful applicative functors.

Applicatives not arising from monads, however, are not as widespread, probably
because, although it is relatively easy to combine existing applicatives (see
for example \cite{constructing_applicatives}), techniques to construct new ones
have not been thoroughly explored so far.

In this paper we are going to define an applicative functor \ensuremath{\Conid{FreeA}\;\Varid{f}} for any
Haskell functor \ensuremath{\Varid{f}}, thus providing a systematic way to create new applicatives,
which can be used for a variety of applications.

The meaning of \ensuremath{\Conid{FreeA}\;\Varid{f}} will be clarified in section \ref{sec:adjoint}, but for
the sake of the following examples, \ensuremath{\Conid{FreeA}\;\Varid{f}} can be thought of as the
``simplest'' applicative functor which can be built using \ensuremath{\Varid{f}}.

\subsection{Example: option parsers}\label{example:option_intro}

To illustrate how the free applicative construction can be used in practice, we
take as a running example a parser for options of a command-line tool.

For simplicity, we will limit ourselves to an interface which can only accept
options that take a single argument.  We will use a double dash as a prefix for
the option name.

For example, a tool to create a new user in a Unix system could be used as
follows:

\begin{tabbing}\tt
~create\char95{}user~\char45{}\char45{}username~john~\char92{}\\
\tt ~~~~~~~~~~~~~\char45{}\char45{}fullname~\char34{}John~Doe\char34{}~\char92{}\\
\tt ~~~~~~~~~~~~~\char45{}\char45{}id~1002
\end{tabbing}

Our parser could be run over the argument list and it would return a record of
the following type:

\begin{hscode}\SaveRestoreHook
\column{B}{@{}>{\hspre}l<{\hspost}@{}}%
\column{3}{@{}>{\hspre}l<{\hspost}@{}}%
\column{6}{@{}>{\hspre}l<{\hspost}@{}}%
\column{E}{@{}>{\hspre}l<{\hspost}@{}}%
\>[B]{}\texttt{\color{blue}\textbf{data}}\;\Conid{User}\mathrel{=}\Conid{User}{}\<[E]%
\\
\>[B]{}\hsindent{3}{}\<[3]%
\>[3]{}\{\mskip1.5mu {}\<[6]%
\>[6]{}\Varid{username}\mathbin{::}\Conid{String}{}\<[E]%
\\
\>[B]{}\hsindent{3}{}\<[3]%
\>[3]{}\mathbin{\texttt{,}}{}\<[6]%
\>[6]{}\Varid{fullname}\mathbin{::}\Conid{String}{}\<[E]%
\\
\>[B]{}\hsindent{3}{}\<[3]%
\>[3]{}\mathbin{\texttt{,}}{}\<[6]%
\>[6]{}\Varid{id}\mathbin{::}\Conid{Int}\mskip1.5mu\}{}\<[E]%
\\
\>[B]{}\hsindent{3}{}\<[3]%
\>[3]{}\texttt{\color{blue}\textbf{deriving}}\;\Conid{Show}{}\<[E]%
\ColumnHook
\end{hscode}\resethooks

Furthermore, given a parser, it should be possible to automatically produce a
summary of all the options that it supports, to be presented to the user of the
tool as documentation.

We can define a data structure representing a parser for an individual option,
with a specified type, as a functor:

\begin{hscode}\SaveRestoreHook
\column{B}{@{}>{\hspre}l<{\hspost}@{}}%
\column{3}{@{}>{\hspre}l<{\hspost}@{}}%
\column{6}{@{}>{\hspre}l<{\hspost}@{}}%
\column{E}{@{}>{\hspre}l<{\hspost}@{}}%
\>[B]{}\texttt{\color{blue}\textbf{data}}\;\Conid{Option}\;\Varid{a}\mathrel{=}\Conid{Option}{}\<[E]%
\\
\>[B]{}\hsindent{3}{}\<[3]%
\>[3]{}\{\mskip1.5mu {}\<[6]%
\>[6]{}\Varid{optName}\mathbin{::}\Conid{String}{}\<[E]%
\\
\>[B]{}\hsindent{3}{}\<[3]%
\>[3]{}\mathbin{\texttt{,}}{}\<[6]%
\>[6]{}\Varid{optDefault}\mathbin{::}\Conid{Maybe}\;\Varid{a}{}\<[E]%
\\
\>[B]{}\hsindent{3}{}\<[3]%
\>[3]{}\mathbin{\texttt{,}}{}\<[6]%
\>[6]{}\Varid{optReader}\mathbin{::}\Conid{String}\to \Conid{Maybe}\;\Varid{a}\mskip1.5mu\}{}\<[E]%
\\
\>[B]{}\hsindent{3}{}\<[3]%
\>[3]{}\texttt{\color{blue}\textbf{deriving}}\;\Conid{Functor}{}\<[E]%
\ColumnHook
\end{hscode}\resethooks

We now want to create a DSL based on the \ensuremath{\Conid{Option}} functor, which would allow us
to combine options for different types into a single value representing the full
parser.  As stated in the introduction, a common way to create a DSL from a
functor is to use free monads.

However, taking the free monad over the \ensuremath{\Conid{Option}} functor would not be very
useful here.  First of all, sequencing of options should be \emph{independent}:
later options should not depend on the value parsed by previous ones.  Secondly,
monads cannot be inspected without running them, so there is no way to obtain a
summary of all options of a parser automatically.

What we really need is a way to construct a parser DSL in such a way that the
values returned by the individual options can be combined using an \ensuremath{\Conid{Applicative}}
interface.  And that is exactly what \ensuremath{\Conid{FreeA}} will provide.

Thus, if we use \ensuremath{\Conid{FreeA}\;\Conid{Option}\;\Varid{a}} as our embedded DSL, we can interpret it as the
type of a parser with an unspecified number of options, of possibly different
types.  When run, those options would be matched against the input command line,
in an arbitrary order, and the resulting values will be eventually combined to
obtain a final result of type \ensuremath{\Varid{a}}.

In our specific example, an expression to specify the command line option parser
for \text{\tt create\char95{}user} would look like this:

\begin{hscode}\SaveRestoreHook
\column{B}{@{}>{\hspre}l<{\hspost}@{}}%
\column{3}{@{}>{\hspre}l<{\hspost}@{}}%
\column{E}{@{}>{\hspre}l<{\hspost}@{}}%
\>[B]{}\Varid{userP}\mathbin{::}\Conid{FreeA}\;\Conid{Option}\;\Conid{User}{}\<[E]%
\\
\>[B]{}\Varid{userP}\mathrel{=}\Conid{User}{}\<[E]%
\\
\>[B]{}\hsindent{3}{}\<[3]%
\>[3]{}\mathbin{\texttt{<\$>}}\Varid{one}\;\texttt{(}\Conid{Option}\;\text{\tt \char34 username\char34}\;\Conid{Nothing}\;\Conid{Just}\texttt{)}{}\<[E]%
\\
\>[B]{}\hsindent{3}{}\<[3]%
\>[3]{}\mathbin{\texttt{<*>}}\Varid{one}\;\texttt{(}\Conid{Option}\;\text{\tt \char34 fullname\char34}\;\texttt{(}\Conid{Just}\;\text{\tt \char34 \char34}\texttt{)}\;\Conid{Just}\texttt{)}{}\<[E]%
\\
\>[B]{}\hsindent{3}{}\<[3]%
\>[3]{}\mathbin{\texttt{<*>}}\Varid{one}\;\texttt{(}\Conid{Option}\;\text{\tt \char34 id\char34}\;\Conid{Nothing}\;\Varid{readInt}\texttt{)}{}\<[E]%
\\[\blanklineskip]%
\>[B]{}\Varid{readInt}\mathbin{::}\Conid{String}\to \Conid{Maybe}\;\Conid{Int}{}\<[E]%
\ColumnHook
\end{hscode}\resethooks

where we need a ``generic smart constructor'':

\begin{hscode}\SaveRestoreHook
\column{B}{@{}>{\hspre}l<{\hspost}@{}}%
\column{E}{@{}>{\hspre}l<{\hspost}@{}}%
\>[B]{}\Varid{one}\mathbin{::}\Conid{Option}\;\Varid{a}\to \Conid{FreeA}\;\Conid{Option}\;\Varid{a}{}\<[E]%
\ColumnHook
\end{hscode}\resethooks

which lifts an option to a parser.

\subsection{Example: web service client}\label{example:web_service_intro}

One of the applications of free monads, exemplified in
\cite{data_types_a_la_carte}, is the definition of special-purpose monads,
allowing to express computations which make use of a limited and well-defined
subset of IO operations.

Given the following functor:

\begin{hscode}\SaveRestoreHook
\column{B}{@{}>{\hspre}l<{\hspost}@{}}%
\column{3}{@{}>{\hspre}l<{\hspost}@{}}%
\column{6}{@{}>{\hspre}l<{\hspost}@{}}%
\column{12}{@{}>{\hspre}l<{\hspost}@{}}%
\column{E}{@{}>{\hspre}l<{\hspost}@{}}%
\>[B]{}\texttt{\color{blue}\textbf{data}}\;\Conid{WebService}\;\Varid{a}\mathrel{=}{}\<[E]%
\\
\>[B]{}\hsindent{6}{}\<[6]%
\>[6]{}\Conid{GET}\;{}\<[12]%
\>[12]{}\{\mskip1.5mu \Varid{url}\mathbin{::}\Conid{URL}\mathbin{\texttt{,}}\Varid{params}\mathbin{::}[\mskip1.5mu \Conid{String}\mskip1.5mu]\mathbin{\texttt{,}}\Varid{result}\mathbin{::}\texttt{(}\Conid{String}\to \Varid{a}\texttt{)}\mskip1.5mu\}{}\<[E]%
\\
\>[B]{}\hsindent{3}{}\<[3]%
\>[3]{}\mid {}\<[6]%
\>[6]{}\Conid{POST}\;{}\<[12]%
\>[12]{}\{\mskip1.5mu \Varid{url}\mathbin{::}\Conid{URL}\mathbin{\texttt{,}}\Varid{params}\mathbin{::}[\mskip1.5mu \Conid{String}\mskip1.5mu]\mathbin{\texttt{,}}\Varid{body}\mathbin{::}\Conid{String}\mathbin{\texttt{,}}\Varid{cont}\mathbin{::}\Varid{a}\mskip1.5mu\}{}\<[E]%
\\
\>[B]{}\hsindent{3}{}\<[3]%
\>[3]{}\texttt{\color{blue}\textbf{deriving}}\;\Conid{Functor}{}\<[E]%
\ColumnHook
\end{hscode}\resethooks

the free monad on \ensuremath{\Conid{WebService}} allows the definition of an
application interacting with a web service with the same convenience as the \text{\tt IO}
monad, once ``smart constructors'' are defined for the two basic operations
of getting and posting:

\begin{hscode}\SaveRestoreHook
\column{B}{@{}>{\hspre}l<{\hspost}@{}}%
\column{E}{@{}>{\hspre}l<{\hspost}@{}}%
\>[B]{}\Varid{get}\mathbin{::}\Conid{URL}\to [\mskip1.5mu \Conid{String}\mskip1.5mu]\to \Conid{Free}\;\Conid{WebService}\;\Conid{String}{}\<[E]%
\\
\>[B]{}\Varid{get}\;\Varid{url}\;\Varid{params}\mathrel{=}\Conid{Free}\;\texttt{(}\Conid{GET}\;\Varid{url}\;\Varid{params}\;\Conid{Return}\texttt{)}{}\<[E]%
\\[\blanklineskip]%
\>[B]{}\Varid{post}\mathbin{::}\Conid{URL}\to [\mskip1.5mu \Conid{String}\mskip1.5mu]\to \Conid{String}\to \Conid{Free}\;\Conid{WebService}\;\texttt{(}\texttt{)}{}\<[E]%
\\
\>[B]{}\Varid{post}\;\Varid{url}\;\Varid{params}\;\Varid{body}\mathrel{=}\Conid{Free}\;\texttt{(}\Conid{POST}\;\Varid{url}\;\Varid{params}\;\Varid{body}\;\texttt{(}\Conid{Return}\;\texttt{(}\texttt{)}\texttt{)}\texttt{)}{}\<[E]%
\ColumnHook
\end{hscode}\resethooks

For example, one can implement an operation which copies data from one
server to another as follows:

\begin{hscode}\SaveRestoreHook
\column{B}{@{}>{\hspre}l<{\hspost}@{}}%
\column{E}{@{}>{\hspre}l<{\hspost}@{}}%
\>[B]{}\Varid{copy}\mathbin{::}\Conid{URL}\to [\mskip1.5mu \Conid{String}\mskip1.5mu]\to \Conid{URL}\to [\mskip1.5mu \Conid{String}\mskip1.5mu]\to \Conid{Free}\;\Conid{WebService}\;\texttt{(}\texttt{)}{}\<[E]%
\\
\>[B]{}\Varid{copy}\;\Varid{srcURL}\;\Varid{srcPars}\;\Varid{dstURL}\;\Varid{dstPars}\mathrel{=}\Varid{get}\;\Varid{srcURL}\;\Varid{srcPars}\bind \Varid{post}\;\Varid{dstURL}\;\Varid{dstPars}{}\<[E]%
\ColumnHook
\end{hscode}\resethooks

For some applications, we might need to have more control over the operations
that are going to be executed when we eventually run the embedded program
contained in a value of type \ensuremath{\Conid{Free}\;\Conid{WebService}\;\Varid{a}}.

For example, a web service client application executing a large number of \text{\tt GET}
and \text{\tt POST} operations might want to rate limit the number of requests to a
particular server by putting delays between them, and, on the other hand,
parallelise requests to different servers. Another useful feature would be to
estimate the time it would take to execute an embedded Web Service application.

However, there is no way to achieve that using the free monad approach.  In
fact, it is not even possible to define a function like:

\begin{hscode}\SaveRestoreHook
\column{B}{@{}>{\hspre}l<{\hspost}@{}}%
\column{E}{@{}>{\hspre}l<{\hspost}@{}}%
\>[B]{}\Varid{count}\mathbin{::}\Conid{Free}\;\Conid{WebService}\;\Varid{a}\to \Conid{Int}{}\<[E]%
\ColumnHook
\end{hscode}\resethooks

which returns the total number of \text{\tt GET}/\text{\tt POST} operations performed by a value
of type \ensuremath{\Conid{Free}\;\Conid{WebService}\;\Varid{a}}.

To see why, consider the following example, which updates the email field in all
the blog posts on a particular website:

\begin{hscode}\SaveRestoreHook
\column{B}{@{}>{\hspre}l<{\hspost}@{}}%
\column{3}{@{}>{\hspre}l<{\hspost}@{}}%
\column{5}{@{}>{\hspre}l<{\hspost}@{}}%
\column{E}{@{}>{\hspre}l<{\hspost}@{}}%
\>[B]{}\Varid{updateEmails}\mathbin{::}\Conid{String}\to \Conid{Free}\;\Conid{WebService}\;\texttt{(}\texttt{)}{}\<[E]%
\\
\>[B]{}\Varid{updateEmails}\;\Varid{newEmail}\mathrel{=}\texttt{\color{blue}\textbf{do}}{}\<[E]%
\\
\>[B]{}\hsindent{3}{}\<[3]%
\>[3]{}\Varid{entryURLs}\leftarrow \Varid{get}\;\text{\tt \char34 myblog.com\char34}\;[\mskip1.5mu \text{\tt \char34 list\char95 entries\char34}\mskip1.5mu]{}\<[E]%
\\
\>[B]{}\hsindent{3}{}\<[3]%
\>[3]{}\Varid{forM\char95 }\;\texttt{(}\Varid{words}\;\Varid{entryURLs}\texttt{)}\mathbin{\$}\lambda \Varid{entryURL}\to {}\<[E]%
\\
\>[3]{}\hsindent{2}{}\<[5]%
\>[5]{}\Varid{post}\;\Varid{entryURL}\;[\mskip1.5mu \text{\tt \char34 updateEmail\char34}\mskip1.5mu]\;\Varid{newEmail}{}\<[E]%
\ColumnHook
\end{hscode}\resethooks

Now, the number of \text{\tt POST} operations performed by \text{\tt updateEmails} is
the same as the number of blog posts on \text{\tt myblog\char46{}com} which cannot be
determined by a pure function like \text{\tt count}.

The \ensuremath{\Conid{FreeA}} construction, presented in this paper, represents a general solution
for the problem of constructing embedded languages that allow the definition of
functions performing static analysis on embedded programs, of which
\ensuremath{\Varid{count}\mathbin{::}\Conid{FreeA}\;\Conid{WebService}\;\Varid{a}\to \Conid{Int}} is a very simple example.

\subsection{Example: applicative parsers}\label{example:applicative_parsers}

The idea that monads are ``too flexible'' has also been explored, again in the
context of parsing, by Swierstra and Duponcheel \cite{arrow_parsers}, who
showed how to improve both performance and error-reporting capabilities of an
embedded language for grammars by giving up some of the expressivity of monads.

The basic principle is that, by weakening the monadic interface to that of an
applicative functor (or, more precisely, an \emph{alternative} functor), it
becomes possible to perform enough static analysis to compute first sets for
productions.

The approach followed in \cite{arrow_parsers} is ad-hoc: an applicative functor
is defined, which keeps track of first sets, and whether a parser accepts the
empty string.  This is combined with a traditional monadic parser, regarded as
an applicative functor, using a generalised semi-direct product, as described in
\cite{constructing_applicatives}.

The question, then, is whether it is possible to express this construction in a
general form, in such a way that, given a functor representing a notion of
``parser'' for an individual symbol in the input stream, applying the
construction one would automatically get an Applicative functor, allowing such
elementary parsers to be sequenced.

Free applicative functors can be used to that end. We start with a functor \ensuremath{\Varid{f}},
such that \ensuremath{\Varid{f}\;\Varid{a}} describes an elementary parser for individual elements of the
input, returning values of type \ensuremath{\Varid{a}}.  \ensuremath{\Conid{FreeA}\;\Varid{f}\;\Varid{a}} is then a parser which can be
used on the full input, and combines all the outputs of the individual parsers
out of which it is built, yielding a result of type \ensuremath{\Varid{a}}.

Unfortunately, applying this technique directly results in a strictly less
expressive solution.  In fact, since \ensuremath{\Conid{FreeA}\;\Varid{f}} is the simplest applicative over
\ensuremath{\Varid{f}}, it is necessarily \emph{just} an applicative, i.e. it cannot also have an
\ensuremath{\Conid{Alternative}} instance, which in this case is essential.

The \ensuremath{\Conid{Alternative}} type class is defined as follows:

\begin{hscode}\SaveRestoreHook
\column{B}{@{}>{\hspre}l<{\hspost}@{}}%
\column{3}{@{}>{\hspre}l<{\hspost}@{}}%
\column{E}{@{}>{\hspre}l<{\hspost}@{}}%
\>[B]{}\mathbf{class}\;\Conid{Applicative}\;\Varid{f}\Rightarrow \Conid{Alternative}\;\Varid{f}\;\texttt{\color{blue}\textbf{where}}{}\<[E]%
\\
\>[B]{}\hsindent{3}{}\<[3]%
\>[3]{}\Varid{empty}\mathbin{::}\Varid{f}\;\Varid{a}{}\<[E]%
\\
\>[B]{}\hsindent{3}{}\<[3]%
\>[3]{}\texttt{(}\mathbin{\texttt{<|>}}\texttt{)}\mathbin{::}\Varid{f}\;\Varid{a}\to \Varid{f}\;\Varid{a}\to \Varid{f}\;\Varid{a}{}\<[E]%
\ColumnHook
\end{hscode}\resethooks

An Alternative instance gives an applicative functor the structure of a monoid,
with \ensuremath{\Varid{empty}} as the unit element, and \ensuremath{\mathbin{\texttt{<|>}}} as the binary operation. In the
case of parsers, \ensuremath{\Varid{empty}} matches no input string, while \ensuremath{\mathbin{\texttt{<|>}}} is a choice
operator between two parsers.

We discuss the issue of \ensuremath{\Conid{Alternative}} in more detail in section
\ref{sec:discussion}.

\section{Definition of free applicative functors}\label{sec:def}

To obtain a suitable definition for the free applicative functor generated by a
functor \ensuremath{\Varid{f}}, we first pause to reflect on how one could naturally arrive at the
definition of the \ensuremath{\Conid{Applicative}} class via an obvious generalisation of the
notion of functor.

Given a functor \ensuremath{\Varid{f}}, the \ensuremath{\Varid{fmap}} method gives us a way to lift \emph{unary} pure
functions \ensuremath{\Varid{a}\to \Varid{b}} to effectful functions \ensuremath{\Varid{f}\;\Varid{a}\to \Varid{f}\;\Varid{b}}, but what about
functions of arbitrary arity?

For example, given a value of type \ensuremath{\Varid{a}}, we can regard it as a nullary pure
function, which we might want to lift to a value of type \ensuremath{\Varid{f}\;\Varid{a}}.

Similarly, given a binary function \ensuremath{\Varid{h}\mathbin{::}\Varid{a}\to \Varid{b}\to \Varid{c}}, it is quite reasonable to
ask for a lifting of \ensuremath{\Varid{h}} to something of type \ensuremath{\Varid{f}\;\Varid{a}\to \Varid{f}\;\Varid{b}\to \Varid{f}\;\Varid{c}}.

The \ensuremath{\Conid{Functor}} instance alone cannot provide either of such liftings, nor any of
the higher-arity liftings which we could define.

It is therefore natural to define a type class for generalised functors, able to
lift functions of arbitrary arity:

\begin{hscode}\SaveRestoreHook
\column{B}{@{}>{\hspre}l<{\hspost}@{}}%
\column{3}{@{}>{\hspre}l<{\hspost}@{}}%
\column{E}{@{}>{\hspre}l<{\hspost}@{}}%
\>[B]{}\mathbf{class}\;\Conid{Functor}\;\Varid{f}\Rightarrow \Conid{MultiFunctor}\;\Varid{f}\;\texttt{\color{blue}\textbf{where}}{}\<[E]%
\\
\>[B]{}\hsindent{3}{}\<[3]%
\>[3]{}\mathtt{fmap}_\mathtt{0}\mathbin{::}\Varid{a}\to \Varid{f}\;\Varid{a}{}\<[E]%
\\[\blanklineskip]%
\>[B]{}\hsindent{3}{}\<[3]%
\>[3]{}\mathtt{fmap}_\mathtt{1}\mathbin{::}\texttt{(}\Varid{a}\to \Varid{b}\texttt{)}\to \Varid{f}\;\Varid{a}\to \Varid{f}\;\Varid{b}{}\<[E]%
\\
\>[B]{}\hsindent{3}{}\<[3]%
\>[3]{}\mathtt{fmap}_\mathtt{1}\mathrel{=}\Varid{fmap}{}\<[E]%
\\[\blanklineskip]%
\>[B]{}\hsindent{3}{}\<[3]%
\>[3]{}\mathtt{fmap}_\mathtt{2}\mathbin{::}\texttt{(}\Varid{a}\to \Varid{b}\to \Varid{c}\texttt{)}\to \Varid{f}\;\Varid{a}\to \Varid{f}\;\Varid{b}\to \Varid{f}\;\Varid{c}{}\<[E]%
\ColumnHook
\end{hscode}\resethooks

It is easy to see that a higher-arity \ensuremath{\mathtt{fmap}_\mathtt{n}} can now be defined in terms of
\ensuremath{\mathtt{fmap}_\mathtt{2}}. For example, for $\ensuremath{\Varid{n}} = 3$:

\begin{hscode}\SaveRestoreHook
\column{B}{@{}>{\hspre}l<{\hspost}@{}}%
\column{8}{@{}>{\hspre}l<{\hspost}@{}}%
\column{E}{@{}>{\hspre}l<{\hspost}@{}}%
\>[B]{}\mathtt{fmap}_\mathtt{3}{}\<[8]%
\>[8]{}\mathbin{::}\Conid{MultiFunctor}\;\Varid{f}{}\<[E]%
\\
\>[8]{}\Rightarrow \texttt{(}\Varid{a}\to \Varid{b}\to \Varid{c}\to \Varid{d}\texttt{)}{}\<[E]%
\\
\>[8]{}\to \Varid{f}\;\Varid{a}\to \Varid{f}\;\Varid{b}\to \Varid{f}\;\Varid{c}\to \Varid{f}\;\Varid{d}{}\<[E]%
\\
\>[B]{}\mathtt{fmap}_\mathtt{3}\;\Varid{h}\;\Varid{x}\;\Varid{y}\;\Varid{z}\mathrel{=}\mathtt{fmap}_\mathtt{2}\;\texttt{(}\mathbin{\texttt{\$}}\texttt{)}\;\texttt{(}\mathtt{fmap}_\mathtt{2}\;\Varid{h}\;\Varid{x}\;\Varid{y}\texttt{)}\;\Varid{z}{}\<[E]%
\ColumnHook
\end{hscode}\resethooks

However, before trying to think of what the laws for such a type class ought to
be, we can observe that \ensuremath{\Conid{MultiFunctor}} is actually none other than \ensuremath{\Conid{Applicative}}
in disguise.

In fact, \ensuremath{\mathtt{fmap}_\mathtt{0}} has exactly the same type as \ensuremath{\Varid{pure}}, and we can easily convert
\ensuremath{\mathtt{fmap}_\mathtt{2}} to \ensuremath{\texttt{(}\mathbin{\texttt{<*>}}\texttt{)}} and vice versa:

\begin{hscode}\SaveRestoreHook
\column{B}{@{}>{\hspre}l<{\hspost}@{}}%
\column{E}{@{}>{\hspre}l<{\hspost}@{}}%
\>[B]{}\Varid{g}\mathbin{\texttt{<*>}}\Varid{x}\mathrel{=}\mathtt{fmap}_\mathtt{2}\;\texttt{(}\mathbin{\texttt{\$}}\texttt{)}\;\Varid{g}\;\Varid{x}{}\<[E]%
\\
\>[B]{}\mathtt{fmap}_\mathtt{2}\;\Varid{h}\;\Varid{x}\;\Varid{y}\mathrel{=}\Varid{fmap}\;\Varid{h}\;\Varid{x}\mathbin{\texttt{<*>}}\Varid{y}{}\<[E]%
\ColumnHook
\end{hscode}\resethooks

The difference between \ensuremath{\texttt{(}\mathbin{\texttt{<*>}}\texttt{)}} and \ensuremath{\mathtt{fmap}_\mathtt{2}} is that \ensuremath{\texttt{(}\mathbin{\texttt{<*>}}\texttt{)}} expects the first two
arguments of \ensuremath{\mathtt{fmap}_\mathtt{2}}, of types \ensuremath{\Varid{a}\to \Varid{b}\to \Varid{c}} and \ensuremath{\Varid{f}\;\Varid{a}} respectively, to be
combined in a single argument of type \ensuremath{\Varid{f}\;\texttt{(}\Varid{b}\to \Varid{c}\texttt{)}}.

This can always be done with a single use of \ensuremath{\Varid{fmap}}, so, if we assume that \ensuremath{\Varid{f}}
is a functor, \ensuremath{\texttt{(}\mathbin{\texttt{<*>}}\texttt{)}} and \ensuremath{\mathtt{fmap}_\mathtt{2}} are effectively equivalent.

Nevertheless, this roundabout way of arriving to the definition of \ensuremath{\Conid{Applicative}}
shows that an applicative functor is just a functor that knows how to lift
functions of arbitrary arities. An overloaded notation to express the
application of $\ensuremath{\Varid{fmap}}_i$ for all $i$ is defined in \cite{applicative}, where it
is referred to as \emph{idiom brackets}.

Given a pure function of arbitrary arity and
effectful arguments:
\begin{align*}
& \ensuremath{\Varid{h}}   \ensuremath{\mathbin{:}\Varid{b}}_1 \rightarrow \ensuremath{\Varid{b}}_2 \rightarrow \cdots \rightarrow \ensuremath{\Varid{b}}_n \rightarrow \ensuremath{\Varid{a}} \\
& \ensuremath{\Varid{x}}_1 \ensuremath{\mathbin{:}\Varid{f}\;\Varid{b}}_1 \\
& \ensuremath{\Varid{x}}_2 \ensuremath{\mathbin{:}\Varid{f}\;\Varid{b}}_2 \\
& \cdots           \\
& \ensuremath{\Varid{x}}_n \ensuremath{\mathbin{:}\Varid{f}\;\Varid{b}}_n
\end{align*}
the idiom bracket notation is defined as:
\begin{equation*}
\llbracket \: \ensuremath{\Varid{h}} \: \ensuremath{\Varid{x}}_1 \: \ensuremath{\Varid{x}}_2 \cdots \ensuremath{\Varid{x}}_n \: \rrbracket =
  \ensuremath{\Varid{pure}\;\Varid{h}} \ensuremath{\mathbin{\texttt{<*>}}} \ensuremath{\Varid{x}}_1 \ensuremath{\mathbin{\texttt{<*>}}} \ensuremath{\Varid{x}}_2 \ensuremath{\mathbin{\texttt{<*>}}} \cdots \ensuremath{\mathbin{\texttt{<*>}}} \ensuremath{\Varid{x}}_n
\end{equation*}
We can build such an expression formally by using a \ensuremath{\Conid{PureL}} constructor corresponding to
\ensuremath{\Varid{pure}} and a left-associative infix \ensuremath{\texttt{(}\mathbin{\texttt{\color{red}{:*:}}}\texttt{)}} constructor corresponding to \ensuremath{\texttt{(}\mathbin{\texttt{<*>}}\texttt{)}}:
\begin{equation*}
\ensuremath{\Conid{PureL}\;\Varid{h}} \ensuremath{\mathbin{\texttt{\color{red}{:*:}}}} \ensuremath{\Varid{x}}_1 \ensuremath{\mathbin{\texttt{\color{red}{:*:}}}} \ensuremath{\Varid{x}}_2 \ensuremath{\mathbin{\texttt{\color{red}{:*:}}}} \cdots \ensuremath{\mathbin{\texttt{\color{red}{:*:}}}} \ensuremath{\Varid{x}}_n
\end{equation*}
The corresponding inductive definition is:

\begin{hscode}\SaveRestoreHook
\column{B}{@{}>{\hspre}l<{\hspost}@{}}%
\column{3}{@{}>{\hspre}l<{\hspost}@{}}%
\column{35}{@{}>{\hspre}c<{\hspost}@{}}%
\column{35E}{@{}l@{}}%
\column{40}{@{}>{\hspre}l<{\hspost}@{}}%
\column{E}{@{}>{\hspre}l<{\hspost}@{}}%
\>[B]{}\texttt{\color{blue}\textbf{data}}\;\Conid{FreeAL}\;\Varid{f}\;\Varid{a}{}\<[E]%
\\
\>[B]{}\hsindent{3}{}\<[3]%
\>[3]{}\mathrel{=}\Conid{PureL}\;\Varid{a}{}\<[E]%
\\
\>[B]{}\hsindent{3}{}\<[3]%
\>[3]{}\mid \forall \Varid{b}\hsforall \hsdot{\circ }{\texttt{.}}\Conid{FreeAL}\;\Varid{f}\;\texttt{(}\Varid{b}\to \Varid{a}\texttt{)}{}\<[35]%
\>[35]{}\mathbin{\texttt{\color{red}{:*:}}}{}\<[35E]%
\>[40]{}\Varid{f}\;\Varid{b}{}\<[E]%
\\
\>[B]{}\texttt{\color{blue}\textbf{infixl}}\;\mathrm{4}\mathbin{\texttt{\color{red}{:*:}}}{}\<[E]%
\ColumnHook
\end{hscode}\resethooks

The \ensuremath{\Conid{MultiFunctor}} typeclass, the idiom brackets and the \ensuremath{\Conid{FreeAL}}
definition correspond to the left parenthesised
canonical form\footnote{Sometimes called simplified form because
it is not necessarily unique.} of expressions built with \ensuremath{\Varid{pure}} and
\ensuremath{\texttt{(}\mathbin{\texttt{<*>}}\texttt{)}}. Just as lists built with concatenation have two canonical forms
(cons-list and snoc-list) we can also define a right-parenthesised
canonical form for applicative functors --- a pure value over
which a sequence of effectful functions are applied:
\begin{align*}
& \ensuremath{\Varid{x}}   \ensuremath{\mathbin{:}\Varid{b}}_1                          \\
& \ensuremath{\Varid{h}}_1 \ensuremath{\mathbin{:}\Varid{f}\;\texttt{(}\Varid{b}}_1 \rightarrow \ensuremath{\Varid{b}}_2 \ensuremath{\texttt{)}} \\
& \ensuremath{\Varid{h}}_2 \ensuremath{\mathbin{:}\Varid{f}\;\texttt{(}\Varid{b}}_2 \rightarrow \ensuremath{\Varid{b}}_3 \ensuremath{\texttt{)}} \\
& \cdots                                  \\
& \ensuremath{\Varid{h}}_n \ensuremath{\mathbin{:}\Varid{f}\;\texttt{(}\Varid{b}}_n \rightarrow \ensuremath{\Varid{a}\texttt{)}}      \\
& \ensuremath{\Varid{h}}_n \ensuremath{\mathbin{\texttt{<*>}}} (\cdots \ensuremath{\mathbin{\texttt{<*>}}} (\ensuremath{\Varid{h}}_2 \ensuremath{\mathbin{\texttt{<*>}}} (\ensuremath{\Varid{h}}_1 \ensuremath{\mathbin{\texttt{<*>}}} \ensuremath{\Varid{pure}\;\Varid{x}})) \cdots )
\end{align*}
Replacing \ensuremath{\Varid{pure}} with a constructor \ensuremath{\Conid{Pure}} and \ensuremath{\texttt{(}\mathbin{\texttt{<*>}}\texttt{)}} by a right-associative infix \ensuremath{\texttt{(}\mathbin{\texttt{\color{red}{:\$:}}}\texttt{)}}
constructor gives the following expression:
\begin{equation*}
\ensuremath{\Varid{h}}_n \ensuremath{\mathbin{\texttt{\color{red}{:\$:}}}} \cdots \ensuremath{\mathbin{\texttt{\color{red}{:\$:}}}} \ensuremath{\Varid{h}}_2 \ensuremath{\mathbin{\texttt{\color{red}{:\$:}}}} \ensuremath{\Varid{h}}_1 \ensuremath{\mathbin{\texttt{\color{red}{:\$:}}}} \ensuremath{\Conid{Pure}\;\Varid{x}} \\
\end{equation*}
The corresponding inductive type:

\begin{hscode}\SaveRestoreHook
\column{B}{@{}>{\hspre}l<{\hspost}@{}}%
\column{3}{@{}>{\hspre}l<{\hspost}@{}}%
\column{E}{@{}>{\hspre}l<{\hspost}@{}}%
\>[B]{}\texttt{\color{blue}\textbf{data}}\;\Conid{FreeA}\;\Varid{f}\;\Varid{a}{}\<[E]%
\\
\>[B]{}\hsindent{3}{}\<[3]%
\>[3]{}\mathrel{=}\Conid{Pure}\;\Varid{a}{}\<[E]%
\\
\>[B]{}\hsindent{3}{}\<[3]%
\>[3]{}\mid \forall \Varid{b}\hsforall \hsdot{\circ }{\texttt{.}}\Varid{f}\;\texttt{(}\Varid{b}\to \Varid{a}\texttt{)}\mathbin{\texttt{\color{red}{:\$:}}}\Conid{FreeA}\;\Varid{f}\;\Varid{b}{}\<[E]%
\\
\>[B]{}\texttt{\color{blue}\textbf{infixr}}\;\mathrm{4}\mathbin{\texttt{\color{red}{:\$:}}}{}\<[E]%
\ColumnHook
\end{hscode}\resethooks

\ensuremath{\Conid{FreeAL}} and \ensuremath{\Conid{FreeA}} are isomorphic (see section \ref{sec:isomorphism}); we pick
the right-parenthesised version as our official definition since it is simpler
to define the \ensuremath{\Conid{Functor}} and \ensuremath{\Conid{Applicative}} instances:

\begin{hscode}\SaveRestoreHook
\column{B}{@{}>{\hspre}l<{\hspost}@{}}%
\column{3}{@{}>{\hspre}l<{\hspost}@{}}%
\column{21}{@{}>{\hspre}l<{\hspost}@{}}%
\column{E}{@{}>{\hspre}l<{\hspost}@{}}%
\>[B]{}\texttt{\color{blue}\textbf{instance}}\;\Conid{Functor}\;\Varid{f}\Rightarrow \Conid{Functor}\;\texttt{(}\Conid{FreeA}\;\Varid{f}\texttt{)}\;\texttt{\color{blue}\textbf{where}}{}\<[E]%
\\
\>[B]{}\hsindent{3}{}\<[3]%
\>[3]{}\Varid{fmap}\;\Varid{g}\;\texttt{(}\Conid{Pure}\;\Varid{x}\texttt{)}{}\<[21]%
\>[21]{}\mathrel{=}\Conid{Pure}\;\texttt{(}\Varid{g}\;\Varid{x}\texttt{)}{}\<[E]%
\\
\>[B]{}\hsindent{3}{}\<[3]%
\>[3]{}\Varid{fmap}\;\Varid{g}\;\texttt{(}\Varid{h}\mathbin{\texttt{\color{red}{:\$:}}}\Varid{x}\texttt{)}{}\<[21]%
\>[21]{}\mathrel{=}\Varid{fmap}\;\texttt{(}\Varid{g}\hsdot{\circ }{\texttt{.}}\texttt{)}\;\Varid{h}\mathbin{\texttt{\color{red}{:\$:}}}\Varid{x}{}\<[E]%
\ColumnHook
\end{hscode}\resethooks

The functor laws can be verified by structural induction, simply applying the
definitions and using the functor laws for \ensuremath{\Varid{f}}.

\begin{hscode}\SaveRestoreHook
\column{B}{@{}>{\hspre}l<{\hspost}@{}}%
\column{3}{@{}>{\hspre}l<{\hspost}@{}}%
\column{20}{@{}>{\hspre}l<{\hspost}@{}}%
\column{23}{@{}>{\hspre}l<{\hspost}@{}}%
\column{E}{@{}>{\hspre}l<{\hspost}@{}}%
\>[B]{}\texttt{\color{blue}\textbf{instance}}\;\Conid{Functor}\;\Varid{f}\Rightarrow \Conid{Applicative}\;\texttt{(}\Conid{FreeA}\;\Varid{f}\texttt{)}\;\texttt{\color{blue}\textbf{where}}{}\<[E]%
\\
\>[B]{}\hsindent{3}{}\<[3]%
\>[3]{}\Varid{pure}\mathrel{=}\Conid{Pure}{}\<[E]%
\\
\>[B]{}\hsindent{3}{}\<[3]%
\>[3]{}\Conid{Pure}\;\Varid{g}\mathbin{\texttt{<*>}}\Varid{y}{}\<[20]%
\>[20]{}\mathrel{=}\Varid{fmap}\;\Varid{g}\;\Varid{y}{}\<[E]%
\\
\>[B]{}\hsindent{3}{}\<[3]%
\>[3]{}\texttt{(}\Varid{h}\mathbin{\texttt{\color{red}{:\$:}}}\Varid{x}\texttt{)}\mathbin{\texttt{<*>}}\Varid{y}{}\<[20]%
\>[20]{}\mathrel{=}{}\<[23]%
\>[23]{}\Varid{fmap}\;\Varid{uncurry}\;\Varid{h}\mathbin{\texttt{\color{red}{:\$:}}}{}\<[E]%
\\
\>[23]{}\texttt{(}\texttt{(}\mathbin{\texttt{,}}\texttt{)}\mathbin{\texttt{<\$>}}\Varid{x}\mathbin{\texttt{<*>}}\Varid{y}\texttt{)}{}\<[E]%
\ColumnHook
\end{hscode}\resethooks

In the last clause of the \ensuremath{\Conid{Applicative}} instance, \ensuremath{\Varid{h}} has type \ensuremath{\Varid{f}\;\texttt{(}\Varid{x}\to \Varid{y}\to \Varid{z}\texttt{)}}, and we need to return a
value of type \ensuremath{\Conid{FreeA}\;\Varid{f}\;\Varid{z}}. Since \ensuremath{\texttt{(}\mathbin{\texttt{\color{red}{:\$:}}}\texttt{)}} only allows us to express applications of
1-argument ``functions'', we uncurry \ensuremath{\Varid{h}} to get a value of type \ensuremath{\Varid{f}\;\texttt{(}\texttt{(}\Varid{x}\mathbin{\texttt{,}}\Varid{y}\texttt{)}\to \Varid{z}\texttt{)}}, then we use \ensuremath{\texttt{(}\mathbin{\texttt{<*>}}\texttt{)}} recursively (see section \ref{sec:totality} for a
justification of this recursive call) to pair \ensuremath{\Varid{x}} and \ensuremath{\Varid{y}} into a value of type
\ensuremath{\Conid{FreeA}\;\Varid{f}\;\texttt{(}\Varid{x}\mathbin{\texttt{,}}\Varid{y}\texttt{)}}, and finally use the \ensuremath{\texttt{(}\mathbin{\texttt{\color{red}{:\$:}}}\texttt{)}} constructor to build the result.
Note the analogy between the definition of \ensuremath{\texttt{(}\mathbin{\texttt{<*>}}\texttt{)}} and \ensuremath{\texttt{(}\plus \texttt{)}} for lists.

\section{Applications}\label{sec:applications}

\subsection{Example: option parsers (continued)}

By using our definition of free applicative, we can compose the command line option parser
exactly as shown in section \ref{example:option_intro} in the definition of \ensuremath{\Varid{userP}}. 
The smart constructor \ensuremath{\Varid{one}} which lifts an 
option (a functor representing a basic operation of our embedded language) to a term in our
language can now be implemented as follows:

\begin{hscode}\SaveRestoreHook
\column{B}{@{}>{\hspre}l<{\hspost}@{}}%
\column{E}{@{}>{\hspre}l<{\hspost}@{}}%
\>[B]{}\Varid{one}\mathbin{::}\Conid{Option}\;\Varid{a}\to \Conid{FreeA}\;\Conid{Option}\;\Varid{a}{}\<[E]%
\\
\>[B]{}\Varid{one}\;\Varid{opt}\mathrel{=}\Varid{fmap}\;\Varid{const}\;\Varid{opt}\mathbin{\texttt{\color{red}{:\$:}}}\Conid{Pure}\;\texttt{(}\texttt{)}{}\<[E]%
\ColumnHook
\end{hscode}\resethooks

A function which computes the global default value of a parser can also be defined:

\begin{hscode}\SaveRestoreHook
\column{B}{@{}>{\hspre}l<{\hspost}@{}}%
\column{3}{@{}>{\hspre}l<{\hspost}@{}}%
\column{E}{@{}>{\hspre}l<{\hspost}@{}}%
\>[B]{}\Varid{parserDefault}\mathbin{::}\Conid{FreeA}\;\Conid{Option}\;\Varid{a}\to \Conid{Maybe}\;\Varid{a}{}\<[E]%
\\
\>[B]{}\Varid{parserDefault}\;\texttt{(}\Conid{Pure}\;\Varid{x}\texttt{)}\mathrel{=}\Conid{Just}\;\Varid{x}{}\<[E]%
\\
\>[B]{}\Varid{parserDefault}\;\texttt{(}\Varid{g}\mathbin{\texttt{\color{red}{:\$:}}}\Varid{x}\texttt{)}\mathrel{=}{}\<[E]%
\\
\>[B]{}\hsindent{3}{}\<[3]%
\>[3]{}\Varid{optDefault}\;\Varid{g}\mathbin{\texttt{<*>}}\Varid{parserDefault}\;\Varid{x}{}\<[E]%
\ColumnHook
\end{hscode}\resethooks

In section \ref{sec:adjoint} we show that our definition is a free
construction which gives us general ways to structure
programs. Specifically, we are able to define a generic version of
\ensuremath{\Varid{one}} which works for any functor. By exploiting the adjunction
describing the free construction we are able to shorten the definition
of \ensuremath{\Varid{parserDefault}}, define a function listing all possible options and
a function parsing a list of command line arguments given in arbitrary
order (section \ref{example:option_raise}).

\subsection{Example: web service client (continued)}

In section \ref{example:web_service_intro} we showed an embedded DSL for web
service clients based on free monads does not support certain kinds of static
analysis.

However, we can now remedy this by using a free applicative, over the same
functor \text{\tt WebService}.  In fact, the \ensuremath{\Varid{count}} function is now definable for \ensuremath{\Conid{FreeA}\;\Conid{WebService}\;\Varid{a}}.  Moreover, this is not limited to this particular example: it is
possible to define \ensuremath{\Varid{count}} for the free applicative over \emph{any} functor.

\begin{hscode}\SaveRestoreHook
\column{B}{@{}>{\hspre}l<{\hspost}@{}}%
\column{E}{@{}>{\hspre}l<{\hspost}@{}}%
\>[B]{}\Varid{count}\mathbin{::}\Conid{FreeA}\;\Varid{f}\;\Varid{a}\to \Conid{Int}{}\<[E]%
\\
\>[B]{}\Varid{count}\;\texttt{(}\Conid{Pure}\;\anonymous \texttt{)}\mathrel{=}\mathrm{0}{}\<[E]%
\\
\>[B]{}\Varid{count}\;\texttt{(}\anonymous \mathbin{\texttt{\color{red}{:\$:}}}\Varid{u}\texttt{)}\mathrel{=}\mathrm{1}\mathbin{+}\Varid{count}\;\Varid{u}{}\<[E]%
\ColumnHook
\end{hscode}\resethooks

Static analysis of the embedded code now also allows decorating
requests with parallelization instructions statically as well as
rearranging requests to the same server.

Of course, the extra power comes at a cost.  Namely, the expressivity
of the corresponding embedded language is severely reduced.

Using \text{\tt FreeA~WebService}, all the URLs of the servers to which
requests are sent must be known in advance, as well as the parameters
and content of every request.

In particular, what one posts to a server cannot depend on what has
been previously read from another server, so operations like \text{\tt copy}
cannot be implemented.

\subsection{Summary of examples}

Applicative functors are useful for describing certain kinds of effectful computations. The
free applicative construct over a given functor specifying the ``basic operations'' of an
embedded language gives rise to terms of the embedded DSL built by applicative operators. 
These terms are only capable of representing a certain kind of effectful computation which
can be described best with the help of the left-parenthesised canonical form: a pure
function applied to effectful arguments. The calculation of the arguments may involve effects
but in the end the arguments are composed by a pure function, which means that the effects
performed are fixed when specifying the applicative expression.

In the case of the option parser example \ensuremath{\Varid{userP}}, the pure function is given by the \ensuremath{\Conid{User}}
constructor and the ``basic operation'' \ensuremath{\Conid{Option}} is defining an option. The effects performed
depend on how an evaluator is defined over an expression of type \ensuremath{\Conid{FreeA}\;\Conid{Option}\;\Varid{a}} and the
order of effects can depend on the implementation of the evaluator.

For example, if one defines an embedded language for querying a database, and constructs
applicative expressions using \ensuremath{\Conid{FreeA}}, one might analyze the applicative expression and
collect information on the individual database queries by defining functions similar to
the \ensuremath{\Varid{count}} function in the web service example. Then, different, possibly expensive duplicate
queries can be merged and performed at once instead of executing the effectful computations
one by one. By restricting the expressivity of our language we gain freedom in defining how
the evaluator works.

One might define parts of an expression in an embedded DSL using the usual free monad
construction, other parts using \ensuremath{\Conid{FreeA}} and compose them by lifting the free applicative
expression to the free monad using the following function:

\begin{hscode}\SaveRestoreHook
\column{B}{@{}>{\hspre}l<{\hspost}@{}}%
\column{3}{@{}>{\hspre}l<{\hspost}@{}}%
\column{19}{@{}>{\hspre}l<{\hspost}@{}}%
\column{E}{@{}>{\hspre}l<{\hspost}@{}}%
\>[B]{}\Varid{liftA2M}\mathbin{::}\Conid{Functor}\;\Varid{f}\Rightarrow \Conid{FreeA}\;\Varid{f}\;\Varid{a}\to \Conid{Free}\;\Varid{f}\;\Varid{a}{}\<[E]%
\\
\>[B]{}\Varid{liftA2M}\;\texttt{(}\Conid{Pure}\;\Varid{x}\texttt{)}{}\<[19]%
\>[19]{}\mathrel{=}\Conid{Return}\;\Varid{x}{}\<[E]%
\\
\>[B]{}\Varid{liftA2M}\;\texttt{(}\Varid{h}\mathbin{\texttt{\color{red}{:\$:}}}\Varid{x}\texttt{)}\mathrel{=}\Conid{Free}{}\<[E]%
\\
\>[B]{}\hsindent{3}{}\<[3]%
\>[3]{}\texttt{(}\Varid{fmap}\;\texttt{(}\lambda \Varid{f}\to \Varid{fmap}\;\Varid{f}\;\texttt{(}\Varid{liftA2M}\;\Varid{x}\texttt{)}\texttt{)}\;\Varid{h}\texttt{)}{}\<[E]%
\ColumnHook
\end{hscode}\resethooks

In the parts of the expression defined using the free monad construction, the order of effects
is fixed and the effects performed can depend on the result of previous effectful computations,
while the free applicative parts have a fixed structure with effects not depending on each other.
The monadic parts of the computation can depend on the result of static analysis carried out
over the applicative part:

\begin{hscode}\SaveRestoreHook
\column{B}{@{}>{\hspre}l<{\hspost}@{}}%
\column{3}{@{}>{\hspre}l<{\hspost}@{}}%
\column{21}{@{}>{\hspre}l<{\hspost}@{}}%
\column{E}{@{}>{\hspre}l<{\hspost}@{}}%
\>[B]{}\Varid{test}\mathbin{::}\Conid{FreeA}\;\Conid{FileSystem}\;\Conid{Int}\to \Conid{Free}\;\Conid{FileSystem}\;\texttt{(}\texttt{)}{}\<[E]%
\\
\>[B]{}\Varid{test}\;\Varid{op}\mathrel{=}\texttt{\color{blue}\textbf{do}}{}\<[E]%
\\
\>[B]{}\hsindent{3}{}\<[3]%
\>[3]{}\mathbin{...}{}\<[E]%
\\
\>[B]{}\hsindent{3}{}\<[3]%
\>[3]{}\texttt{\color{blue}\textbf{let}}\;\Varid{n}\mathrel{=}\Varid{count}\;\Varid{op}{}\<[21]%
\>[21]{}\mbox{\onelinecomment  result of static analysis}{}\<[E]%
\\
\>[B]{}\hsindent{3}{}\<[3]%
\>[3]{}\Varid{n'}\leftarrow \Varid{liftA2M}\;\Varid{op}{}\<[21]%
\>[21]{}\mbox{\onelinecomment  result of applicative computation}{}\<[E]%
\\
\>[B]{}\hsindent{3}{}\<[3]%
\>[3]{}\Varid{max}\leftarrow \Varid{read}\;\text{\tt \char34 max\char34}{}\<[E]%
\\
\>[B]{}\hsindent{3}{}\<[3]%
\>[3]{}\Varid{when}\;\texttt{(}\Varid{max}\geq \Varid{n}\mathbin{+}\Varid{n'}\texttt{)}\mathbin{\texttt{\$}}\Varid{write}\;\text{\tt \char34 /tmp/test\char34}\;\text{\tt \char34 blah\char34}{}\<[E]%
\\
\>[B]{}\hsindent{3}{}\<[3]%
\>[3]{}\mathbin{...}{}\<[E]%
\ColumnHook
\end{hscode}\resethooks

The possibility of using the results of static analysis instead of the need of specifying them
by hand (in our example, this would account to counting certain function calls in an expression
by looking at the code) can make the program less redundant.

\section{Parametricity}
\label{sec-parametricity}

In order to prove anything about our free applicative construction, we need to
make an important observation about its definition.

The \ensuremath{\texttt{(}\mathbin{\texttt{\color{red}{:\$:}}}\texttt{)}} constructor is defined using an existential type \ensuremath{\Varid{b}}, and it is clear
intuitively that there is no way, given a value of the form \ensuremath{\Varid{g}\mathbin{\texttt{\color{red}{:\$:}}}\Varid{x}}, to make
use of the type \ensuremath{\Varid{b}} hidden in it.

More specifically, any function on \ensuremath{\Conid{FreeA}\;\Varid{f}\;\Varid{a}} must be defined
\emph{polymorphically} over all possible types \ensuremath{\Varid{b}} which could be used for the
existentially quantified variable in the definition of \ensuremath{\texttt{(}\mathbin{\texttt{\color{red}{:\$:}}}\texttt{)}}.

To make this intuition precise, we assume that some form of
\emph{relational parametricity} \cite{abstraction}
\cite{free_theorems} holds in our total subset of Haskell. In particular, in the case
of the \ensuremath{\texttt{(}\mathbin{\texttt{\color{red}{:\$:}}}\texttt{)}} constructor, we require that:

\begin{hscode}\SaveRestoreHook
\column{B}{@{}>{\hspre}l<{\hspost}@{}}%
\column{E}{@{}>{\hspre}l<{\hspost}@{}}%
\>[B]{}\texttt{(}\mathbin{\texttt{\color{red}{:\$:}}}\texttt{)}\mathbin{::}\forall \Varid{b}\hsforall \hsdot{\circ }{\texttt{.}}\Varid{f}\;\texttt{(}\Varid{b}\to \Varid{a}\texttt{)}\to \texttt{(}\Conid{FreeA}\;\Varid{f}\;\Varid{b}\to \Conid{FreeA}\;\Varid{f}\;\Varid{a}\texttt{)}{}\<[E]%
\ColumnHook
\end{hscode}\resethooks

is a natural transformation of contravariant functors. The two contravariant
functors here could be defined, in Haskell, using a \ensuremath{\texttt{\color{blue}\textbf{newtype}}}:

\begin{hscode}\SaveRestoreHook
\column{B}{@{}>{\hspre}l<{\hspost}@{}}%
\column{3}{@{}>{\hspre}l<{\hspost}@{}}%
\column{E}{@{}>{\hspre}l<{\hspost}@{}}%
\>[B]{}\texttt{\color{blue}\textbf{newtype}}\;\Conid{F1}\;\Varid{f}\;\Varid{a}\;\Varid{x}\mathrel{=}\Conid{F1}\;\texttt{(}\Varid{f}\;\texttt{(}\Varid{x}\to \Varid{a}\texttt{)}\texttt{)}{}\<[E]%
\\
\>[B]{}\texttt{\color{blue}\textbf{newtype}}\;\Conid{F2}\;\Varid{f}\;\Varid{a}\;\Varid{x}\mathrel{=}\Conid{F2}\;\texttt{(}\Conid{FreeA}\;\Varid{f}\;\Varid{x}\to \Conid{FreeA}\;\Varid{f}\;\Varid{a}\texttt{)}{}\<[E]%
\\[\blanklineskip]%
\>[B]{}\texttt{\color{blue}\textbf{instance}}\;\Conid{Functor}\;\Varid{f}\Rightarrow \Conid{Contravariant}\;\texttt{(}\Conid{F1}\;\Varid{f}\;\Varid{a}\texttt{)}\;\texttt{\color{blue}\textbf{where}}{}\<[E]%
\\
\>[B]{}\hsindent{3}{}\<[3]%
\>[3]{}\Varid{contramap}\;\Varid{h}\;\texttt{(}\Conid{F1}\;\Varid{g}\texttt{)}\mathrel{=}\Conid{F1}\mathbin{\texttt{\$}}\Varid{fmap}\;\texttt{(}\hsdot{\circ }{\texttt{.}}\Varid{h}\texttt{)}\;\Varid{g}{}\<[E]%
\\[\blanklineskip]%
\>[B]{}\texttt{\color{blue}\textbf{instance}}\;\Conid{Functor}\;\Varid{f}\Rightarrow \Conid{Contravariant}\;\texttt{(}\Conid{F2}\;\Varid{f}\;\Varid{a}\texttt{)}\;\texttt{\color{blue}\textbf{where}}{}\<[E]%
\\
\>[B]{}\hsindent{3}{}\<[3]%
\>[3]{}\Varid{contramap}\;\Varid{h}\;\texttt{(}\Conid{F2}\;\Varid{g}\texttt{)}\mathrel{=}\Conid{F2}\mathbin{\texttt{\$}}\Varid{g}\hsdot{\circ }{\texttt{.}}\Varid{fmap}\;\Varid{h}{}\<[E]%
\ColumnHook
\end{hscode}\resethooks

The action of \ensuremath{\Conid{F1}} and \ensuremath{\Conid{F2}} on morphisms is defined in the obvious way. Note
that here we make use of the fact that \ensuremath{\Conid{FreeA}\;\Varid{f}} is a functor.

Naturality of \ensuremath{\texttt{(}\mathbin{\texttt{\color{red}{:\$:}}}\texttt{)}} means that, given types \ensuremath{\Varid{x}} and \ensuremath{\Varid{y}}, and a function \ensuremath{\Varid{h}\mathbin{:}\Varid{x}\to \Varid{y}}, the following holds:
\begin{align}\label{naturality}
& \ensuremath{\forall \Varid{g}\hsforall \mathbin{::}\Varid{f}\;\texttt{(}\Varid{y}\to \Varid{a}\texttt{)}}, \ensuremath{\Varid{u}\mathbin{::}\Conid{FreeA}\;\Varid{f}\;\Varid{x}\hsdot{\circ }{\texttt{.}}} \nonumber\\
& \ensuremath{\Varid{fmap}\;\texttt{(}\hsdot{\circ }{\texttt{.}}\Varid{h}\texttt{)}\;\Varid{g}\mathbin{\texttt{\color{red}{:\$:}}}\Varid{u}} \equiv \ensuremath{\Varid{g}\mathbin{\texttt{\color{red}{:\$:}}}\Varid{fmap}\;\Varid{h}\;\Varid{u}}
\end{align}
where we have unfolded the definitions of \ensuremath{\Varid{contramap}} for \ensuremath{\Conid{F1}} and \ensuremath{\Conid{F2}}, and
removed the newtypes.

Note that the results in \cite{abstraction} do not actually imply
naturality of \ensuremath{\texttt{(}\mathbin{\texttt{\color{red}{:\$:}}}\texttt{)}} at this generality, since \ensuremath{\Varid{f}} is a type variable
with an arbitrary \ensuremath{\Conid{Functor}} instance, not a concrete positive type
expression together with its canonical instance. However, in the
interpretation given in sections \ref{sec:totality} and
\ref{sec:semantics}, \ensuremath{\Conid{FreeA}} will be defined in such a way that
equation \ref{naturality} holds automatically.

\section{Isomorphism of the two definitions} \label{sec:isomorphism}

In this section we show that the two definitions of free applicatives given in
section \ref{sec:def} are isomorphic.

First of all, if \ensuremath{\Varid{f}} is a functor, \ensuremath{\Conid{FreeAL}\;\Varid{f}} is also a functor:

\begin{hscode}\SaveRestoreHook
\column{B}{@{}>{\hspre}l<{\hspost}@{}}%
\column{3}{@{}>{\hspre}l<{\hspost}@{}}%
\column{21}{@{}>{\hspre}l<{\hspost}@{}}%
\column{E}{@{}>{\hspre}l<{\hspost}@{}}%
\>[B]{}\texttt{\color{blue}\textbf{instance}}\;\Conid{Functor}\;\Varid{f}\Rightarrow \Conid{Functor}\;\texttt{(}\Conid{FreeAL}\;\Varid{f}\texttt{)}\;\texttt{\color{blue}\textbf{where}}{}\<[E]%
\\
\>[B]{}\hsindent{3}{}\<[3]%
\>[3]{}\Varid{fmap}\;\Varid{g}\;\texttt{(}\Conid{PureL}\;\Varid{x}\texttt{)}{}\<[21]%
\>[21]{}\mathrel{=}\Conid{PureL}\;\texttt{(}\Varid{g}\;\Varid{x}\texttt{)}{}\<[E]%
\\
\>[B]{}\hsindent{3}{}\<[3]%
\>[3]{}\Varid{fmap}\;\Varid{g}\;\texttt{(}\Varid{h}\mathbin{\texttt{\color{red}{:*:}}}\Varid{x}\texttt{)}{}\<[21]%
\>[21]{}\mathrel{=}\texttt{(}\Varid{fmap}\;\texttt{(}\Varid{g}\hsdot{\circ }{\texttt{.}}\texttt{)}\;\Varid{h}\texttt{)}\mathbin{\texttt{\color{red}{:*:}}}\Varid{x}{}\<[E]%
\ColumnHook
\end{hscode}\resethooks

Again, the functor laws can be verified by a simple structural induction.

For the \ensuremath{\texttt{(}\mathbin{\texttt{\color{red}{:*:}}}\texttt{)}} constructor, a free theorem can be derived in a completely analogous way
to deriving equation \ref{naturality}. This equation states that \ensuremath{\texttt{(}\mathbin{\texttt{\color{red}{:*:}}}\texttt{)}} is a natural
transformation:
\begin{align}\label{naturalityL}
& \ensuremath{\forall \Varid{h}\hsforall \mathbin{::}\Varid{x}\to \Varid{y}}, \ensuremath{\Varid{g}\mathbin{::}\Conid{FreeAL}\;\Varid{f}\;\texttt{(}\Varid{y}\to \Varid{a}\texttt{)}}, \ensuremath{\Varid{u}\mathbin{::}\Varid{f}\;\Varid{x}\hsdot{\circ }{\texttt{.}}} \nonumber\\
& \ensuremath{\Varid{fmap}\;\texttt{(}\hsdot{\circ }{\texttt{.}}\Varid{h}\texttt{)}\;\Varid{g}\mathbin{\texttt{\color{red}{:*:}}}\Varid{u}} \equiv \ensuremath{\Varid{g}\mathbin{\texttt{\color{red}{:*:}}}\Varid{fmap}\;\Varid{h}\;\Varid{u}}
\end{align}

We define functions to convert between the two definitions:

\begin{hscode}\SaveRestoreHook
\column{B}{@{}>{\hspre}l<{\hspost}@{}}%
\column{15}{@{}>{\hspre}l<{\hspost}@{}}%
\column{39}{@{}>{\hspre}l<{\hspost}@{}}%
\column{E}{@{}>{\hspre}l<{\hspost}@{}}%
\>[B]{}\Varid{r2l}\mathbin{::}\Conid{Functor}\;\Varid{f}\Rightarrow \Conid{FreeA}\;\Varid{f}\;\Varid{a}\to \Conid{FreeAL}\;\Varid{f}\;\Varid{a}{}\<[E]%
\\
\>[B]{}\Varid{r2l}\;\texttt{(}\Conid{Pure}\;\Varid{x}\texttt{)}{}\<[15]%
\>[15]{}\mathrel{=}\Conid{PureL}\;\Varid{x}{}\<[E]%
\\
\>[B]{}\Varid{r2l}\;\texttt{(}\Varid{h}\mathbin{\texttt{\color{red}{:\$:}}}\Varid{x}\texttt{)}\mathrel{=}\Varid{fmap}\;\texttt{(}\Varid{flip}\;\texttt{(}\mathbin{\texttt{\$}}\texttt{)}\texttt{)}\;\texttt{(}\Varid{r2l}\;{}\<[39]%
\>[39]{}\Varid{x}\texttt{)}\mathbin{\texttt{\color{red}{:*:}}}\Varid{h}{}\<[E]%
\\[\blanklineskip]%
\>[B]{}\Varid{l2r}\mathbin{::}\Conid{Functor}\;\Varid{f}\Rightarrow \Conid{FreeAL}\;\Varid{f}\;\Varid{a}\to \Conid{FreeA}\;\Varid{f}\;\Varid{a}{}\<[E]%
\\
\>[B]{}\Varid{l2r}\;\texttt{(}\Conid{PureL}\;\Varid{x}\texttt{)}\mathrel{=}\Conid{Pure}\;\Varid{x}{}\<[E]%
\\
\>[B]{}\Varid{l2r}\;\texttt{(}\Varid{h}\mathbin{\texttt{\color{red}{:*:}}}\Varid{x}\texttt{)}\mathrel{=}\Varid{fmap}\;\texttt{(}\Varid{flip}\;\texttt{(}\mathbin{\texttt{\$}}\texttt{)}\texttt{)}\;\Varid{x}\mathbin{\texttt{\color{red}{:\$:}}}\Varid{l2r}\;\Varid{h}{}\<[E]%
\ColumnHook
\end{hscode}\resethooks

We will also need the fact that \ensuremath{\Varid{l2r}} is a natural transformation:
\begin{align}\label{naturalityl2r}
& \ensuremath{\forall \Varid{h}\hsforall \mathbin{::}\Varid{x}\to \Varid{y}}, \ensuremath{\Varid{u}\mathbin{::}\Conid{FreeAL}\;\Varid{f}\;\Varid{x}\hsdot{\circ }{\texttt{.}}} \nonumber\\
& \ensuremath{\Varid{l2r}\;\texttt{(}\Varid{fmap}\;\Varid{h}\;\Varid{u}\texttt{)}} \equiv \ensuremath{\Varid{fmap}\;\Varid{h}\;\texttt{(}\Varid{l2r}\;\Varid{u}\texttt{)}}
\end{align}

\begin{prop}
\ensuremath{\Varid{r2l}} is an isomorphism, the inverse of which is \ensuremath{\Varid{l2r}}.
\end{prop}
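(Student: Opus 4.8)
The plan is to show that $\Varid{l2r} \circ \Varid{r2l} = \Varid{id}$ and $\Varid{r2l} \circ \Varid{l2r} = \Varid{id}$ by structural induction on the argument, in both cases. Since $\Conid{FreeA}\;\Varid{f}$ and $\Conid{FreeAL}\;\Varid{f}$ each have a nullary constructor ($\Conid{Pure}$, resp.\ $\Conid{PureL}$) and a binary ``cons'' constructor ($\texttt{(}\mathbin{\texttt{\color{red}{:\$:}}}\texttt{)}$, resp.\ $\texttt{(}\mathbin{\texttt{\color{red}{:*:}}}\texttt{)}$) carrying an existentially quantified type, the induction has one trivial base case and one inductive step in each direction.

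For $\Varid{l2r}\;\texttt{(}\Varid{r2l}\;u\texttt{)} \equiv u$: the base case $u = \Conid{Pure}\;x$ is immediate from the defining equations. For the step $u = h \mathbin{\texttt{\color{red}{:\$:}}} x$ with $h :: \Varid{f}\;\texttt{(}\Varid{b}\to\Varid{a}\texttt{)}$ and $x :: \Conid{FreeA}\;\Varid{f}\;\Varid{b}$, I would unfold $\Varid{r2l}\;\texttt{(}h \mathbin{\texttt{\color{red}{:\$:}}} x\texttt{)} = \Varid{fmap}\;\texttt{(}\Varid{flip}\;\texttt{(}\mathbin{\texttt{\$}}\texttt{)}\texttt{)}\;\texttt{(}\Varid{r2l}\;x\texttt{)} \mathbin{\texttt{\color{red}{:*:}}} h$, then apply $\Varid{l2r}$ to get $\Varid{fmap}\;\texttt{(}\Varid{flip}\;\texttt{(}\mathbin{\texttt{\$}}\texttt{)}\texttt{)}\;h \mathbin{\texttt{\color{red}{:\$:}}} \Varid{l2r}\;\texttt{(}\Varid{fmap}\;\texttt{(}\Varid{flip}\;\texttt{(}\mathbin{\texttt{\$}}\texttt{)}\texttt{)}\;\texttt{(}\Varid{r2l}\;x\texttt{)}\texttt{)}$. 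Using naturality of $\Varid{l2r}$ (equation \ref{naturalityl2r}) this becomes $\Varid{fmap}\;\texttt{(}\Varid{flip}\;\texttt{(}\mathbin{\texttt{\$}}\texttt{)}\texttt{)}\;h \mathbin{\texttt{\color{red}{:\$:}}} \Varid{fmap}\;\texttt{(}\Varid{flip}\;\texttt{(}\mathbin{\texttt{\$}}\texttt{)}\texttt{)}\;\texttt{(}\Varid{l2r}\;\texttt{(}\Varid{r2l}\;x\texttt{)}\texttt{)}$, and the induction hypothesis rewrites the right operand to $x$. What remains is to massage $\Varid{fmap}\;\texttt{(}\Varid{flip}\;\texttt{(}\mathbin{\texttt{\$}}\texttt{)}\texttt{)}\;h \mathbin{\texttt{\color{red}{:\$:}}} \Varid{fmap}\;\texttt{(}\Varid{flip}\;\texttt{(}\mathbin{\texttt{\$}}\texttt{)}\texttt{)}\;x$ back to $h \mathbin{\texttt{\color{red}{:\$:}}} x$; this is exactly where naturality of $\texttt{(}\mathbin{\texttt{\color{red}{:\$:}}}\texttt{)}$ (equation \ref{naturality}) comes in, since $\Varid{flip}\;\texttt{(}\mathbin{\texttt{\$}}\texttt{)} \circ \Varid{flip}\;\texttt{(}\mathbin{\texttt{\$}}\texttt{)} = \Varid{id}$ after noting that $\texttt{(}\circ\texttt{)}\;h$ in \ref{naturality} corresponds to precomposition and the two $\Varid{flip}\;\texttt{(}\mathbin{\texttt{\$}}\texttt{)}$ layers are mutually inverse up to the appropriate type identification. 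The other direction, $\Varid{r2l}\;\texttt{(}\Varid{l2r}\;v\texttt{)} \equiv v$, is symmetric: the $\Conid{PureL}$ case is immediate, and the $h \mathbin{\texttt{\color{red}{:*:}}} x$ case runs the same calculation with the roles of the two constructors swapped, needing naturality of $\texttt{(}\mathbin{\texttt{\color{red}{:*:}}}\texttt{)}$ (equation \ref{naturalityL}) and, in place of \ref{naturalityl2r}, naturality of $\Varid{r2l}$ — which can be derived as a free theorem in the same manner, or obtained from \ref{naturalityl2r} once one knows $\Varid{l2r}$ is invertible.

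The main obstacle is bookkeeping the existential types and the $\Varid{flip}\;\texttt{(}\mathbin{\texttt{\$}}\texttt{)}$ bureaucracy: in the inductive step the hidden type variable changes (from $\Varid{b}$ to $\Varid{b}\to\Varid{a}$ when going through $\Varid{r2l}$), so one must be careful that the naturality equations are instantiated at the right types and that the two applications of $\Varid{flip}\;\texttt{(}\mathbin{\texttt{\$}}\texttt{)}$ genuinely compose to the identity at the type at which they are used. Everything else is routine equational reasoning from the defining clauses together with the functor laws for $\Varid{f}$. I would present the two inductions in parallel, factoring out the ``double $\Varid{flip}\;\texttt{(}\mathbin{\texttt{\$}}\texttt{)}$ cancels via naturality'' lemma so it is proved once and cited twice.
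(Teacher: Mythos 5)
Your proposal matches the paper's proof essentially step for step: both directions proceed by structural induction, unfolding the definitions, commuting \ensuremath{\Varid{l2r}} past \ensuremath{\Varid{fmap}} via equation \ref{naturalityl2r}, applying the induction hypothesis, and then cancelling the two layers of \ensuremath{\Varid{flip}\;\texttt{(}\mathbin{\texttt{\$}}\texttt{)}} using the naturality equations \ref{naturality} and \ref{naturalityL} together with the functor laws. The one (harmless) divergence is that the second direction requires no naturality lemma for \ensuremath{\Varid{r2l}} at all --- after unfolding, the recursive call \ensuremath{\Varid{r2l}\;\texttt{(}\Varid{l2r}\;\Varid{h}\texttt{)}} already sits directly under the outer \ensuremath{\Varid{fmap}}, so the induction hypothesis applies immediately; note also that obtaining that lemma from invertibility of \ensuremath{\Varid{l2r}}, as you suggest as an alternative, would be circular at that point in the argument, so the free-theorem route you mention first is the one to keep.
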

\begin{proof}
  First we prove that \ensuremath{\forall \Varid{u}\hsforall \mathbin{::}\Conid{FreeA}\;\Varid{f}\;\Varid{a}\hsdot{\circ }{\texttt{.}}\Varid{l2r}\;\texttt{(}\Varid{r2l}\;\Varid{u}\texttt{)}} $\equiv$ \ensuremath{\Varid{u}}.
  We compute using equational reasoning with induction on \ensuremath{\Varid{u}}:
\begin{hscode}\SaveRestoreHook
\column{B}{@{}>{\hspre}l<{\hspost}@{}}%
\column{7}{@{}>{\hspre}l<{\hspost}@{}}%
\column{E}{@{}>{\hspre}l<{\hspost}@{}}%
\>[7]{}\Varid{l2r}\;\texttt{(}\Varid{r2l}\;\texttt{(}\Conid{Pure}\;\Varid{x}\texttt{)}\texttt{)}{}\<[E]%
\\
\>[B]{}\equiv \mbox{\commentbegin  definition of \ensuremath{\Varid{r2l}}  \commentend}{}\<[E]%
\\
\>[B]{}\hsindent{7}{}\<[7]%
\>[7]{}\Varid{l2r}\;\texttt{(}\Conid{PureL}\;\Varid{x}\texttt{)}{}\<[E]%
\\
\>[B]{}\equiv \mbox{\commentbegin  definition of \ensuremath{\Varid{l2r}}  \commentend}{}\<[E]%
\\
\>[B]{}\hsindent{7}{}\<[7]%
\>[7]{}\Conid{Pure}\;\Varid{x}{}\<[E]%
\ColumnHook
\end{hscode}\resethooks

\begin{hscode}\SaveRestoreHook
\column{B}{@{}>{\hspre}l<{\hspost}@{}}%
\column{7}{@{}>{\hspre}l<{\hspost}@{}}%
\column{34}{@{}>{\hspre}l<{\hspost}@{}}%
\column{E}{@{}>{\hspre}l<{\hspost}@{}}%
\>[7]{}\Varid{l2r}\;\texttt{(}\Varid{r2l}\;\texttt{(}\Varid{h}\mathbin{\texttt{\color{red}{:\$:}}}\Varid{x}\texttt{)}\texttt{)}{}\<[E]%
\\
\>[B]{}\equiv \mbox{\commentbegin  definition of \ensuremath{\Varid{r2l}}  \commentend}{}\<[E]%
\\
\>[B]{}\hsindent{7}{}\<[7]%
\>[7]{}\Varid{l2r}\;\texttt{(}\Varid{fmap}\;\texttt{(}\Varid{flip}\;\texttt{(}\mathbin{\texttt{\$}}\texttt{)}\texttt{)}\;\texttt{(}\Varid{r2l}\;{}\<[34]%
\>[34]{}\Varid{x}\texttt{)}\mathbin{\texttt{\color{red}{:*:}}}\Varid{h}\texttt{)}{}\<[E]%
\\
\>[B]{}\equiv \mbox{\commentbegin  definition of \ensuremath{\Varid{l2r}}  \commentend}{}\<[E]%
\\
\>[B]{}\hsindent{7}{}\<[7]%
\>[7]{}\Varid{fmap}\;\texttt{(}\Varid{flip}\;\texttt{(}\mathbin{\texttt{\$}}\texttt{)}\texttt{)}\;\Varid{h}\mathbin{\texttt{\color{red}{:\$:}}}{}\<[E]%
\\
\>[B]{}\hsindent{7}{}\<[7]%
\>[7]{}\Varid{l2r}\;\texttt{(}\Varid{fmap}\;\texttt{(}\Varid{flip}\;\texttt{(}\mathbin{\texttt{\$}}\texttt{)}\texttt{)}\;\texttt{(}\Varid{r2l}\;\Varid{x}\texttt{)}\texttt{)}{}\<[E]%
\\
\>[B]{}\equiv \mbox{\commentbegin  equation \ref{naturalityl2r}  \commentend}{}\<[E]%
\\
\>[B]{}\hsindent{7}{}\<[7]%
\>[7]{}\Varid{fmap}\;\texttt{(}\Varid{flip}\;\texttt{(}\mathbin{\texttt{\$}}\texttt{)}\texttt{)}\;\Varid{h}\mathbin{\texttt{\color{red}{:\$:}}}{}\<[E]%
\\
\>[B]{}\hsindent{7}{}\<[7]%
\>[7]{}\Varid{fmap}\;\texttt{(}\Varid{flip}\;\texttt{(}\mathbin{\texttt{\$}}\texttt{)}\texttt{)}\;\texttt{(}\Varid{l2r}\;\texttt{(}\Varid{r2l}\;\Varid{x}\texttt{)}\texttt{)}{}\<[E]%
\\
\>[B]{}\equiv \mbox{\commentbegin  inductive hypothesis  \commentend}{}\<[E]%
\\
\>[B]{}\hsindent{7}{}\<[7]%
\>[7]{}\Varid{fmap}\;\texttt{(}\Varid{flip}\;\texttt{(}\mathbin{\texttt{\$}}\texttt{)}\texttt{)}\;\Varid{h}\mathbin{\texttt{\color{red}{:\$:}}}\Varid{fmap}\;\texttt{(}\Varid{flip}\;\texttt{(}\mathbin{\texttt{\$}}\texttt{)}\texttt{)}\;\Varid{x}{}\<[E]%
\\
\>[B]{}\equiv \mbox{\commentbegin  equation \ref{naturality}  \commentend}{}\<[E]%
\\
\>[B]{}\hsindent{7}{}\<[7]%
\>[7]{}\Varid{fmap}\;\texttt{(}\hsdot{\circ }{\texttt{.}}\texttt{(}\Varid{flip}\;\texttt{(}\mathbin{\texttt{\$}}\texttt{)}\texttt{)}\texttt{)}\;\texttt{(}\Varid{fmap}\;\texttt{(}\Varid{flip}\;\texttt{(}\mathbin{\texttt{\$}}\texttt{)}\texttt{)}\;\Varid{h}\texttt{)}\mathbin{\texttt{\color{red}{:\$:}}}\Varid{x}{}\<[E]%
\\
\>[B]{}\equiv \mbox{\commentbegin  \ensuremath{\Varid{f}} is a functor  \commentend}{}\<[E]%
\\
\>[B]{}\hsindent{7}{}\<[7]%
\>[7]{}\Varid{fmap}\;\texttt{(}\texttt{(}\hsdot{\circ }{\texttt{.}}\texttt{(}\Varid{flip}\;\texttt{(}\mathbin{\texttt{\$}}\texttt{)}\texttt{)}\texttt{)}\hsdot{\circ }{\texttt{.}}\Varid{flip}\;\texttt{(}\mathbin{\texttt{\$}}\texttt{)}\texttt{)}\;\Varid{h}\mathbin{\texttt{\color{red}{:\$:}}}\Varid{x}{}\<[E]%
\\
\>[B]{}\equiv \mbox{\commentbegin  definition of \ensuremath{\Varid{flip}} and \ensuremath{\texttt{(}\mathbin{\texttt{\$}}\texttt{)}}  \commentend}{}\<[E]%
\\
\>[B]{}\hsindent{7}{}\<[7]%
\>[7]{}\Varid{fmap}\;\Varid{id}\;\Varid{h}\mathbin{\texttt{\color{red}{:\$:}}}\Varid{x}{}\<[E]%
\\
\>[B]{}\equiv \mbox{\commentbegin  \ensuremath{\Varid{f}} is a functor  \commentend}{}\<[E]%
\\
\>[B]{}\hsindent{7}{}\<[7]%
\>[7]{}\Varid{h}\mathbin{\texttt{\color{red}{:\$:}}}\Varid{x}{}\<[E]%
\ColumnHook
\end{hscode}\resethooks

Next, we prove that \ensuremath{\forall \Varid{u}\hsforall \mathbin{::}\Conid{FreeAL}\;\Varid{f}\;\Varid{a}\hsdot{\circ }{\texttt{.}}\Varid{r2l}\;\texttt{(}\Varid{l2r}\;\Varid{u}\texttt{)}} $\equiv$ \ensuremath{\Varid{u}}. Again,
we compute using equational reasoning with induction on \ensuremath{\Varid{u}}:

\begin{hscode}\SaveRestoreHook
\column{B}{@{}>{\hspre}l<{\hspost}@{}}%
\column{7}{@{}>{\hspre}l<{\hspost}@{}}%
\column{E}{@{}>{\hspre}l<{\hspost}@{}}%
\>[7]{}\Varid{r2l}\;\texttt{(}\Varid{l2r}\;\texttt{(}\Conid{PureL}\;\Varid{x}\texttt{)}\texttt{)}{}\<[E]%
\\
\>[B]{}\equiv \mbox{\commentbegin  definition of \ensuremath{\Varid{l2r}}  \commentend}{}\<[E]%
\\
\>[B]{}\hsindent{7}{}\<[7]%
\>[7]{}\Varid{r2l}\;\texttt{(}\Conid{Pure}\;\Varid{x}\texttt{)}{}\<[E]%
\\
\>[B]{}\equiv \mbox{\commentbegin  definition of \ensuremath{\Varid{r2l}}  \commentend}{}\<[E]%
\\
\>[B]{}\hsindent{7}{}\<[7]%
\>[7]{}\Conid{PureL}\;\Varid{x}{}\<[E]%
\ColumnHook
\end{hscode}\resethooks

\begin{hscode}\SaveRestoreHook
\column{B}{@{}>{\hspre}l<{\hspost}@{}}%
\column{7}{@{}>{\hspre}l<{\hspost}@{}}%
\column{E}{@{}>{\hspre}l<{\hspost}@{}}%
\>[7]{}\Varid{r2l}\;\texttt{(}\Varid{l2r}\;\texttt{(}\Varid{h}\mathbin{\texttt{\color{red}{:*:}}}\Varid{x}\texttt{)}\texttt{)}{}\<[E]%
\\
\>[B]{}\equiv \mbox{\commentbegin  definition of \ensuremath{\Varid{l2r}}  \commentend}{}\<[E]%
\\
\>[B]{}\hsindent{7}{}\<[7]%
\>[7]{}\Varid{r2l}\;\texttt{(}\Varid{fmap}\;\texttt{(}\Varid{flip}\;\texttt{(}\mathbin{\texttt{\$}}\texttt{)}\texttt{)}\;\Varid{x}\mathbin{\texttt{\color{red}{:\$:}}}\Varid{l2r}\;\Varid{h}\texttt{)}{}\<[E]%
\\
\>[B]{}\equiv \mbox{\commentbegin  definition of \ensuremath{\Varid{r2l}}  \commentend}{}\<[E]%
\\
\>[B]{}\hsindent{7}{}\<[7]%
\>[7]{}\Varid{fmap}\;\texttt{(}\Varid{flip}\;\texttt{(}\mathbin{\texttt{\$}}\texttt{)}\texttt{)}\;\texttt{(}\Varid{r2l}\;\texttt{(}\Varid{l2r}\;\Varid{h}\texttt{)}\texttt{)}\mathbin{\texttt{\color{red}{:*:}}}\Varid{fmap}\;\texttt{(}\Varid{flip}\;\texttt{(}\mathbin{\texttt{\$}}\texttt{)}\texttt{)}\;\Varid{x}{}\<[E]%
\\
\>[B]{}\equiv \mbox{\commentbegin  inductive hypothesis  \commentend}{}\<[E]%
\\
\>[B]{}\hsindent{7}{}\<[7]%
\>[7]{}\Varid{fmap}\;\texttt{(}\Varid{flip}\;\texttt{(}\mathbin{\texttt{\$}}\texttt{)}\texttt{)}\;\Varid{h}\mathbin{\texttt{\color{red}{:*:}}}\Varid{fmap}\;\texttt{(}\Varid{flip}\;\texttt{(}\mathbin{\texttt{\$}}\texttt{)}\texttt{)}\;\Varid{x}{}\<[E]%
\\
\>[B]{}\equiv \mbox{\commentbegin  equation \ref{naturalityL}  \commentend}{}\<[E]%
\\
\>[B]{}\hsindent{7}{}\<[7]%
\>[7]{}\Varid{fmap}\;\texttt{(}\hsdot{\circ }{\texttt{.}}\texttt{(}\Varid{flip}\;\texttt{(}\mathbin{\texttt{\$}}\texttt{)}\texttt{)}\texttt{)}\;\texttt{(}\Varid{fmap}\;\texttt{(}\Varid{flip}\;\texttt{(}\mathbin{\texttt{\$}}\texttt{)}\texttt{)}\;\Varid{h}\texttt{)}\mathbin{\texttt{\color{red}{:*:}}}\Varid{x}{}\<[E]%
\\
\>[B]{}\equiv \mbox{\commentbegin  \ensuremath{\Conid{FreeAL}\;\Varid{f}} is a functor  \commentend}{}\<[E]%
\\
\>[B]{}\hsindent{7}{}\<[7]%
\>[7]{}\Varid{fmap}\;\texttt{(}\texttt{(}\hsdot{\circ }{\texttt{.}}\texttt{(}\Varid{flip}\;\texttt{(}\mathbin{\texttt{\$}}\texttt{)}\texttt{)}\texttt{)}\hsdot{\circ }{\texttt{.}}\Varid{flip}\;\texttt{(}\mathbin{\texttt{\$}}\texttt{)}\texttt{)}\;\Varid{h}\mathbin{\texttt{\color{red}{:*:}}}\Varid{x}{}\<[E]%
\\
\>[B]{}\equiv \mbox{\commentbegin  definition of \ensuremath{\Varid{flip}} and \ensuremath{\texttt{(}\mathbin{\texttt{\$}}\texttt{)}}  \commentend}{}\<[E]%
\\
\>[B]{}\hsindent{7}{}\<[7]%
\>[7]{}\Varid{fmap}\;\Varid{id}\;\Varid{h}\mathbin{\texttt{\color{red}{:*:}}}\Varid{x}{}\<[E]%
\\
\>[B]{}\equiv \mbox{\commentbegin  \ensuremath{\Conid{FreeAL}\;\Varid{f}} is a functor  \commentend}{}\<[E]%
\\
\>[B]{}\hsindent{7}{}\<[7]%
\>[7]{}\Varid{h}\mathbin{\texttt{\color{red}{:*:}}}\Varid{x}{}\<[E]%
\ColumnHook
\end{hscode}\resethooks
\end{proof}

In the next sections, we will prove that \ensuremath{\Conid{FreeA}} is a free applicative functor.
Because of the isomorphism of the two definitions, these results will carry over
to \ensuremath{\Conid{FreeAL}}.

\section{Applicative laws}
\label{sec:applicative}

Following \cite{applicative}, the laws for an \ensuremath{\Conid{Applicative}} instance are:
\begin{align}
  \ensuremath{\Varid{pure}\;\Varid{id}\mathbin{\texttt{<*>}}\Varid{u}} & \equiv \ensuremath{\Varid{u}} \label{app:id} \\
  \ensuremath{\Varid{pure}\;\texttt{(}\hsdot{\circ }{\texttt{.}}\texttt{)}\mathbin{\texttt{<*>}}\Varid{u}\mathbin{\texttt{<*>}}\Varid{v}\mathbin{\texttt{<*>}}\Varid{x}} & \equiv \ensuremath{\Varid{u}\mathbin{\texttt{<*>}}\texttt{(}\Varid{v}\mathbin{\texttt{<*>}}\Varid{x}\texttt{)}} \label{app:comp} \\
  \ensuremath{\Varid{pure}\;\Varid{f}\mathbin{\texttt{<*>}}\Varid{pure}\;\Varid{x}} & \equiv \ensuremath{\Varid{pure}\;\texttt{(}\Varid{f}\;\Varid{x}\texttt{)}} \label{app:hom} \\
  \ensuremath{\Varid{u}\mathbin{\texttt{<*>}}\Varid{pure}\;\Varid{x}} & \equiv \ensuremath{\Varid{pure}\;\texttt{(}\mathbin{\texttt{\$}}\Varid{x}\texttt{)}\mathbin{\texttt{<*>}}\Varid{u}} \label{app:interchange}
\end{align}

We introduce a few abbreviations to help make the notation lighter:

\begin{hscode}\SaveRestoreHook
\column{B}{@{}>{\hspre}l<{\hspost}@{}}%
\column{E}{@{}>{\hspre}l<{\hspost}@{}}%
\>[B]{}\Varid{uc}\mathrel{=}\Varid{uncurry}{}\<[E]%
\\
\>[B]{}\Varid{pair}\;\Varid{x}\;\Varid{y}\mathrel{=}\texttt{(}\mathbin{\texttt{,}}\texttt{)}\mathbin{\texttt{<\$>}}\Varid{x}\mathbin{\texttt{<*>}}\Varid{y}{}\<[E]%
\ColumnHook
\end{hscode}\resethooks

\begin{lem} \label{lem:pure}
For all
\begin{align*}
  \ensuremath{\Varid{u}} & \ensuremath{\mathbin{::}\Varid{y}\to \Varid{z}} \\
  \ensuremath{\Varid{v}} & \ensuremath{\mathbin{::}\Conid{FreeA}\;\Varid{f}\;\texttt{(}\Varid{x}\to \Varid{y}\texttt{)}} \\
  \ensuremath{\Varid{x}} & \ensuremath{\mathbin{::}\Conid{FreeA}\;\Varid{f}\;\Varid{x}}
\end{align*}
the following equation holds:
$$
  \ensuremath{\Varid{fmap}\;\Varid{u}\;\texttt{(}\Varid{v}\mathbin{\texttt{<*>}}\Varid{x}\texttt{)}} \equiv \ensuremath{\Varid{fmap}\;\texttt{(}\Varid{u}\hsdot{\circ }{\texttt{.}}\texttt{)}\;\Varid{v}\mathbin{\texttt{<*>}}\Varid{x}}
$$
\end{lem}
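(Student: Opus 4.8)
The plan is to prove the equation by \emph{case analysis on the constructor of \ensuremath{\Varid{v}}}. Although \ensuremath{\texttt{(}\mathbin{\texttt{<*>}}\texttt{)}} is defined recursively, no inductive hypothesis is needed here: in the relevant clauses the recursive call of \ensuremath{\texttt{(}\mathbin{\texttt{<*>}}\texttt{)}} only ever produces the subexpression \ensuremath{\Varid{pair}\;\Varid{w}\;\Varid{x}} (in the notation below), and that subexpression turns out to be literally common to both sides of the equation. So a plain split on whether \ensuremath{\Varid{v}} is \ensuremath{\Conid{Pure}\;\Varid{g}} or of the form \ensuremath{\Varid{g}\mathbin{\texttt{:\$:}}\Varid{w}}, together with the functor laws for \ensuremath{\Varid{f}} and for \ensuremath{\Conid{FreeA}\;\Varid{f}}, suffices.

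In the case \ensuremath{\Varid{v}\mathrel{=}\Conid{Pure}\;\Varid{g}} I would reduce both sides to \ensuremath{\Varid{fmap}\;\texttt{(}\Varid{u}\hsdot{\circ }{\texttt{.}}\Varid{g}\texttt{)}\;\Varid{x}}: on the left, by unfolding \ensuremath{\Conid{Pure}\;\Varid{g}\mathbin{\texttt{<*>}}\Varid{x}\mathrel{=}\Varid{fmap}\;\Varid{g}\;\Varid{x}} and then applying the composition functor law for \ensuremath{\Conid{FreeA}\;\Varid{f}}; on the right, by observing that \ensuremath{\Varid{fmap}\;\texttt{(}\Varid{u}\hsdot{\circ }{\texttt{.}}\texttt{)}\;\texttt{(}\Conid{Pure}\;\Varid{g}\texttt{)}\mathrel{=}\Conid{Pure}\;\texttt{(}\Varid{u}\hsdot{\circ }{\texttt{.}}\Varid{g}\texttt{)}} and then firing the \ensuremath{\Conid{Pure}} clause of \ensuremath{\texttt{(}\mathbin{\texttt{<*>}}\texttt{)}} once more. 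In the case \ensuremath{\Varid{v}\mathrel{=}\Varid{g}\mathbin{\texttt{:\$:}}\Varid{w}}, with \ensuremath{\Varid{g}\mathbin{::}\Varid{f}\;\texttt{(}\Varid{b}\to \Varid{x}\to \Varid{y}\texttt{)}} and \ensuremath{\Varid{w}\mathbin{::}\Conid{FreeA}\;\Varid{f}\;\Varid{b}}, I would unfold \ensuremath{\Varid{v}\mathbin{\texttt{<*>}}\Varid{x}} via the last clause of the \ensuremath{\Conid{Applicative}} instance to \ensuremath{\Varid{fmap}\;\Varid{uc}\;\Varid{g}\mathbin{\texttt{:\$:}}\Varid{pair}\;\Varid{w}\;\Varid{x}}, then push the outer \ensuremath{\Varid{fmap}\;\Varid{u}} through the \ensuremath{\texttt{(}\mathbin{\texttt{:\$:}}\texttt{)}} head using the \ensuremath{\Conid{Functor}} instance for \ensuremath{\Conid{FreeA}} and fuse the two resulting \ensuremath{\Varid{f}}-level \ensuremath{\Varid{fmap}}s, reaching \ensuremath{\Varid{fmap}\;\texttt{(}\texttt{(}\Varid{u}\hsdot{\circ }{\texttt{.}}\texttt{)}\hsdot{\circ }{\texttt{.}}\Varid{uc}\texttt{)}\;\Varid{g}\mathbin{\texttt{:\$:}}\Varid{pair}\;\Varid{w}\;\Varid{x}}. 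Symmetrically, on the right \ensuremath{\Varid{fmap}\;\texttt{(}\Varid{u}\hsdot{\circ }{\texttt{.}}\texttt{)}\;\texttt{(}\Varid{g}\mathbin{\texttt{:\$:}}\Varid{w}\texttt{)}} rewrites to \ensuremath{\Varid{fmap}\;\texttt{(}\texttt{(}\Varid{u}\hsdot{\circ }{\texttt{.}}\texttt{)}\hsdot{\circ }{\texttt{.}}\texttt{)}\;\Varid{g}\mathbin{\texttt{:\$:}}\Varid{w}}, and applying \ensuremath{\texttt{(}\mathbin{\texttt{<*>}}\texttt{)}\;\Varid{x}} and fusing again yields \ensuremath{\Varid{fmap}\;\texttt{(}\Varid{uc}\hsdot{\circ }{\texttt{.}}\texttt{(}\texttt{(}\Varid{u}\hsdot{\circ }{\texttt{.}}\texttt{)}\hsdot{\circ }{\texttt{.}}\texttt{)}\texttt{)}\;\Varid{g}\mathbin{\texttt{:\$:}}\Varid{pair}\;\Varid{w}\;\Varid{x}}. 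The two \ensuremath{\Varid{pair}\;\Varid{w}\;\Varid{x}} arguments are identical, so all that is left is the point-free identity \ensuremath{\texttt{(}\Varid{u}\hsdot{\circ }{\texttt{.}}\texttt{)}\hsdot{\circ }{\texttt{.}}\Varid{uncurry}} $\equiv$ \ensuremath{\Varid{uncurry}\hsdot{\circ }{\texttt{.}}\texttt{(}\texttt{(}\Varid{u}\hsdot{\circ }{\texttt{.}}\texttt{)}\hsdot{\circ }{\texttt{.}}\texttt{)}}.

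The only step that is not purely mechanical is this last identity, and even it is routine bookkeeping: unfolding the definitions of \ensuremath{\Varid{uncurry}} and \ensuremath{\texttt{(}\hsdot{\circ }{\texttt{.}}\texttt{)}} — equivalently, evaluating both sides on an arbitrary \ensuremath{\Varid{g}\mathbin{::}\Varid{b}\to \Varid{x}\to \Varid{y}} and a pair \ensuremath{\texttt{(}\Varid{p}\mathbin{\texttt{,}}\Varid{q}\texttt{)}} — shows that both sides send \ensuremath{\Varid{g}} to \ensuremath{\lambda \texttt{(}\Varid{p}\mathbin{\texttt{,}}\Varid{q}\texttt{)}\to \Varid{u}\;\texttt{(}\Varid{g}\;\Varid{p}\;\Varid{q}\texttt{)}}. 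With that in hand both cases close, and since \ensuremath{\Conid{Pure}} and \ensuremath{\texttt{(}\mathbin{\texttt{:\$:}}\texttt{)}} are the only constructors of \ensuremath{\Conid{FreeA}\;\Varid{f}}, the lemma follows. I expect the only real risk of error to be clerical: keeping the nested sections \ensuremath{\texttt{(}\Varid{u}\hsdot{\circ }{\texttt{.}}\texttt{)}} and \ensuremath{\texttt{(}\hsdot{\circ }{\texttt{.}}\texttt{)}} straight while threading the existential type \ensuremath{\Varid{b}} through the \ensuremath{\texttt{(}\mathbin{\texttt{:\$:}}\texttt{)}} case, which is exactly where the \ensuremath{\Varid{uncurry}} bookkeeping lives.
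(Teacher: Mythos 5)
Your proposal is correct and follows essentially the same route as the paper: a case split on the constructor of \ensuremath{\Varid{v}} (no induction needed), unfolding \ensuremath{\texttt{(}\mathbin{\texttt{<*>}}\texttt{)}} and \ensuremath{\Varid{fmap}}, fusing the \ensuremath{\Varid{f}}-level \ensuremath{\Varid{fmap}}s, and closing the \ensuremath{\texttt{(}\mathbin{\texttt{\color{red}{:\$:}}}\texttt{)}} case with exactly the identity \ensuremath{\texttt{(}\Varid{u}\hsdot{\circ }{\texttt{.}}\texttt{)}\hsdot{\circ }{\texttt{.}}\Varid{uc}} $\equiv$ \ensuremath{\Varid{uc}\hsdot{\circ }{\texttt{.}}\texttt{(}\texttt{(}\Varid{u}\hsdot{\circ }{\texttt{.}}\texttt{)}\hsdot{\circ }{\texttt{.}}\texttt{)}} that the paper applies (there phrased as two ``\ensuremath{\Varid{f}} is a functor'' rewrites through \ensuremath{\lambda \Varid{g}\to \Varid{u}\hsdot{\circ }{\texttt{.}}\Varid{uc}\;\Varid{g}}). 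The only cosmetic difference is that you meet in the middle at \ensuremath{\Varid{pair}\;\Varid{w}\;\Varid{x}} where the paper writes a single left-to-right chain.
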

\begin{proof}
We compute:
\begin{hscode}\SaveRestoreHook
\column{B}{@{}>{\hspre}l<{\hspost}@{}}%
\column{5}{@{}>{\hspre}l<{\hspost}@{}}%
\column{E}{@{}>{\hspre}l<{\hspost}@{}}%
\>[5]{}\Varid{fmap}\;\Varid{u}\;\texttt{(}\Conid{Pure}\;\Varid{v}\mathbin{\texttt{<*>}}\Varid{x}\texttt{)}{}\<[E]%
\\
\>[B]{}\equiv \mbox{\commentbegin  definition of \ensuremath{\texttt{(}\mathbin{\texttt{<*>}}\texttt{)}}  \commentend}{}\<[E]%
\\
\>[B]{}\hsindent{5}{}\<[5]%
\>[5]{}\Varid{fmap}\;\Varid{u}\;\texttt{(}\Varid{fmap}\;\Varid{v}\;\Varid{x}\texttt{)}{}\<[E]%
\\
\>[B]{}\equiv \mbox{\commentbegin  \ensuremath{\Conid{FreeA}\;\Varid{f}} is a functor  \commentend}{}\<[E]%
\\
\>[B]{}\hsindent{5}{}\<[5]%
\>[5]{}\Varid{fmap}\;\texttt{(}\Varid{u}\hsdot{\circ }{\texttt{.}}\Varid{v}\texttt{)}\;\Varid{x}{}\<[E]%
\\
\>[B]{}\equiv \mbox{\commentbegin  definition of \ensuremath{\texttt{(}\mathbin{\texttt{<*>}}\texttt{)}}  \commentend}{}\<[E]%
\\
\>[B]{}\hsindent{5}{}\<[5]%
\>[5]{}\Conid{Pure}\;\texttt{(}\Varid{u}\hsdot{\circ }{\texttt{.}}\Varid{v}\texttt{)}\mathbin{\texttt{<*>}}\Varid{x}{}\<[E]%
\\
\>[B]{}\equiv \mbox{\commentbegin  definition of \ensuremath{\Varid{fmap}}  \commentend}{}\<[E]%
\\
\>[B]{}\hsindent{5}{}\<[5]%
\>[5]{}\Varid{fmap}\;\texttt{(}\Varid{u}\hsdot{\circ }{\texttt{.}}\texttt{)}\;\texttt{(}\Conid{Pure}\;\Varid{v}\texttt{)}\mathbin{\texttt{<*>}}\Varid{x}{}\<[E]%
\ColumnHook
\end{hscode}\resethooks

\begin{hscode}\SaveRestoreHook
\column{B}{@{}>{\hspre}l<{\hspost}@{}}%
\column{5}{@{}>{\hspre}l<{\hspost}@{}}%
\column{E}{@{}>{\hspre}l<{\hspost}@{}}%
\>[5]{}\Varid{fmap}\;\Varid{u}\;\texttt{(}\texttt{(}\Varid{g}\mathbin{\texttt{\color{red}{:\$:}}}\Varid{y}\texttt{)}\mathbin{\texttt{<*>}}\Varid{x}\texttt{)}{}\<[E]%
\\
\>[B]{}\equiv \mbox{\commentbegin  definition of \ensuremath{\texttt{(}\mathbin{\texttt{<*>}}\texttt{)}}  \commentend}{}\<[E]%
\\
\>[B]{}\hsindent{5}{}\<[5]%
\>[5]{}\Varid{fmap}\;\Varid{u}\;\texttt{(}\Varid{fmap}\;\Varid{uc}\;\Varid{g}\mathbin{\texttt{\color{red}{:\$:}}}\Varid{pair}\;\Varid{y}\;\Varid{x}\texttt{)}{}\<[E]%
\\
\>[B]{}\equiv \mbox{\commentbegin  definition of \ensuremath{\Varid{fmap}}  \commentend}{}\<[E]%
\\
\>[B]{}\hsindent{5}{}\<[5]%
\>[5]{}\Varid{fmap}\;\texttt{(}\Varid{u}\hsdot{\circ }{\texttt{.}}\texttt{)}\;\texttt{(}\Varid{fmap}\;\Varid{uc}\;\Varid{g}\texttt{)}\mathbin{\texttt{\color{red}{:\$:}}}\Varid{pair}\;\Varid{y}\;\Varid{x}{}\<[E]%
\\
\>[B]{}\equiv \mbox{\commentbegin  \ensuremath{\Varid{f}} is a functor  \commentend}{}\<[E]%
\\
\>[B]{}\hsindent{5}{}\<[5]%
\>[5]{}\Varid{fmap}\;\texttt{(}\lambda \Varid{g}\to \Varid{u}\hsdot{\circ }{\texttt{.}}\Varid{uc}\;\Varid{g}\texttt{)}\;\Varid{g}\mathbin{\texttt{\color{red}{:\$:}}}\Varid{pair}\;\Varid{y}\;\Varid{x}{}\<[E]%
\\
\>[B]{}\equiv \mbox{\commentbegin  \ensuremath{\Varid{f}} is a functor  \commentend}{}\<[E]%
\\
\>[B]{}\hsindent{5}{}\<[5]%
\>[5]{}\Varid{fmap}\;\Varid{uc}\;\texttt{(}\Varid{fmap}\;\texttt{(}\texttt{(}\Varid{u}\hsdot{\circ }{\texttt{.}}\texttt{)}\hsdot{\circ }{\texttt{.}}\texttt{)}\;\Varid{g}\texttt{)}\mathbin{\texttt{\color{red}{:\$:}}}\Varid{pair}\;\Varid{y}\;\Varid{x}{}\<[E]%
\\
\>[B]{}\equiv \mbox{\commentbegin  definition of \ensuremath{\texttt{(}\mathbin{\texttt{<*>}}\texttt{)}}  \commentend}{}\<[E]%
\\
\>[B]{}\hsindent{5}{}\<[5]%
\>[5]{}\texttt{(}\Varid{fmap}\;\texttt{(}\texttt{(}\Varid{u}\hsdot{\circ }{\texttt{.}}\texttt{)}\hsdot{\circ }{\texttt{.}}\texttt{)}\;\Varid{g}\mathbin{\texttt{\color{red}{:\$:}}}\Varid{y}\texttt{)}\mathbin{\texttt{<*>}}\Varid{x}{}\<[E]%
\\
\>[B]{}\equiv \mbox{\commentbegin  definition of \ensuremath{\Varid{fmap}}  \commentend}{}\<[E]%
\\
\>[B]{}\hsindent{5}{}\<[5]%
\>[5]{}\Varid{fmap}\;\texttt{(}\Varid{u}\hsdot{\circ }{\texttt{.}}\texttt{)}\;\texttt{(}\Varid{g}\mathbin{\texttt{\color{red}{:\$:}}}\Varid{y}\texttt{)}\mathbin{\texttt{<*>}}\Varid{x}{}\<[E]%
\ColumnHook
\end{hscode}\resethooks
\end{proof}

\begin{lem}\label{lem:app:comp}
Property \ref{app:comp} holds for \ensuremath{\Conid{FreeA}\;\Varid{f}}, i.e. for all
\begin{align*}
  \ensuremath{\Varid{u}} & \ensuremath{\mathbin{::}\Conid{FreeA}\;\Varid{f}\;\texttt{(}\Varid{y}\to \Varid{z}\texttt{)}} \\
  \ensuremath{\Varid{v}} & \ensuremath{\mathbin{::}\Conid{FreeA}\;\Varid{f}\;\texttt{(}\Varid{x}\to \Varid{y}\texttt{)}} \\
  \ensuremath{\Varid{x}} & \ensuremath{\mathbin{::}\Conid{FreeA}\;\Varid{f}\;\Varid{x}},
\end{align*}
\begin{equation*}
\ensuremath{\Varid{pure}\;\texttt{(}\hsdot{\circ }{\texttt{.}}\texttt{)}\mathbin{\texttt{<*>}}\Varid{u}\mathbin{\texttt{<*>}}\Varid{v}\mathbin{\texttt{<*>}}\Varid{x}} \equiv \ensuremath{\Varid{u}\mathbin{\texttt{<*>}}\texttt{(}\Varid{v}\mathbin{\texttt{<*>}}\Varid{x}\texttt{)}}
\end{equation*}
\end{lem}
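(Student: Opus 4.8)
The plan is to prove Lemma~\ref{lem:app:comp} by structural induction on \ensuremath{\Varid{u}}. Since the defining equations of \ensuremath{\texttt{(<*>)}} pattern-match on their \emph{left} argument, only two cases arise, and the inductive hypothesis on \ensuremath{\Varid{u}} will in fact not be consumed: the real content is isolated into an auxiliary reassociation lemma about \ensuremath{\Varid{pair}}. In the base case \ensuremath{\Varid{u}\mathrel{=}\Conid{Pure}\;\Varid{g}}, unfolding the definitions of \ensuremath{\texttt{(<*>)}} and \ensuremath{\Varid{fmap}} on \ensuremath{\Conid{Pure}} (together with the functor law for \ensuremath{\Varid{f}}) rewrites the left-hand side \ensuremath{\Varid{pure}\;\texttt{(}\circ\texttt{)}\mathbin{\texttt{<*>}}\Varid{u}\mathbin{\texttt{<*>}}\Varid{v}\mathbin{\texttt{<*>}}\Varid{x}} into \ensuremath{\Varid{fmap}\;\texttt{(}\Varid{g}\;\circ\texttt{)}\;\Varid{v}\mathbin{\texttt{<*>}}\Varid{x}} and the right-hand side \ensuremath{\Varid{u}\mathbin{\texttt{<*>}}\texttt{(}\Varid{v}\mathbin{\texttt{<*>}}\Varid{x}\texttt{)}} into \ensuremath{\Varid{fmap}\;\Varid{g}\;\texttt{(}\Varid{v}\mathbin{\texttt{<*>}}\Varid{x}\texttt{)}}; these agree by Lemma~\ref{lem:pure} instantiated at \ensuremath{\Varid{g}}, \ensuremath{\Varid{v}}, \ensuremath{\Varid{x}}.

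For the step case \ensuremath{\Varid{u}\mathrel{=}\Varid{h}\mathbin{\texttt{:\$:}}\Varid{w}}, I would unfold \ensuremath{\texttt{(<*>)}} and \ensuremath{\Varid{fmap}} repeatedly on both sides. The left-hand side becomes three nested \ensuremath{\Varid{fmap}}s applied to \ensuremath{\Varid{h}}, fed through \ensuremath{\texttt{(:\$:)}} to \ensuremath{\Varid{pair}\;\texttt{(}\Varid{pair}\;\Varid{w}\;\Varid{v}\texttt{)}\;\Varid{x}}, while the right-hand side becomes \ensuremath{\Varid{fmap}\;\Varid{uc}\;\Varid{h}\mathbin{\texttt{:\$:}}\Varid{pair}\;\Varid{w}\;\texttt{(}\Varid{v}\mathbin{\texttt{<*>}}\Varid{x}\texttt{)}}. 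Collapsing the nested \ensuremath{\Varid{fmap}}s on the left with the functor laws for \ensuremath{\Varid{f}}, the accumulated pure function factors (again by the functor laws) so that the left-hand side equals \ensuremath{\Varid{fmap}\;\texttt{(}\circ\Varid{phi}\texttt{)}\;\texttt{(}\Varid{fmap}\;\Varid{uc}\;\Varid{h}\texttt{)}\mathbin{\texttt{:\$:}}\Varid{pair}\;\texttt{(}\Varid{pair}\;\Varid{w}\;\Varid{v}\texttt{)}\;\Varid{x}}, where \ensuremath{\Varid{phi}} is the evident reassociation \ensuremath{\texttt{(}\texttt{(}\Varid{a},\Varid{g}\texttt{)},\Varid{c}\texttt{)}\mapsto\texttt{(}\Varid{a},\Varid{g}\;\Varid{c}\texttt{)}}. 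Now the parametricity equation~\ref{naturality} lets me pull \ensuremath{\Varid{phi}} out of \ensuremath{\texttt{(:\$:)}} and onto its second argument, turning the left-hand side into \ensuremath{\Varid{fmap}\;\Varid{uc}\;\Varid{h}\mathbin{\texttt{:\$:}}\Varid{fmap}\;\Varid{phi}\;\texttt{(}\Varid{pair}\;\texttt{(}\Varid{pair}\;\Varid{w}\;\Varid{v}\texttt{)}\;\Varid{x}\texttt{)}}. Using that \ensuremath{\equiv} is a congruence, the lemma then reduces to the auxiliary claim \ensuremath{\Varid{fmap}\;\Varid{phi}\;\texttt{(}\Varid{pair}\;\texttt{(}\Varid{pair}\;\Varid{w}\;\Varid{v}\texttt{)}\;\Varid{x}\texttt{)}\equiv\Varid{pair}\;\Varid{w}\;\texttt{(}\Varid{v}\mathbin{\texttt{<*>}}\Varid{x}\texttt{)}}.

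That auxiliary claim is a reassociation law for \ensuremath{\Varid{pair}} valid for \emph{every} \ensuremath{\Varid{w}\mathbin{::}\Conid{FreeA}\;\Varid{f}\;\Varid{b}} (the existential \ensuremath{\Varid{b}} is unconstrained), and I would prove it by a \emph{separate} induction on \ensuremath{\Varid{w}}. In the base case \ensuremath{\Varid{w}\mathrel{=}\Conid{Pure}\;\Varid{b}}, unfolding \ensuremath{\Varid{pair}} and \ensuremath{\texttt{(<*>)}} and then applying Lemma~\ref{lem:pure} once on each side collapses both sides to one and the same expression. In the step case \ensuremath{\Varid{w}\mathrel{=}\Varid{h'}\mathbin{\texttt{:\$:}}\Varid{w'}}, I would unfold \ensuremath{\Varid{pair}} and \ensuremath{\texttt{(<*>)}}, use the functor laws for \ensuremath{\Varid{f}}, the inductive hypothesis for \ensuremath{\Varid{w'}}, and once more equation~\ref{naturality}, leaving a routine point-free identity between functions on \ensuremath{\Varid{h'}} that is discharged by expanding \ensuremath{\Varid{uncurry}}, \ensuremath{\texttt{(,)}} and \ensuremath{\circ}.

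I expect the main obstacle to be purely organisational: keeping track of the several existentially bound types created by the nested \ensuremath{\texttt{(:\$:)}} applications and the accompanying tower of \ensuremath{\Varid{fmap}}s, and recognising that once the functor laws collapse that tower everything funnels through the single reassociation \ensuremath{\Varid{phi}} --- after which the two appeals to naturality (each an instance of the free theorem~\ref{naturality}, licensed by the parametricity assumption of section~\ref{sec-parametricity}) carry the argument. It is worth stressing why a nested induction is unavoidable rather than a defect of the plan: the induction on \ensuremath{\Varid{u}} never reaches the \ensuremath{\Varid{w}} buried inside \ensuremath{\Varid{h}\mathbin{\texttt{:\$:}}\Varid{w}} --- whose type, being existentially quantified, need not even be a function type, so \ensuremath{\texttt{(<*>)}} cannot be applied to it --- so the \ensuremath{\Varid{pair}}-reassociation fact genuinely has to be established on its own.
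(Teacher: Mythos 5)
Your plan is correct, and it coincides with the paper's own proof up to the point where, in the case \ensuremath{\Varid{u}\mathrel{=}\Varid{g}\mathbin{\texttt{\color{red}{:\$:}}}\Varid{w}}, both sides have been massaged (via the functor laws and one appeal to equation \ref{naturality}) so that the goal reduces to the reassociation identity \ensuremath{\Varid{fmap}\;\Varid{t}\;\texttt{(}\Varid{pair}\;\texttt{(}\Varid{pair}\;\Varid{w}\;\Varid{v}\texttt{)}\;\Varid{x}\texttt{)}} $\equiv$ \ensuremath{\Varid{pair}\;\Varid{w}\;\texttt{(}\Varid{v}\mathbin{\texttt{<*>}}\Varid{x}\texttt{)}} --- your $\varphi$ is exactly the paper's helper \ensuremath{\Varid{t}}. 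From there the two arguments genuinely diverge. The paper rewrites the left-hand side, using lemma \ref{lem:pure} three times together with the functor laws, into \ensuremath{\Varid{pure}\;\texttt{(}\hsdot{\circ }{\texttt{.}}\texttt{)}\mathbin{\texttt{<*>}}\Varid{fmap}\;\texttt{(}\mathbin{\texttt{,}}\texttt{)}\;\Varid{w}\mathbin{\texttt{<*>}}\Varid{v}\mathbin{\texttt{<*>}}\Varid{x}} and then invokes the induction hypothesis of \emph{this very lemma} at the first argument \ensuremath{\Varid{fmap}\;\texttt{(}\mathbin{\texttt{,}}\texttt{)}\;\Varid{w}}; since that term is not a structural subterm of \ensuremath{\Varid{g}\mathbin{\texttt{\color{red}{:\$:}}}\Varid{w}}, the induction must be justified by the \ensuremath{\Varid{size}} measure of section \ref{sec:totality}. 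You instead prove the reassociation identity by a fresh, honestly structural induction on \ensuremath{\Varid{w}}, never consuming the outer hypothesis; I have checked that this nested induction goes through exactly as you describe (lemma \ref{lem:pure} once per side in the base case; functor laws, the inner hypothesis, and one more use of equation \ref{naturality} in the step case). The one inaccuracy is your closing claim that the nested induction is \emph{unavoidable}: the paper avoids it precisely by observing that \ensuremath{\Varid{fmap}\;\texttt{(}\mathbin{\texttt{,}}\texttt{)}\;\Varid{w}} \emph{does} have a function result type, so the main statement can be recycled at that smaller (in size, not in structure) instance. What your route buys is that every induction is well-founded on the nose, with no appeal to the size argument beyond what is already needed to define \ensuremath{\texttt{(}\mathbin{\texttt{<*>}}\texttt{)}} itself; what the paper's route buys is one fewer auxiliary lemma.
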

\begin{proof}
Suppose first that \ensuremath{\Varid{u}\mathrel{=}\Conid{Pure}\;\mathtt{u}_\mathtt{0}} for some \ensuremath{\mathtt{u}_\mathtt{0}\mathbin{::}\Varid{y}\to \Varid{z}}:

\begin{hscode}\SaveRestoreHook
\column{B}{@{}>{\hspre}l<{\hspost}@{}}%
\column{5}{@{}>{\hspre}l<{\hspost}@{}}%
\column{E}{@{}>{\hspre}l<{\hspost}@{}}%
\>[5]{}\Conid{Pure}\;\texttt{(}\hsdot{\circ }{\texttt{.}}\texttt{)}\mathbin{\texttt{<*>}}\Conid{Pure}\;\mathtt{u}_\mathtt{0}\mathbin{\texttt{<*>}}\Varid{v}\mathbin{\texttt{<*>}}\Varid{x}{}\<[E]%
\\
\>[B]{}\equiv \mbox{\commentbegin  definition of \ensuremath{\texttt{(}\mathbin{\texttt{<*>}}\texttt{)}}  \commentend}{}\<[E]%
\\
\>[B]{}\hsindent{5}{}\<[5]%
\>[5]{}\Conid{Pure}\;\texttt{(}\mathtt{u}_\mathtt{0}\hsdot{\circ }{\texttt{.}}\texttt{)}\mathbin{\texttt{<*>}}\Varid{v}\mathbin{\texttt{<*>}}\Varid{x}{}\<[E]%
\\
\>[B]{}\equiv \mbox{\commentbegin  definition of \ensuremath{\texttt{(}\mathbin{\texttt{<*>}}\texttt{)}}  \commentend}{}\<[E]%
\\
\>[B]{}\hsindent{5}{}\<[5]%
\>[5]{}\Varid{fmap}\;\texttt{(}\mathtt{u}_\mathtt{0}\hsdot{\circ }{\texttt{.}}\texttt{)}\;\Varid{v}\mathbin{\texttt{<*>}}\Varid{x}{}\<[E]%
\\
\>[B]{}\equiv \mbox{\commentbegin  lemma \ref{lem:pure}  \commentend}{}\<[E]%
\\
\>[B]{}\hsindent{5}{}\<[5]%
\>[5]{}\Varid{fmap}\;\mathtt{u}_\mathtt{0}\;\texttt{(}\Varid{v}\mathbin{\texttt{<*>}}\Varid{x}\texttt{)}{}\<[E]%
\\
\>[B]{}\equiv \mbox{\commentbegin  definition of \ensuremath{\texttt{(}\mathbin{\texttt{<*>}}\texttt{)}}  \commentend}{}\<[E]%
\\
\>[B]{}\hsindent{5}{}\<[5]%
\>[5]{}\Conid{Pure}\;\mathtt{u}_\mathtt{0}\mathbin{\texttt{<*>}}\texttt{(}\Varid{v}\mathbin{\texttt{<*>}}\Varid{x}\texttt{)}{}\<[E]%
\ColumnHook
\end{hscode}\resethooks

To tackle the case where \ensuremath{\Varid{u}\mathrel{=}\Varid{g}\mathbin{\texttt{\color{red}{:\$:}}}\Varid{w}}, for
\begin{align*}
  \ensuremath{\Varid{g}} & \ensuremath{\mathbin{::}\Varid{f}\;\texttt{(}\Varid{w}\to \Varid{y}\to \Varid{z}\texttt{)}} \\
  \ensuremath{\Varid{w}} & \ensuremath{\mathbin{::}\Conid{FreeA}\;\Varid{f}\;\Varid{w}},
\end{align*}

we need to define a helper function

\begin{hscode}\SaveRestoreHook
\column{B}{@{}>{\hspre}l<{\hspost}@{}}%
\column{E}{@{}>{\hspre}l<{\hspost}@{}}%
\>[B]{}\Varid{t}\mathbin{::}\texttt{(}\texttt{(}\Varid{w}\mathbin{\texttt{,}}\Varid{x}\to \Varid{y}\texttt{)}\mathbin{\texttt{,}}\Varid{x}\texttt{)}\to \texttt{(}\Varid{w}\mathbin{\texttt{,}}\Varid{y}\texttt{)}{}\<[E]%
\\
\>[B]{}\Varid{t}\;\texttt{(}\texttt{(}\Varid{w}\mathbin{\texttt{,}}\Varid{v}\texttt{)}\mathbin{\texttt{,}}\Varid{x}\texttt{)}\mathrel{=}\texttt{(}\Varid{w}\mathbin{\texttt{,}}\Varid{v}\;\Varid{x}\texttt{)}{}\<[E]%
\ColumnHook
\end{hscode}\resethooks

and compute:

\begin{hscode}\SaveRestoreHook
\column{B}{@{}>{\hspre}l<{\hspost}@{}}%
\column{5}{@{}>{\hspre}l<{\hspost}@{}}%
\column{E}{@{}>{\hspre}l<{\hspost}@{}}%
\>[5]{}\Varid{pure}\;\texttt{(}\hsdot{\circ }{\texttt{.}}\texttt{)}\mathbin{\texttt{<*>}}\texttt{(}\Varid{g}\mathbin{\texttt{\color{red}{:\$:}}}\Varid{w}\texttt{)}\mathbin{\texttt{<*>}}\Varid{v}\mathbin{\texttt{<*>}}\Varid{x}{}\<[E]%
\\
\>[B]{}\equiv \mbox{\commentbegin  definition of \ensuremath{\Varid{pure}} and \ensuremath{\texttt{(}\mathbin{\texttt{<*>}}\texttt{)}}  \commentend}{}\<[E]%
\\
\>[B]{}\hsindent{5}{}\<[5]%
\>[5]{}\texttt{(}\Varid{fmap}\;\texttt{(}\texttt{(}\hsdot{\circ }{\texttt{.}}\texttt{)}\hsdot{\circ }{\texttt{.}}\texttt{)}\;\Varid{g}\mathbin{\texttt{\color{red}{:\$:}}}\Varid{w}\texttt{)}\mathbin{\texttt{<*>}}\Varid{v}\mathbin{\texttt{<*>}}\Varid{x}{}\<[E]%
\\
\>[B]{}\equiv \mbox{\commentbegin  definition of composition  \commentend}{}\<[E]%
\\
\>[B]{}\hsindent{5}{}\<[5]%
\>[5]{}\texttt{(}\Varid{fmap}\;\texttt{(}\lambda \Varid{g}\;\Varid{w}\;\Varid{v}\to \Varid{g}\;\Varid{w}\hsdot{\circ }{\texttt{.}}\Varid{v}\texttt{)}\;\Varid{g}\mathbin{\texttt{\color{red}{:\$:}}}\Varid{w}\texttt{)}\mathbin{\texttt{<*>}}\Varid{v}\mathbin{\texttt{<*>}}\Varid{x}{}\<[E]%
\\
\>[B]{}\equiv \mbox{\commentbegin  definition of \ensuremath{\texttt{(}\mathbin{\texttt{<*>}}\texttt{)}}  \commentend}{}\<[E]%
\\
\>[B]{}\hsindent{5}{}\<[5]%
\>[5]{}\texttt{(}\Varid{fmap}\;\Varid{uc}\;\texttt{(}\Varid{fmap}\;\texttt{(}\lambda \Varid{g}\;\Varid{w}\;\Varid{v}\to \Varid{g}\;\Varid{w}\hsdot{\circ }{\texttt{.}}\Varid{v}\texttt{)}\;\Varid{g}\texttt{)}\mathbin{\texttt{\color{red}{:\$:}}}\Varid{pair}\;\Varid{w}\;\Varid{v}\texttt{)}{}\<[E]%
\\
\>[B]{}\hsindent{5}{}\<[5]%
\>[5]{}\mathbin{\texttt{<*>}}\Varid{x}{}\<[E]%
\\
\>[B]{}\equiv \mbox{\commentbegin  \ensuremath{\Varid{f}} is a functor and definition of \ensuremath{\Varid{uc}}  \commentend}{}\<[E]%
\\
\>[B]{}\hsindent{5}{}\<[5]%
\>[5]{}\texttt{(}\Varid{fmap}\;\texttt{(}\lambda \Varid{g}\;\texttt{(}\Varid{w}\mathbin{\texttt{,}}\Varid{v}\texttt{)}\to \Varid{g}\;\Varid{w}\hsdot{\circ }{\texttt{.}}\Varid{v}\texttt{)}\;\Varid{g}\mathbin{\texttt{\color{red}{:\$:}}}\Varid{pair}\;\Varid{w}\;\Varid{v}\texttt{)}\mathbin{\texttt{<*>}}\Varid{x}{}\<[E]%
\\
\>[B]{}\equiv \mbox{\commentbegin  definition of \ensuremath{\texttt{(}\mathbin{\texttt{<*>}}\texttt{)}}  \commentend}{}\<[E]%
\\
\>[B]{}\hsindent{5}{}\<[5]%
\>[5]{}\Varid{fmap}\;\Varid{uc}\;\texttt{(}\Varid{fmap}\;\texttt{(}\lambda \Varid{g}\;\texttt{(}\Varid{w}\mathbin{\texttt{,}}\Varid{v}\texttt{)}\to \Varid{g}\;\Varid{w}\hsdot{\circ }{\texttt{.}}\Varid{v}\texttt{)}\;\Varid{g}\texttt{)}\mathbin{\texttt{\color{red}{:\$:}}}{}\<[E]%
\\
\>[B]{}\hsindent{5}{}\<[5]%
\>[5]{}\Varid{pair}\;\texttt{(}\Varid{pair}\;\Varid{w}\;\Varid{v}\texttt{)}\;\Varid{x}{}\<[E]%
\\
\>[B]{}\equiv \mbox{\commentbegin  \ensuremath{\Varid{f}} is a functor and definition of \ensuremath{\Varid{uc}}  \commentend}{}\<[E]%
\\
\>[B]{}\hsindent{5}{}\<[5]%
\>[5]{}\Varid{fmap}\;\texttt{(}\lambda \Varid{g}\;\texttt{(}\texttt{(}\Varid{w}\mathbin{\texttt{,}}\Varid{v}\texttt{)}\mathbin{\texttt{,}}\Varid{x}\texttt{)}\to \Varid{g}\;\Varid{w}\;\texttt{(}\Varid{v}\;\Varid{x}\texttt{)}\texttt{)}\;\Varid{g}\mathbin{\texttt{\color{red}{:\$:}}}{}\<[E]%
\\
\>[B]{}\hsindent{5}{}\<[5]%
\>[5]{}\Varid{pair}\;\texttt{(}\Varid{pair}\;\Varid{w}\;\Varid{v}\texttt{)}\;\Varid{x}{}\<[E]%
\\
\>[B]{}\equiv \mbox{\commentbegin  definition of \ensuremath{\Varid{uc}} and \ensuremath{\Varid{t}}  \commentend}{}\<[E]%
\\
\>[B]{}\hsindent{5}{}\<[5]%
\>[5]{}\Varid{fmap}\;\texttt{(}\lambda \Varid{g}\to \Varid{uc}\;\Varid{g}\hsdot{\circ }{\texttt{.}}\Varid{t}\texttt{)}\;\Varid{g}\mathbin{\texttt{\color{red}{:\$:}}}\Varid{pair}\;\texttt{(}\Varid{pair}\;\Varid{w}\;\Varid{v}\texttt{)}\;\Varid{x}{}\<[E]%
\\
\>[B]{}\equiv \mbox{\commentbegin  \ensuremath{\Varid{f}} is a functor  \commentend}{}\<[E]%
\\
\>[B]{}\hsindent{5}{}\<[5]%
\>[5]{}\Varid{fmap}\;\texttt{(}\hsdot{\circ }{\texttt{.}}\Varid{t}\texttt{)}\;\texttt{(}\Varid{fmap}\;\Varid{uc}\;\Varid{g}\texttt{)}\mathbin{\texttt{\color{red}{:\$:}}}\Varid{pair}\;\texttt{(}\Varid{pair}\;\Varid{w}\;\Varid{v}\texttt{)}\;\Varid{x}{}\<[E]%
\\
\>[B]{}\equiv \mbox{\commentbegin  equation \ref{naturality}  \commentend}{}\<[E]%
\\
\>[B]{}\hsindent{5}{}\<[5]%
\>[5]{}\Varid{fmap}\;\Varid{uc}\;\Varid{g}\mathbin{\texttt{\color{red}{:\$:}}}\Varid{fmap}\;\Varid{t}\;\texttt{(}\Varid{pair}\;\texttt{(}\Varid{pair}\;\Varid{w}\;\Varid{v}\texttt{)}\;\Varid{x}\texttt{)}{}\<[E]%
\\
\>[B]{}\equiv \mbox{\commentbegin  lemma \ref{lem:pure} (3 times) and \ensuremath{\Conid{FreeA}\;\Varid{f}} is a functor (3 times)  \commentend}{}\<[E]%
\\
\>[B]{}\hsindent{5}{}\<[5]%
\>[5]{}\Varid{fmap}\;\Varid{uc}\;\Varid{g}\mathbin{\texttt{\color{red}{:\$:}}}\texttt{(}\Varid{pure}\;\texttt{(}\hsdot{\circ }{\texttt{.}}\texttt{)}\mathbin{\texttt{<*>}}\Varid{fmap}\;\texttt{(}\mathbin{\texttt{,}}\texttt{)}\;\Varid{w}\mathbin{\texttt{<*>}}\Varid{v}\mathbin{\texttt{<*>}}\Varid{x}\texttt{)}{}\<[E]%
\\
\>[B]{}\equiv \mbox{\commentbegin  induction hypothesis for \ensuremath{\Varid{fmap}\;\texttt{(}\mathbin{\texttt{,}}\texttt{)}\;\Varid{w}}  \commentend}{}\<[E]%
\\
\>[B]{}\hsindent{5}{}\<[5]%
\>[5]{}\Varid{fmap}\;\Varid{uc}\;\Varid{g}\mathbin{\texttt{\color{red}{:\$:}}}\texttt{(}\Varid{fmap}\;\texttt{(}\mathbin{\texttt{,}}\texttt{)}\;\Varid{w}\mathbin{\texttt{<*>}}\texttt{(}\Varid{v}\mathbin{\texttt{<*>}}\Varid{x}\texttt{)}\texttt{)}{}\<[E]%
\\
\>[B]{}\equiv \mbox{\commentbegin  definition of \ensuremath{\texttt{(}\mathbin{\texttt{<*>}}\texttt{)}}  \commentend}{}\<[E]%
\\
\>[B]{}\hsindent{5}{}\<[5]%
\>[5]{}\texttt{(}\Varid{g}\mathbin{\texttt{\color{red}{:\$:}}}\Varid{w}\texttt{)}\mathbin{\texttt{<*>}}\texttt{(}\Varid{v}\mathbin{\texttt{<*>}}\Varid{x}\texttt{)}{}\<[E]%
\ColumnHook
\end{hscode}\resethooks
\end{proof}

\begin{lem}\label{lem:app:interchange}
Property \ref{app:interchange} holds for \ensuremath{\Conid{FreeA}\;\Varid{f}}, i.e. for all
\begin{align*}
  \ensuremath{\Varid{u}} & \ensuremath{\mathbin{::}\Conid{FreeA}\;\Varid{f}\;\texttt{(}\Varid{x}\to \Varid{y}\texttt{)}} \\
  \ensuremath{\Varid{x}} & \ensuremath{\mathbin{::}\Varid{x}},
\end{align*}
\begin{equation*}
\ensuremath{\Varid{u}\mathbin{\texttt{<*>}}\Varid{pure}\;\Varid{x}} \equiv \ensuremath{\Varid{pure}\;\texttt{(}\mathbin{\texttt{\$}}\Varid{x}\texttt{)}\mathbin{\texttt{<*>}}\Varid{u}}
\end{equation*}
\end{lem}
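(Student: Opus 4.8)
The plan is to prove the interchange law by structural induction on $\ensuremath{\Varid{u}}$, following the same template as the proof of Lemma~\ref{lem:app:comp}. For the base case $\ensuremath{\Varid{u}\mathrel{=}\Conid{Pure}\;\mathtt{u}_\mathtt{0}}$, both sides should collapse directly: the left-hand side $\ensuremath{\Conid{Pure}\;\mathtt{u}_\mathtt{0}\mathbin{\texttt{<*>}}\Conid{Pure}\;\Varid{x}}$ unfolds to $\ensuremath{\Varid{fmap}\;\mathtt{u}_\mathtt{0}\;\texttt{(}\Conid{Pure}\;\Varid{x}\texttt{)}}$, i.e. $\ensuremath{\Conid{Pure}\;\texttt{(}\mathtt{u}_\mathtt{0}\;\Varid{x}\texttt{)}}$, while the right-hand side $\ensuremath{\Conid{Pure}\;\texttt{(}\mathbin{\texttt{\$}}\Varid{x}\texttt{)}\mathbin{\texttt{<*>}}\Conid{Pure}\;\mathtt{u}_\mathtt{0}}$ unfolds to $\ensuremath{\Varid{fmap}\;\texttt{(}\mathbin{\texttt{\$}}\Varid{x}\texttt{)}\;\texttt{(}\Conid{Pure}\;\mathtt{u}_\mathtt{0}\texttt{)}}$, again $\ensuremath{\Conid{Pure}\;\texttt{(}\mathtt{u}_\mathtt{0}\;\Varid{x}\texttt{)}}$, so the two agree.

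For the inductive case $\ensuremath{\Varid{u}\mathrel{=}\Varid{g}\mathbin{\texttt{\color{red}{:\$:}}}\Varid{w}}$ with $\ensuremath{\Varid{g}\mathbin{::}\Varid{f}\;\texttt{(}\Varid{b}\to \Varid{x}\to \Varid{y}\texttt{)}}$ and $\ensuremath{\Varid{w}\mathbin{::}\Conid{FreeA}\;\Varid{f}\;\Varid{b}}$, I would first expand the left-hand side using the definition of $\ensuremath{\texttt{(}\mathbin{\texttt{<*>}}\texttt{)}}$, obtaining $\ensuremath{\Varid{fmap}\;\Varid{uc}\;\Varid{g}\mathbin{\texttt{\color{red}{:\$:}}}\texttt{(}\Varid{fmap}\;\texttt{(}\mathbin{\texttt{,}}\texttt{)}\;\Varid{w}\mathbin{\texttt{<*>}}\Conid{Pure}\;\Varid{x}\texttt{)}}$, and then apply the induction hypothesis to the subterm $\ensuremath{\Varid{fmap}\;\texttt{(}\mathbin{\texttt{,}}\texttt{)}\;\Varid{w}}$. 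This recursive use is justified exactly as in Lemma~\ref{lem:app:comp}: $\ensuremath{\Varid{fmap}}$ preserves the inductive structure of a $\ensuremath{\Conid{FreeA}\;\Varid{f}}$ value, so the hypothesis applies to anything of the form $\ensuremath{\Varid{fmap}\;\Varid{k}\;\Varid{w}}$. The inner term thereby becomes $\ensuremath{\Conid{Pure}\;\texttt{(}\mathbin{\texttt{\$}}\Varid{x}\texttt{)}\mathbin{\texttt{<*>}}\Varid{fmap}\;\texttt{(}\mathbin{\texttt{,}}\texttt{)}\;\Varid{w}}$, which by the $\ensuremath{\Conid{Pure}}$ clause of $\ensuremath{\texttt{(}\mathbin{\texttt{<*>}}\texttt{)}}$ and functoriality of $\ensuremath{\Conid{FreeA}\;\Varid{f}}$ simplifies to $\ensuremath{\Varid{fmap}\;\texttt{(}\lambda \Varid{z}\to \texttt{(}\Varid{z}\mathbin{\texttt{,}}\Varid{x}\texttt{)}\texttt{)}\;\Varid{w}}$. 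Next I would use naturality of $\ensuremath{\texttt{(}\mathbin{\texttt{\color{red}{:\$:}}}\texttt{)}}$, equation~\ref{naturality}, instantiated at $\ensuremath{\Varid{h}\mathrel{=}\texttt{(}\lambda \Varid{z}\to \texttt{(}\Varid{z}\mathbin{\texttt{,}}\Varid{x}\texttt{)}\texttt{)}}$, to absorb this $\ensuremath{\Varid{fmap}}$ into the first component, after which a short composition computation (using that $\ensuremath{\Varid{f}}$ is a functor) reduces the first component to $\ensuremath{\Varid{fmap}\;\texttt{(}\lambda \Varid{k}\;\Varid{z}\to \Varid{k}\;\Varid{z}\;\Varid{x}\texttt{)}\;\Varid{g}}$, so the whole left-hand side is $\ensuremath{\Varid{fmap}\;\texttt{(}\lambda \Varid{k}\;\Varid{z}\to \Varid{k}\;\Varid{z}\;\Varid{x}\texttt{)}\;\Varid{g}\mathbin{\texttt{\color{red}{:\$:}}}\Varid{w}}$. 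Finally, expanding the right-hand side $\ensuremath{\Conid{Pure}\;\texttt{(}\mathbin{\texttt{\$}}\Varid{x}\texttt{)}\mathbin{\texttt{<*>}}\texttt{(}\Varid{g}\mathbin{\texttt{\color{red}{:\$:}}}\Varid{w}\texttt{)}}$ directly via the definitions of $\ensuremath{\texttt{(}\mathbin{\texttt{<*>}}\texttt{)}}$ and $\ensuremath{\Varid{fmap}}$ on $\ensuremath{\Conid{FreeA}}$ yields the very same expression, closing the case.

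The main obstacle I expect is purely the bookkeeping in the inductive step: correctly tracking the nested compositions involving $\ensuremath{\Varid{uncurry}}$, $\ensuremath{\texttt{(}\mathbin{\texttt{,}}\texttt{)}}$ and $\ensuremath{\texttt{(}\mathbin{\texttt{\$}}\Varid{x}\texttt{)}}$, and applying naturality with the right instance of $\ensuremath{\Varid{h}}$. None of the individual steps is deep, but it is easy to drop or misplace an argument, so the safest route is to write both sides in fully $\eta$-expanded form before comparing them, which makes the final equality transparent.
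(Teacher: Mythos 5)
Your proposal is correct and follows essentially the same route as the paper's proof: a trivial \ensuremath{\Conid{Pure}} base case, then in the \ensuremath{\Varid{g}\mathbin{\texttt{\color{red}{:\$:}}}\Varid{w}} case unfolding \ensuremath{\texttt{(}\mathbin{\texttt{<*>}}\texttt{)}}, applying the induction hypothesis to \ensuremath{\Varid{fmap}\;\texttt{(}\mathbin{\texttt{,}}\texttt{)}\;\Varid{w}} (justified by the size argument of section \ref{sec:totality}), simplifying with functoriality, and using equation \ref{naturality} to push the resulting \ensuremath{\Varid{fmap}} into the \ensuremath{\Varid{f}}-component before matching the unfolded right-hand side. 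The intermediate expressions you name coincide with the paper's, so there is nothing to add.
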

\begin{proof}
If \ensuremath{\Varid{u}} is of the form \ensuremath{\Conid{Pure}\;\mathtt{u}_\mathtt{0}}, then the conclusion follows immediately.

Let's assume, therefore, that \ensuremath{\Varid{u}\mathrel{=}\Varid{g}\mathbin{\texttt{\color{red}{:\$:}}}\Varid{w}}, for some \ensuremath{\Varid{w}\mathbin{::}\Varid{w}}, \ensuremath{\Varid{g}\mathbin{::}\Varid{f}\;\texttt{(}\Varid{w}\to \Varid{x}\to \Varid{y}\texttt{)}}, and that the lemma is true for structurally smaller values of \ensuremath{\Varid{u}}:

\begin{hscode}\SaveRestoreHook
\column{B}{@{}>{\hspre}l<{\hspost}@{}}%
\column{5}{@{}>{\hspre}l<{\hspost}@{}}%
\column{E}{@{}>{\hspre}l<{\hspost}@{}}%
\>[5]{}\texttt{(}\Varid{g}\mathbin{\texttt{\color{red}{:\$:}}}\Varid{w}\texttt{)}\mathbin{\texttt{<*>}}\Varid{pure}\;\Varid{x}{}\<[E]%
\\
\>[B]{}\equiv \mbox{\commentbegin  definition of \ensuremath{\texttt{(}\mathbin{\texttt{<*>}}\texttt{)}}  \commentend}{}\<[E]%
\\
\>[B]{}\hsindent{5}{}\<[5]%
\>[5]{}\Varid{fmap}\;\Varid{uc}\;\Varid{g}\mathbin{\texttt{\color{red}{:\$:}}}\Varid{pair}\;\Varid{w}\;\texttt{(}\Varid{pure}\;\Varid{x}\texttt{)}{}\<[E]%
\\
\>[B]{}\equiv \mbox{\commentbegin  definition of \ensuremath{\Varid{pair}}  \commentend}{}\<[E]%
\\
\>[B]{}\hsindent{5}{}\<[5]%
\>[5]{}\Varid{fmap}\;\Varid{uc}\;\Varid{g}\mathbin{\texttt{\color{red}{:\$:}}}\texttt{(}\Varid{fmap}\;\texttt{(}\mathbin{\texttt{,}}\texttt{)}\;\Varid{w}\mathbin{\texttt{<*>}}\Varid{pure}\;\Varid{x}\texttt{)}{}\<[E]%
\\
\>[B]{}\equiv \mbox{\commentbegin  induction hypothesis for \ensuremath{\Varid{fmap}\;\texttt{(}\mathbin{\texttt{,}}\texttt{)}\;\Varid{w}}  \commentend}{}\<[E]%
\\
\>[B]{}\hsindent{5}{}\<[5]%
\>[5]{}\Varid{fmap}\;\Varid{uc}\;\Varid{g}\mathbin{\texttt{\color{red}{:\$:}}}\texttt{(}\Varid{pure}\;\texttt{(}\mathbin{\texttt{\$}}\Varid{x}\texttt{)}\mathbin{\texttt{<*>}}\Varid{fmap}\;\texttt{(}\mathbin{\texttt{,}}\texttt{)}\;\Varid{w}\texttt{)}{}\<[E]%
\\
\>[B]{}\equiv \mbox{\commentbegin  \ensuremath{\Conid{FreeA}\;\Varid{f}} is a functor  \commentend}{}\<[E]%
\\
\>[B]{}\hsindent{5}{}\<[5]%
\>[5]{}\Varid{fmap}\;\Varid{uc}\;\Varid{g}\mathbin{\texttt{\color{red}{:\$:}}}\Varid{fmap}\;\texttt{(}\lambda \Varid{w}\to \texttt{(}\Varid{w}\mathbin{\texttt{,}}\Varid{x}\texttt{)}\texttt{)}\;\Varid{w}\texttt{)}{}\<[E]%
\\
\>[B]{}\equiv \mbox{\commentbegin  equation \ref{naturality}  \commentend}{}\<[E]%
\\
\>[B]{}\hsindent{5}{}\<[5]%
\>[5]{}\Varid{fmap}\;\texttt{(}\lambda \Varid{g}\;\Varid{w}\to \Varid{g}\;\texttt{(}\Varid{w}\mathbin{\texttt{,}}\Varid{x}\texttt{)}\texttt{)}\;\texttt{(}\Varid{fmap}\;\Varid{uc}\;\Varid{g}\texttt{)}\mathbin{\texttt{\color{red}{:\$:}}}\Varid{w}{}\<[E]%
\\
\>[B]{}\equiv \mbox{\commentbegin  \ensuremath{\Varid{f}} is a functor  \commentend}{}\<[E]%
\\
\>[B]{}\hsindent{5}{}\<[5]%
\>[5]{}\Varid{fmap}\;\texttt{(}\lambda \Varid{g}\;\Varid{w}\to \Varid{g}\;\Varid{w}\;\Varid{x}\texttt{)}\;\Varid{g}\mathbin{\texttt{\color{red}{:\$:}}}\Varid{w}{}\<[E]%
\\
\>[B]{}\equiv \mbox{\commentbegin  definition of \ensuremath{\Varid{fmap}} for \ensuremath{\Conid{FreeA}\;\Varid{f}}  \commentend}{}\<[E]%
\\
\>[B]{}\hsindent{5}{}\<[5]%
\>[5]{}\Varid{fmap}\;\texttt{(}\mathbin{\texttt{\$}}\Varid{x}\texttt{)}\;\texttt{(}\Varid{g}\mathbin{\texttt{\color{red}{:\$:}}}\Varid{w}\texttt{)}{}\<[E]%
\\
\>[B]{}\equiv \mbox{\commentbegin  definition of \ensuremath{\texttt{(}\mathbin{\texttt{<*>}}\texttt{)}}  \commentend}{}\<[E]%
\\
\>[B]{}\hsindent{5}{}\<[5]%
\>[5]{}\Varid{pure}\;\texttt{(}\mathbin{\texttt{\$}}\Varid{x}\texttt{)}\mathbin{\texttt{<*>}}\texttt{(}\Varid{g}\mathbin{\texttt{\color{red}{:\$:}}}\Varid{w}\texttt{)}{}\<[E]%
\ColumnHook
\end{hscode}\resethooks
\end{proof}

\begin{prop}
\ensuremath{\Conid{FreeA}\;\Varid{f}} is an applicative functor.
\end{prop}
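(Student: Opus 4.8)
The plan is to verify the four \ensuremath{\Conid{Applicative}} laws \ref{app:id}--\ref{app:interchange} in turn. Two of them are already done: the composition law \ref{app:comp} is Lemma \ref{lem:app:comp} and the interchange law \ref{app:interchange} is Lemma \ref{lem:app:interchange}. So only the identity law \ref{app:id} and the homomorphism law \ref{app:hom} remain, and I expect both to be immediate from the definitions of \ensuremath{\Varid{pure}}, \ensuremath{\texttt{(}\mathbin{\texttt{<*>}}\texttt{)}} and \ensuremath{\Varid{fmap}} for \ensuremath{\Conid{FreeA}\;\Varid{f}}, together with the functor laws for \ensuremath{\Conid{FreeA}\;\Varid{f}} that were verified by structural induction when the \ensuremath{\Conid{Functor}} instance was given.

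For the identity law I would unfold \ensuremath{\Varid{pure}\mathrel{=}\Conid{Pure}} and the \ensuremath{\Conid{Pure}} clause of \ensuremath{\texttt{(}\mathbin{\texttt{<*>}}\texttt{)}}, rewriting \ensuremath{\Varid{pure}\;\Varid{id}\mathbin{\texttt{<*>}}\Varid{u}} to \ensuremath{\Varid{fmap}\;\Varid{id}\;\Varid{u}}, which equals \ensuremath{\Varid{u}} by the functor identity law for \ensuremath{\Conid{FreeA}\;\Varid{f}}; no induction is needed. For the homomorphism law I would likewise unfold \ensuremath{\Varid{pure}} and the \ensuremath{\Conid{Pure}} clause of \ensuremath{\texttt{(}\mathbin{\texttt{<*>}}\texttt{)}} to turn \ensuremath{\Varid{pure}\;\Varid{f}\mathbin{\texttt{<*>}}\Varid{pure}\;\Varid{x}} into \ensuremath{\Varid{fmap}\;\Varid{f}\;\texttt{(}\Conid{Pure}\;\Varid{x}\texttt{)}}, and then the \ensuremath{\Conid{Pure}} clause of \ensuremath{\Varid{fmap}} gives \ensuremath{\Conid{Pure}\;\texttt{(}\Varid{f}\;\Varid{x}\texttt{)}}, i.e.\ \ensuremath{\Varid{pure}\;\texttt{(}\Varid{f}\;\Varid{x}\texttt{)}}. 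With all four laws in hand, and the \ensuremath{\Conid{Functor}} instance and its laws already established, I would conclude that \ensuremath{\Conid{FreeA}\;\Varid{f}} is an applicative functor. (The compatibility of the two instances, that \ensuremath{\Varid{fmap}\;\Varid{g}\;\Varid{x}} agrees with \ensuremath{\Varid{pure}\;\Varid{g}\mathbin{\texttt{<*>}}\Varid{x}}, holds definitionally, since \ensuremath{\Conid{Pure}\;\Varid{g}\mathbin{\texttt{<*>}}\Varid{x}} reduces to \ensuremath{\Varid{fmap}\;\Varid{g}\;\Varid{x}}.)

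There is essentially no remaining obstacle for this proposition: all of the genuinely delicate reasoning --- in particular the structural induction in the non-\ensuremath{\Conid{Pure}} case of the composition law, which relies on the naturality equation \ref{naturality} coming from parametricity, and the nested applications of Lemma \ref{lem:pure} --- has already been carried out in the preceding lemmas. If anything warrants care here, it is merely bookkeeping: checking that \ref{app:comp} and \ref{app:interchange} as proved in the lemmas are literally the instances of the \ensuremath{\Conid{Applicative}} laws that are required, with \ensuremath{\Varid{pure}} specialised to \ensuremath{\Conid{Pure}}.
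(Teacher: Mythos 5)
Your proposal matches the paper's proof exactly: the paper likewise dispatches laws \ref{app:id} and \ref{app:hom} as immediate from the definition of \ensuremath{\texttt{(}\mathbin{\texttt{<*>}}\texttt{)}} on \ensuremath{\Conid{Pure}} together with the functor laws for \ensuremath{\Conid{FreeA}\;\Varid{f}}, and cites Lemmas \ref{lem:app:comp} and \ref{lem:app:interchange} for the other two. Your explicit unfoldings for the identity and homomorphism laws are correct and are just the "straightforward verification" the paper leaves implicit.
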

\begin{proof}
Properties \ref{app:id} and \ref{app:hom} are straightforward to verify using
the fact that \ensuremath{\Conid{FreeA}\;\Varid{f}} is a functor, while properties \ref{app:comp} and
\ref{app:interchange} follow from lemmas \ref{lem:app:comp} and
\ref{lem:app:interchange} respectively.
\end{proof}

\section{\text{\tt FreeA} as a Left adjoint}
\label{sec:adjoint}
We are now going to make the statement that \ensuremath{\Conid{FreeA}\;\Varid{f}} is the free applicative
functor on \ensuremath{\Varid{f}} precise.

First of all, we will define a category $\mathcal{A}$ of applicative functors,
and show that \ensuremath{\Conid{FreeA}} is a functor
$$
\ensuremath{\Conid{FreeA}} : \mathcal{F} \to \mathcal{A},
$$
where $\mathcal{F}$ is the category of endofunctors of \ensuremath{\Conid{Hask}}.

Saying that \ensuremath{\Conid{FreeA}\;\Varid{f}} is the free applicative on \ensuremath{\Varid{f}}, then, amounts to saying
that \ensuremath{\Conid{FreeA}} is left adjoint to the forgetful functor $\mathcal{A} \to
\mathcal{F}$.

\begin{defn}
Let \ensuremath{\Varid{f}} and \ensuremath{\Varid{g}} be two applicative functors. An applicative natural
transformation between \ensuremath{\Varid{f}} and \ensuremath{\Varid{g}} is a polymorphic function
$$
\ensuremath{\Varid{t}\mathbin{::}\forall \Varid{a}\hsforall \hsdot{\circ }{\texttt{.}}\Varid{f}\;\Varid{a}\to \Varid{g}\;\Varid{a}}
$$
satisfying the following laws:
\begin{align}
  \ensuremath{\Varid{t}\;\texttt{(}\Varid{pure}\;\Varid{x}\texttt{)}} & \equiv \ensuremath{\Varid{pure}\;\Varid{x}} \label{eq:applicative1}\\
  \ensuremath{\Varid{t}\;\texttt{(}\Varid{h}\mathbin{\texttt{<*>}}\Varid{x}\texttt{)}} & \equiv \ensuremath{\Varid{t}\;\Varid{h}\mathbin{\texttt{<*>}}\Varid{t}\;\Varid{x}}. \label{eq:applicative2}
\end{align}
\end{defn}

We define the type of all applicative natural transformations between \ensuremath{\Varid{f}}
and \ensuremath{\Varid{g}}, we write, in Haskell,

\begin{hscode}\SaveRestoreHook
\column{B}{@{}>{\hspre}l<{\hspost}@{}}%
\column{E}{@{}>{\hspre}l<{\hspost}@{}}%
\>[B]{}\texttt{\color{blue}\textbf{type}}\;\Conid{AppNat}\;\Varid{f}\;\Varid{g}\mathrel{=}\forall \Varid{a}\hsforall \hsdot{\circ }{\texttt{.}}\Varid{f}\;\Varid{a}\to \Varid{g}\;\Varid{a}{}\<[E]%
\ColumnHook
\end{hscode}\resethooks

where the laws are implied.

Similarly, for any pair of functors \ensuremath{\Varid{f}} and \ensuremath{\Varid{g}}, we define

\begin{hscode}\SaveRestoreHook
\column{B}{@{}>{\hspre}l<{\hspost}@{}}%
\column{E}{@{}>{\hspre}l<{\hspost}@{}}%
\>[B]{}\texttt{\color{blue}\textbf{type}}\;\Conid{Nat}\;\Varid{f}\;\Varid{g}\mathrel{=}\forall \Varid{a}\hsforall \hsdot{\circ }{\texttt{.}}\Varid{f}\;\Varid{a}\to \Varid{g}\;\Varid{a}{}\<[E]%
\ColumnHook
\end{hscode}\resethooks

for the type of natural transformations between \ensuremath{\Varid{f}} and \ensuremath{\Varid{g}}.

Note that, by parametricity, polymorphic functions are automatically natural
transformations in the categorical sense, i.e, for all
\begin{align*}
  \ensuremath{\Varid{t}} & \ensuremath{\mathbin{::}\Conid{Nat}\;\Varid{f}\;\Varid{g}} \\
  \ensuremath{\Varid{h}} & \ensuremath{\mathbin{::}\Varid{a}\to \Varid{b}} \\
  \ensuremath{\Varid{x}} & \ensuremath{\mathbin{::}\Varid{f}\;\Varid{a}},
\end{align*}
\begin{equation*}
    \ensuremath{\Varid{t}\;\texttt{(}\Varid{fmap}\;\Varid{h}\;\Varid{x}\texttt{)}} \equiv \ensuremath{\Varid{fmap}\;\Varid{h}\;\texttt{(}\Varid{t}\;\Varid{x}\texttt{)}}.
\end{equation*}

It is clear that applicative functors, together with applicative natural
transformations, form a category, which we denote by $\mathcal{A}$, and
similarly, functors and natural transformations form a category $\mathcal{F}$.

\begin{prop}\label{prop:FreeAFunctor}
\ensuremath{\Conid{FreeA}} defines a functor $\mathcal{F} \to \mathcal{A}$.
\end{prop}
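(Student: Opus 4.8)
The plan is to give \texttt{FreeA} an action on morphisms, then to check that this action produces applicative natural transformations, and finally that it preserves identities and composition. For a natural transformation $t :: \texttt{Nat}\,f\,g$ I define its image by structural recursion on the \texttt{FreeA} value:
\[
  \texttt{FreeA}\,t\,(\texttt{Pure}\,x) = \texttt{Pure}\,x,
  \qquad
  \texttt{FreeA}\,t\,(h \mathbin{\texttt{:\$:}} x) = (t\,h) \mathbin{\texttt{:\$:}} (\texttt{FreeA}\,t\,x),
\]
where in the second clause $h :: f\,(b \to a)$ and $x :: \texttt{FreeA}\,f\,b$, so that $t\,h :: g\,(b \to a)$, $\texttt{FreeA}\,t\,x :: \texttt{FreeA}\,g\,b$, and the right-hand side has type $\texttt{FreeA}\,g\,a$. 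Totality is by induction on the \texttt{FreeA} structure, in the same informal sense used elsewhere (see section~\ref{sec:totality}). Since $t$ and $\texttt{FreeA}\,t$ are polymorphic, parametricity makes each of them a natural transformation in the categorical sense --- each commutes with \texttt{fmap} --- and this, together with the defining clauses of \texttt{fmap} and \texttt{(<*>)} on \texttt{FreeA}, supplies everything the verification below needs.

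Next I would verify that $\texttt{FreeA}\,t$ satisfies the two laws of an applicative natural transformation. Preservation of \texttt{pure} (equation \ref{eq:applicative1}) is immediate from the first clause. For preservation of \texttt{(<*>)} (equation \ref{eq:applicative2}) I would induct on the left argument $u$. If $u = \texttt{Pure}\,u_0$, then both $\texttt{FreeA}\,t\,(u \mathbin{\texttt{<*>}} v)$ and $\texttt{FreeA}\,t\,u \mathbin{\texttt{<*>}} \texttt{FreeA}\,t\,v$ collapse, via the \texttt{Pure} clause of \texttt{(<*>)} and naturality of $\texttt{FreeA}\,t$, to $\texttt{fmap}\,u_0\,(\texttt{FreeA}\,t\,v)$. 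If $u = g \mathbin{\texttt{:\$:}} w$, unfolding the \texttt{(:\$:)} clause of \texttt{(<*>)} rewrites the left side to $t\,(\texttt{fmap}\,\texttt{uncurry}\,g) \mathbin{\texttt{:\$:}} \texttt{FreeA}\,t\,(\texttt{pair}\,w\,v)$; naturality of $t$ turns the head into $\texttt{fmap}\,\texttt{uncurry}\,(t\,g)$, and applying the induction hypothesis to $\texttt{fmap}\,\texttt{(,)}\,w$ --- a proper substructure of $g \mathbin{\texttt{:\$:}} w$, exactly the recursion pattern already exploited in Lemmas \ref{lem:app:comp} and \ref{lem:app:interchange} --- together with naturality of $\texttt{FreeA}\,t$ rewrites $\texttt{FreeA}\,t\,(\texttt{pair}\,w\,v)$ as $\texttt{pair}\,(\texttt{FreeA}\,t\,w)\,(\texttt{FreeA}\,t\,v)$. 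Unfolding the right side $\texttt{FreeA}\,t\,(g \mathbin{\texttt{:\$:}} w) \mathbin{\texttt{<*>}} \texttt{FreeA}\,t\,v$ by the same \texttt{(:\$:)} clause gives exactly that expression, which closes the case.

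Finally, the functor laws $\texttt{FreeA}\,\texttt{id} = \texttt{id}$ and $\texttt{FreeA}\,(s \circ t) = \texttt{FreeA}\,s \circ \texttt{FreeA}\,t$ follow by a routine structural induction on the \texttt{FreeA} argument, using only the two defining clauses. I expect the one genuinely delicate step to be the $g \mathbin{\texttt{:\$:}} w$ case of the \texttt{(<*>)}-preservation law: as in the applicative-law proofs, one must notice that the induction hypothesis has to be applied at $\texttt{fmap}\,\texttt{(,)}\,w$ rather than at $w$ itself, and chain it with the parametricity-supplied naturality of both $t$ and $\texttt{FreeA}\,t$; once that is set up, the remaining rewriting is mechanical.
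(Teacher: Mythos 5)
Your proposal is correct and follows essentially the same route as the paper: the same definition of the action on morphisms (the paper calls it \ensuremath{\Varid{liftT}}), the same induction on the first argument of \ensuremath{\texttt{(}\mathbin{\texttt{<*>}}\texttt{)}} with the hypothesis applied at \ensuremath{\Varid{fmap}\;\texttt{(}\mathbin{\texttt{,}}\texttt{)}\;\Varid{w}}, and the same appeals to naturality of \ensuremath{\Varid{t}} and of the lifted transformation. The only cosmetic difference is that you dispatch the \ensuremath{\Conid{Pure}} case of law \ref{eq:applicative2} uniformly via naturality, where the paper spells out two base subcases.
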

\begin{proof}
We already showed that \ensuremath{\Conid{FreeA}} sends objects (functors in our case) to
applicative functors.

We need to define the action of \ensuremath{\Conid{FreeA}} on morphisms (which are natural
transformations in our case):

\begin{hscode}\SaveRestoreHook
\column{B}{@{}>{\hspre}l<{\hspost}@{}}%
\column{8}{@{}>{\hspre}l<{\hspost}@{}}%
\column{10}{@{}>{\hspre}l<{\hspost}@{}}%
\column{21}{@{}>{\hspre}l<{\hspost}@{}}%
\column{E}{@{}>{\hspre}l<{\hspost}@{}}%
\>[B]{}\Varid{liftT}{}\<[8]%
\>[8]{}\mathbin{::}\texttt{(}\Conid{Functor}\;\Varid{f}\mathbin{\texttt{,}}\Conid{Functor}\;\Varid{g}\texttt{)}{}\<[E]%
\\
\>[8]{}\Rightarrow \Conid{Nat}\;\Varid{f}\;\Varid{g}{}\<[E]%
\\
\>[8]{}\to \Conid{AppNat}\;\texttt{(}\Conid{FreeA}\;\Varid{f}\texttt{)}\;\texttt{(}\Conid{FreeA}\;\Varid{g}\texttt{)}{}\<[E]%
\\
\>[B]{}\Varid{liftT}\;\anonymous \;{}\<[10]%
\>[10]{}\texttt{(}\Conid{Pure}\;\Varid{x}\texttt{)}{}\<[21]%
\>[21]{}\mathrel{=}\Conid{Pure}\;\Varid{x}{}\<[E]%
\\
\>[B]{}\Varid{liftT}\;\Varid{k}\;{}\<[10]%
\>[10]{}\texttt{(}\Varid{h}\mathbin{\texttt{\color{red}{:\$:}}}\Varid{x}\texttt{)}{}\<[21]%
\>[21]{}\mathrel{=}\Varid{k}\;\Varid{h}\mathbin{\texttt{\color{red}{:\$:}}}\Varid{liftT}\;\Varid{k}\;\Varid{x}{}\<[E]%
\ColumnHook
\end{hscode}\resethooks

First we verify that \ensuremath{\Varid{liftT}\;\Varid{k}} is an applicative natural transformation i.e.
it satisfies laws \ref{eq:applicative1} and \ref{eq:applicative2}. We use
equational reasoning for proving law \ref{eq:applicative1}:

\begin{hscode}\SaveRestoreHook
\column{B}{@{}>{\hspre}l<{\hspost}@{}}%
\column{7}{@{}>{\hspre}l<{\hspost}@{}}%
\column{E}{@{}>{\hspre}l<{\hspost}@{}}%
\>[7]{}\Varid{liftT}\;\Varid{k}\;\texttt{(}\Varid{pure}\;\Varid{x}\texttt{)}{}\<[E]%
\\
\>[B]{}\equiv \mbox{\commentbegin  definition of \ensuremath{\Varid{pure}}  \commentend}{}\<[E]%
\\
\>[B]{}\hsindent{7}{}\<[7]%
\>[7]{}\Varid{liftT}\;\Varid{k}\;\texttt{(}\Conid{Pure}\;\Varid{x}\texttt{)}{}\<[E]%
\\
\>[B]{}\equiv \mbox{\commentbegin  definition of \ensuremath{\Varid{liftT}}  \commentend}{}\<[E]%
\\
\>[B]{}\hsindent{7}{}\<[7]%
\>[7]{}\Conid{Pure}\;\Varid{x}{}\<[E]%
\\
\>[B]{}\equiv \mbox{\commentbegin  definition of \ensuremath{\Varid{pure}}  \commentend}{}\<[E]%
\\
\>[B]{}\hsindent{7}{}\<[7]%
\>[7]{}\Varid{pure}\;\Varid{x}{}\<[E]%
\ColumnHook
\end{hscode}\resethooks

For law \ref{eq:applicative2} we use induction on the size of the
first argument of \ensuremath{\texttt{(}\mathbin{\texttt{<*>}}\texttt{)}} as explained in section \ref{sec:totality}.
The base cases:

\begin{hscode}\SaveRestoreHook
\column{B}{@{}>{\hspre}l<{\hspost}@{}}%
\column{7}{@{}>{\hspre}l<{\hspost}@{}}%
\column{E}{@{}>{\hspre}l<{\hspost}@{}}%
\>[7]{}\Varid{liftT}\;\Varid{k}\;\texttt{(}\Conid{Pure}\;\Varid{h}\mathbin{\texttt{<*>}}\Conid{Pure}\;\Varid{x}\texttt{)}{}\<[E]%
\\
\>[B]{}\equiv \mbox{\commentbegin  definition of \ensuremath{\texttt{(}\mathbin{\texttt{<*>}}\texttt{)}}  \commentend}{}\<[E]%
\\
\>[B]{}\hsindent{7}{}\<[7]%
\>[7]{}\Varid{liftT}\;\Varid{k}\;\texttt{(}\Varid{fmap}\;\Varid{h}\;\texttt{(}\Conid{Pure}\;\Varid{x}\texttt{)}\texttt{)}{}\<[E]%
\\
\>[B]{}\equiv \mbox{\commentbegin  definition of \ensuremath{\Varid{fmap}}  \commentend}{}\<[E]%
\\
\>[B]{}\hsindent{7}{}\<[7]%
\>[7]{}\Varid{liftT}\;\Varid{k}\;\texttt{(}\Conid{Pure}\;\texttt{(}\Varid{h}\;\Varid{x}\texttt{)}\texttt{)}{}\<[E]%
\\
\>[B]{}\equiv \mbox{\commentbegin  definition of \ensuremath{\Varid{liftT}}  \commentend}{}\<[E]%
\\
\>[B]{}\hsindent{7}{}\<[7]%
\>[7]{}\Conid{Pure}\;\texttt{(}\Varid{h}\;\Varid{x}\texttt{)}{}\<[E]%
\\
\>[B]{}\equiv \mbox{\commentbegin  definition of \ensuremath{\Varid{fmap}}  \commentend}{}\<[E]%
\\
\>[B]{}\hsindent{7}{}\<[7]%
\>[7]{}\Varid{fmap}\;\Varid{h}\;\texttt{(}\Conid{Pure}\;\Varid{x}\texttt{)}{}\<[E]%
\\
\>[B]{}\equiv \mbox{\commentbegin  definition of \ensuremath{\texttt{(}\mathbin{\texttt{<*>}}\texttt{)}}  \commentend}{}\<[E]%
\\
\>[B]{}\hsindent{7}{}\<[7]%
\>[7]{}\Conid{Pure}\;\Varid{h}\mathbin{\texttt{<*>}}\Conid{Pure}\;\Varid{x}{}\<[E]%
\\
\>[B]{}\equiv \mbox{\commentbegin  definition of \ensuremath{\Varid{liftT}}  \commentend}{}\<[E]%
\\
\>[B]{}\hsindent{7}{}\<[7]%
\>[7]{}\Varid{liftT}\;\Varid{k}\;\texttt{(}\Conid{Pure}\;\Varid{h}\texttt{)}\mathbin{\texttt{<*>}}\Varid{liftT}\;\Varid{k}\;\texttt{(}\Conid{Pure}\;\Varid{x}\texttt{)}{}\<[E]%
\ColumnHook
\end{hscode}\resethooks

\begin{hscode}\SaveRestoreHook
\column{B}{@{}>{\hspre}l<{\hspost}@{}}%
\column{7}{@{}>{\hspre}l<{\hspost}@{}}%
\column{E}{@{}>{\hspre}l<{\hspost}@{}}%
\>[7]{}\Varid{liftT}\;\Varid{k}\;\texttt{(}\Conid{Pure}\;\Varid{h}\mathbin{\texttt{<*>}}\texttt{(}\Varid{i}\mathbin{\texttt{\color{red}{:\$:}}}\Varid{x}\texttt{)}\texttt{)}{}\<[E]%
\\
\>[B]{}\equiv \mbox{\commentbegin  definition of \ensuremath{\texttt{(}\mathbin{\texttt{<*>}}\texttt{)}}  \commentend}{}\<[E]%
\\
\>[B]{}\hsindent{7}{}\<[7]%
\>[7]{}\Varid{liftT}\;\Varid{k}\;\texttt{(}\Varid{fmap}\;\Varid{h}\;\texttt{(}\Varid{i}\mathbin{\texttt{\color{red}{:\$:}}}\Varid{x}\texttt{)}\texttt{)}{}\<[E]%
\\
\>[B]{}\equiv \mbox{\commentbegin  definition of \ensuremath{\Varid{fmap}}  \commentend}{}\<[E]%
\\
\>[B]{}\hsindent{7}{}\<[7]%
\>[7]{}\Varid{liftT}\;\Varid{k}\;\texttt{(}\Varid{fmap}\;\texttt{(}\Varid{h}\hsdot{\circ }{\texttt{.}}\texttt{)}\;\Varid{i}\mathbin{\texttt{\color{red}{:\$:}}}\Varid{x}\texttt{)}{}\<[E]%
\\
\>[B]{}\equiv \mbox{\commentbegin  definition of \ensuremath{\Varid{liftT}}  \commentend}{}\<[E]%
\\
\>[B]{}\hsindent{7}{}\<[7]%
\>[7]{}\Varid{k}\;\texttt{(}\Varid{fmap}\;\texttt{(}\Varid{h}\hsdot{\circ }{\texttt{.}}\texttt{)}\;\Varid{i}\texttt{)}\mathbin{\texttt{\color{red}{:\$:}}}\Varid{liftT}\;\Varid{k}\;\Varid{x}{}\<[E]%
\\
\>[B]{}\equiv \mbox{\commentbegin  \ensuremath{\Varid{k}} is natural  \commentend}{}\<[E]%
\\
\>[B]{}\hsindent{7}{}\<[7]%
\>[7]{}\Varid{fmap}\;\texttt{(}\Varid{h}\hsdot{\circ }{\texttt{.}}\texttt{)}\;\texttt{(}\Varid{k}\;\Varid{i}\texttt{)}\mathbin{\texttt{\color{red}{:\$:}}}\Varid{liftT}\;\Varid{k}\;\Varid{x}{}\<[E]%
\\
\>[B]{}\equiv \mbox{\commentbegin  definition of \ensuremath{\Varid{fmap}}  \commentend}{}\<[E]%
\\
\>[B]{}\hsindent{7}{}\<[7]%
\>[7]{}\Varid{fmap}\;\Varid{h}\;\texttt{(}\Varid{k}\;\Varid{i}\mathbin{\texttt{\color{red}{:\$:}}}\Varid{liftT}\;\Varid{k}\;\Varid{x}\texttt{)}{}\<[E]%
\\
\>[B]{}\equiv \mbox{\commentbegin  definition of \ensuremath{\texttt{(}\mathbin{\texttt{<*>}}\texttt{)}}  \commentend}{}\<[E]%
\\
\>[B]{}\hsindent{7}{}\<[7]%
\>[7]{}\Conid{Pure}\;\Varid{h}\mathbin{\texttt{<*>}}\texttt{(}\Varid{k}\;\Varid{i}\mathbin{\texttt{\color{red}{:\$:}}}\Varid{liftT}\;\Varid{k}\;\Varid{x}\texttt{)}{}\<[E]%
\\
\>[B]{}\equiv \mbox{\commentbegin  definition of \ensuremath{\Varid{liftT}}  \commentend}{}\<[E]%
\\
\>[B]{}\hsindent{7}{}\<[7]%
\>[7]{}\Varid{liftT}\;\Varid{k}\;\texttt{(}\Conid{Pure}\;\Varid{h}\texttt{)}\mathbin{\texttt{<*>}}\Varid{liftT}\;\Varid{k}\;\texttt{(}\Varid{i}\mathbin{\texttt{\color{red}{:\$:}}}\Varid{x}\texttt{)}{}\<[E]%
\ColumnHook
\end{hscode}\resethooks

The inductive case:

\begin{hscode}\SaveRestoreHook
\column{B}{@{}>{\hspre}l<{\hspost}@{}}%
\column{7}{@{}>{\hspre}l<{\hspost}@{}}%
\column{E}{@{}>{\hspre}l<{\hspost}@{}}%
\>[7]{}\Varid{liftT}\;\Varid{k}\;\texttt{(}\texttt{(}\Varid{h}\mathbin{\texttt{\color{red}{:\$:}}}\Varid{x}\texttt{)}\mathbin{\texttt{<*>}}\Varid{y}\texttt{)}{}\<[E]%
\\
\>[B]{}\equiv \mbox{\commentbegin  definition of \ensuremath{\texttt{(}\mathbin{\texttt{<*>}}\texttt{)}}  \commentend}{}\<[E]%
\\
\>[B]{}\hsindent{7}{}\<[7]%
\>[7]{}\Varid{liftT}\;\Varid{k}\;\texttt{(}\Varid{fmap}\;\Varid{uncurry}\;\Varid{h}\mathbin{\texttt{\color{red}{:\$:}}}\texttt{(}\Varid{fmap}\;\texttt{(}\mathbin{\texttt{,}}\texttt{)}\;\Varid{x}\mathbin{\texttt{<*>}}\Varid{y}\texttt{)}{}\<[E]%
\\
\>[B]{}\equiv \mbox{\commentbegin  definition of \ensuremath{\Varid{liftT}}  \commentend}{}\<[E]%
\\
\>[B]{}\hsindent{7}{}\<[7]%
\>[7]{}\Varid{k}\;\texttt{(}\Varid{fmap}\;\Varid{uncurry}\;\Varid{h}\texttt{)}\mathbin{\texttt{\color{red}{:\$:}}}\Varid{liftT}\;\Varid{k}\;\texttt{(}\Varid{fmap}\;\texttt{(}\mathbin{\texttt{,}}\texttt{)}\;\Varid{x}\mathbin{\texttt{<*>}}\Varid{y}\texttt{)}{}\<[E]%
\\
\>[B]{}\equiv \mbox{\commentbegin  inductive hypothesis  \commentend}{}\<[E]%
\\
\>[B]{}\hsindent{7}{}\<[7]%
\>[7]{}\Varid{k}\;\texttt{(}\Varid{fmap}\;\Varid{uncurry}\;\Varid{h}\texttt{)}\mathbin{\texttt{\color{red}{:\$:}}}{}\<[E]%
\\
\>[B]{}\hsindent{7}{}\<[7]%
\>[7]{}\texttt{(}\Varid{liftT}\;\Varid{k}\;\texttt{(}\Varid{fmap}\;\texttt{(}\mathbin{\texttt{,}}\texttt{)}\;\Varid{x}\texttt{)}\mathbin{\texttt{<*>}}\Varid{liftT}\;\Varid{k}\;\Varid{y}\texttt{)}{}\<[E]%
\\
\>[B]{}\equiv \mbox{\commentbegin  \ensuremath{\Varid{liftT}\;\Varid{k}} is natural  \commentend}{}\<[E]%
\\
\>[B]{}\hsindent{7}{}\<[7]%
\>[7]{}\Varid{k}\;\texttt{(}\Varid{fmap}\;\Varid{uncurry}\;\Varid{h}\texttt{)}\mathbin{\texttt{\color{red}{:\$:}}}{}\<[E]%
\\
\>[B]{}\hsindent{7}{}\<[7]%
\>[7]{}\texttt{(}\Varid{fmap}\;\texttt{(}\mathbin{\texttt{,}}\texttt{)}\;\texttt{(}\Varid{liftT}\;\Varid{k}\;\Varid{x}\texttt{)}\mathbin{\texttt{<*>}}\Varid{liftT}\;\Varid{k}\;\Varid{y}\texttt{)}{}\<[E]%
\\
\>[B]{}\equiv \mbox{\commentbegin  \ensuremath{\Varid{k}} is natural  \commentend}{}\<[E]%
\\
\>[B]{}\hsindent{7}{}\<[7]%
\>[7]{}\Varid{fmap}\;\Varid{uncurry}\;\texttt{(}\Varid{k}\;\Varid{h}\texttt{)}\mathbin{\texttt{\color{red}{:\$:}}}{}\<[E]%
\\
\>[B]{}\hsindent{7}{}\<[7]%
\>[7]{}\texttt{(}\Varid{fmap}\;\texttt{(}\mathbin{\texttt{,}}\texttt{)}\;\texttt{(}\Varid{liftT}\;\Varid{k}\;\Varid{x}\texttt{)}\mathbin{\texttt{<*>}}\Varid{liftT}\;\Varid{k}\;\Varid{y}\texttt{)}{}\<[E]%
\\
\>[B]{}\equiv \mbox{\commentbegin  definition of \ensuremath{\texttt{(}\mathbin{\texttt{<*>}}\texttt{)}}  \commentend}{}\<[E]%
\\
\>[B]{}\hsindent{7}{}\<[7]%
\>[7]{}\texttt{(}\Varid{k}\;\Varid{h}\mathbin{\texttt{\color{red}{:\$:}}}\Varid{liftT}\;\Varid{k}\;\Varid{x}\texttt{)}\mathbin{\texttt{<*>}}\Varid{liftT}\;\Varid{k}\;\Varid{y}{}\<[E]%
\\
\>[B]{}\equiv \mbox{\commentbegin  definition of \ensuremath{\Varid{liftT}}  \commentend}{}\<[E]%
\\
\>[B]{}\hsindent{7}{}\<[7]%
\>[7]{}\Varid{liftT}\;\Varid{k}\;\texttt{(}\Varid{h}\mathbin{\texttt{\color{red}{:\$:}}}\Varid{x}\texttt{)}\mathbin{\texttt{<*>}}\Varid{liftT}\;\Varid{k}\;\Varid{y}{}\<[E]%
\ColumnHook
\end{hscode}\resethooks

Now we need to verify that \ensuremath{\Varid{liftT}} satisfies the functor laws
\begin{align*}
  \ensuremath{\Varid{liftT}\;\Varid{id}} & \equiv \ensuremath{\Varid{id}} \\
  \ensuremath{\Varid{liftT}\;\texttt{(}\Varid{t}\hsdot{\circ }{\texttt{.}}\Varid{u}\texttt{)}} & \equiv \ensuremath{\Varid{liftT}\;\Varid{t}\hsdot{\circ }{\texttt{.}}\Varid{liftT}\;\Varid{u}}.
\end{align*}

The proof is a straightforward structural induction.
\end{proof}

We are going to need the following natural transformation
(which will be the unit of the adjunction \ref{adjunction}):

\begin{hscode}\SaveRestoreHook
\column{B}{@{}>{\hspre}l<{\hspost}@{}}%
\column{E}{@{}>{\hspre}l<{\hspost}@{}}%
\>[B]{}\Varid{one}\mathbin{::}\Conid{Functor}\;\Varid{f}\Rightarrow \Conid{Nat}\;\Varid{f}\;\texttt{(}\Conid{FreeA}\;\Varid{f}\texttt{)}{}\<[E]%
\\
\>[B]{}\Varid{one}\;\Varid{x}\mathrel{=}\Varid{fmap}\;\Varid{const}\;\Varid{x}\mathbin{\texttt{\color{red}{:\$:}}}\Conid{Pure}\;\texttt{(}\texttt{)}{}\<[E]%
\ColumnHook
\end{hscode}\resethooks

which embeds any functor \ensuremath{\Varid{f}} into \ensuremath{\Conid{FreeA}\;\Varid{f}} (we used a specialization of this
function for \ensuremath{\Conid{Option}} in section \ref{example:option_intro}).

\begin{lem}\label{lem:one_app}
$$
  \ensuremath{\Varid{g}\mathbin{\texttt{\color{red}{:\$:}}}\Varid{x}} \equiv \ensuremath{\Varid{one}\;\Varid{g}\mathbin{\texttt{<*>}}\Varid{x}}
$$
\end{lem}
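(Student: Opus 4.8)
The plan is to prove the equation by unfolding the right-hand side \ensuremath{\Varid{one}\;\Varid{g}\mathbin{\texttt{<*>}}\Varid{x}} step by step, simplifying with the \ensuremath{\Conid{Functor}} laws for \ensuremath{\Varid{f}} and the naturality equation \ref{naturality}, until it reduces to the canonical form \ensuremath{\Varid{g}\mathbin{\texttt{\color{red}{:\$:}}}\Varid{x}}. No induction is needed, since \ensuremath{\Varid{one}\;\Varid{g}} is built from a single use of \ensuremath{\texttt{(}\mathbin{\texttt{\color{red}{:\$:}}}\texttt{)}}, and the left-hand side is already in the right-parenthesised form.

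First I would unfold \ensuremath{\Varid{one}\;\Varid{g}\mathrel{=}\Varid{fmap}\;\Varid{const}\;\Varid{g}\mathbin{\texttt{\color{red}{:\$:}}}\Conid{Pure}\;\texttt{(}\texttt{)}} and apply the \ensuremath{\texttt{(}\mathbin{\texttt{\color{red}{:\$:}}}\texttt{)}} clause of the \ensuremath{\Conid{Applicative}} instance for \ensuremath{\Conid{FreeA}\;\Varid{f}}, obtaining \ensuremath{\Varid{fmap}\;\Varid{uncurry}\;\texttt{(}\Varid{fmap}\;\Varid{const}\;\Varid{g}\texttt{)}\mathbin{\texttt{\color{red}{:\$:}}}\texttt{(}\texttt{(}\mathbin{\texttt{,}}\texttt{)}\mathbin{\texttt{<\$>}}\Conid{Pure}\;\texttt{(}\texttt{)}\mathbin{\texttt{<*>}}\Varid{x}\texttt{)}}. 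I would then simplify the two components of \ensuremath{\texttt{(}\mathbin{\texttt{\color{red}{:\$:}}}\texttt{)}} separately: using that \ensuremath{\Conid{FreeA}\;\Varid{f}} is a functor and the \ensuremath{\Conid{Pure}} clause of \ensuremath{\texttt{(}\mathbin{\texttt{<*>}}\texttt{)}}, the right component becomes \ensuremath{\Varid{fmap}\;\texttt{(}\lambda \Varid{z}\to \texttt{(}\texttt{(}\texttt{)}\mathbin{\texttt{,}}\Varid{z}\texttt{)}\texttt{)}\;\Varid{x}}; using functoriality of \ensuremath{\Varid{f}}, the left component is \ensuremath{\Varid{fmap}\;\texttt{(}\Varid{uncurry}\hsdot{\circ }{\texttt{.}}\Varid{const}\texttt{)}\;\Varid{g}}, and \ensuremath{\Varid{uncurry}\hsdot{\circ }{\texttt{.}}\Varid{const}} reduces to \ensuremath{\lambda \Varid{p}\;\texttt{(}\anonymous \mathbin{\texttt{,}}\Varid{z}\texttt{)}\to \Varid{p}\;\Varid{z}}. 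So the right-hand side now reads \ensuremath{\Varid{fmap}\;\texttt{(}\lambda \Varid{p}\;\texttt{(}\anonymous \mathbin{\texttt{,}}\Varid{z}\texttt{)}\to \Varid{p}\;\Varid{z}\texttt{)}\;\Varid{g}\mathbin{\texttt{\color{red}{:\$:}}}\Varid{fmap}\;\texttt{(}\lambda \Varid{z}\to \texttt{(}\texttt{(}\texttt{)}\mathbin{\texttt{,}}\Varid{z}\texttt{)}\texttt{)}\;\Varid{x}}.

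The key step is to apply the naturality equation \ref{naturality}, read from right to left, with \ensuremath{\Varid{h}\mathrel{=}\lambda \Varid{z}\to \texttt{(}\texttt{(}\texttt{)}\mathbin{\texttt{,}}\Varid{z}\texttt{)}} (so the hidden existential type passes from \ensuremath{\texttt{(}\texttt{(}\texttt{)}\mathbin{\texttt{,}}\Varid{b}\texttt{)}} back to \ensuremath{\Varid{b}}); this moves the \ensuremath{\Varid{fmap}} off \ensuremath{\Varid{x}} and onto the left component, giving \ensuremath{\Varid{fmap}\;\texttt{(}\hsdot{\circ }{\texttt{.}}\Varid{h}\texttt{)}\;\texttt{(}\Varid{fmap}\;\texttt{(}\lambda \Varid{p}\;\texttt{(}\anonymous \mathbin{\texttt{,}}\Varid{z}\texttt{)}\to \Varid{p}\;\Varid{z}\texttt{)}\;\Varid{g}\texttt{)}\mathbin{\texttt{\color{red}{:\$:}}}\Varid{x}}. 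One more use of functoriality of \ensuremath{\Varid{f}} fuses the two \ensuremath{\Varid{fmap}}s into \ensuremath{\Varid{fmap}\;\texttt{(}\texttt{(}\hsdot{\circ }{\texttt{.}}\Varid{h}\texttt{)}\hsdot{\circ }{\texttt{.}}\texttt{(}\lambda \Varid{p}\;\texttt{(}\anonymous \mathbin{\texttt{,}}\Varid{z}\texttt{)}\to \Varid{p}\;\Varid{z}\texttt{)}\texttt{)}\;\Varid{g}}, and a short computation (unfolding \ensuremath{\texttt{(}\hsdot{\circ }{\texttt{.}}\texttt{)}} and \ensuremath{\Varid{h}}) shows the composite equals \ensuremath{\Varid{id}}. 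A final appeal to \ensuremath{\Varid{fmap}\;\Varid{id}\mathrel{=}\Varid{id}} collapses the expression to \ensuremath{\Varid{g}\mathbin{\texttt{\color{red}{:\$:}}}\Varid{x}}, completing the proof.

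I expect the main obstacle to be purely bookkeeping: correctly manipulating the higher-order combinators \ensuremath{\Varid{const}}, \ensuremath{\Varid{uncurry}} and the pairing function, and, above all, applying equation \ref{naturality} in the right direction with the correct choice of \ensuremath{\Varid{h}} and of the existentially quantified type. Once that step is set up, everything else is mechanical rewriting with the functor laws, and there is no need to invoke the earlier lemmas of this section.
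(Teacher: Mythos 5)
Your proof is correct and follows essentially the same route as the paper's: unfold \ensuremath{\Varid{one}} and \ensuremath{\texttt{(}\mathbin{\texttt{<*>}}\texttt{)}}, introduce the pairing map \ensuremath{\Varid{h}\;\Varid{z}\mathrel{=}\texttt{(}\texttt{(}\texttt{)}\mathbin{\texttt{,}}\Varid{z}\texttt{)}}, apply equation \ref{naturality} to shift the \ensuremath{\Varid{fmap}} from \ensuremath{\Varid{x}} onto \ensuremath{\Varid{g}}, and observe that \ensuremath{\texttt{(}\hsdot{\circ }{\texttt{.}}\Varid{h}\texttt{)}\hsdot{\circ }{\texttt{.}}\Varid{uncurry}\hsdot{\circ }{\texttt{.}}\Varid{const}} is the identity. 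The paper records exactly this composite as equation \ref{lem:uncurry_h} and otherwise performs the same rewriting, so no changes are needed.
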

\begin{proof}
Given

\begin{hscode}\SaveRestoreHook
\column{B}{@{}>{\hspre}l<{\hspost}@{}}%
\column{E}{@{}>{\hspre}l<{\hspost}@{}}%
\>[B]{}\Varid{h}\mathbin{::}\Varid{a}\to \texttt{(}\texttt{(}\texttt{)}\mathbin{\texttt{,}}\Varid{a}\texttt{)}{}\<[E]%
\\
\>[B]{}\Varid{h}\;\Varid{x}\mathrel{=}\texttt{(}\texttt{(}\texttt{)}\mathbin{\texttt{,}}\Varid{x}\texttt{)}{}\<[E]%
\ColumnHook
\end{hscode}\resethooks

it is easy to verify that:
\begin{equation}\label{lem:uncurry_h}
  \ensuremath{\texttt{(}\hsdot{\circ }{\texttt{.}}\Varid{h}\texttt{)}\hsdot{\circ }{\texttt{.}}\Varid{uncurry}\hsdot{\circ }{\texttt{.}}\Varid{const}} \equiv \ensuremath{\Varid{id}},
\end{equation}

so
\begin{hscode}\SaveRestoreHook
\column{B}{@{}>{\hspre}l<{\hspost}@{}}%
\column{5}{@{}>{\hspre}l<{\hspost}@{}}%
\column{E}{@{}>{\hspre}l<{\hspost}@{}}%
\>[5]{}\Varid{one}\;\Varid{g}\mathbin{\texttt{<*>}}\Varid{x}{}\<[E]%
\\
\>[B]{}\equiv \mbox{\commentbegin  definition of \ensuremath{\Varid{one}}  \commentend}{}\<[E]%
\\
\>[B]{}\hsindent{5}{}\<[5]%
\>[5]{}\texttt{(}\Varid{fmap}\;\Varid{const}\;\Varid{g}\mathbin{\texttt{\color{red}{:\$:}}}\Conid{Pure}\;\texttt{(}\texttt{)}\texttt{)}\mathbin{\texttt{<*>}}\Varid{x}{}\<[E]%
\\
\>[B]{}\equiv \mbox{\commentbegin  definition of \ensuremath{\texttt{(}\mathbin{\texttt{<*>}}\texttt{)}} and functor law for \ensuremath{\Varid{f}}  \commentend}{}\<[E]%
\\
\>[B]{}\hsindent{5}{}\<[5]%
\>[5]{}\Varid{fmap}\;\texttt{(}\Varid{uncurry}\hsdot{\circ }{\texttt{.}}\Varid{const}\texttt{)}\;\Varid{g}\mathbin{\texttt{\color{red}{:\$:}}}\Varid{fmap}\;\Varid{h}\;\Varid{x}{}\<[E]%
\\
\>[B]{}\equiv \mbox{\commentbegin  equation \ref{naturality} and functor law for \ensuremath{\Varid{f}}  \commentend}{}\<[E]%
\\
\>[B]{}\hsindent{5}{}\<[5]%
\>[5]{}\Varid{fmap}\;\texttt{(}\texttt{(}\hsdot{\circ }{\texttt{.}}\Varid{h}\texttt{)}\hsdot{\circ }{\texttt{.}}\Varid{uncurry}\hsdot{\circ }{\texttt{.}}\Varid{const}\texttt{)}\;\Varid{g}\mathbin{\texttt{\color{red}{:\$:}}}\Varid{x}{}\<[E]%
\\
\>[B]{}\equiv \mbox{\commentbegin  equation \ref{lem:uncurry_h}  \commentend}{}\<[E]%
\\
\>[B]{}\hsindent{5}{}\<[5]%
\>[5]{}\Varid{g}\mathbin{\texttt{\color{red}{:\$:}}}\Varid{x}{}\<[E]%
\ColumnHook
\end{hscode}\resethooks

\end{proof}

\begin{prop}
The \ensuremath{\Conid{FreeA}} functor is left adjoint to the forgetful functor $\mathcal{A}
\to \mathcal{F}$. Graphically:
\begin{align}\label{adjunction}
Hom_{\mathcal{F}}(\ensuremath{\Conid{FreeA}\;\Varid{f}}, \ensuremath{\Varid{g}}) \underset{\underset{\ensuremath{\Varid{raise}}}{\longleftarrow}}{\overset{\overset{\ensuremath{\Varid{lower}}}{\longrightarrow}}{\cong}} Hom_{\mathcal{A}}(\ensuremath{\Varid{f}}, \ensuremath{\Varid{g}})
\end{align}
\end{prop}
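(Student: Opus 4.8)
The plan is to make the adjunction \ref{adjunction} precise by exhibiting the bijection explicitly, checking it is well defined, proving the two composites are identities, and finally noting naturality in both variables. For the map \ensuremath{\Varid{lower}} I would take an applicative natural transformation \ensuremath{\Varid{t}\mathbin{::}\Conid{AppNat}\;\texttt{(}\Conid{FreeA}\;\Varid{f}\texttt{)}\;\Varid{g}} and precompose with the unit just defined, \ensuremath{\Varid{lower}\;\Varid{t}\mathrel{=}\Varid{t}\circ \Varid{one}}; as a composite of polymorphic functions it is automatically a natural transformation \ensuremath{\Varid{f}\to \Varid{g}}. For the map \ensuremath{\Varid{raise}} I would take \ensuremath{\Varid{k}\mathbin{::}\Conid{Nat}\;\Varid{f}\;\Varid{g}} and define it by structural recursion on \ensuremath{\Conid{FreeA}\;\Varid{f}}: \ensuremath{\Varid{raise}\;\Varid{k}\;\texttt{(}\Conid{Pure}\;\Varid{x}\texttt{)}\mathrel{=}\Varid{pure}\;\Varid{x}} and \ensuremath{\Varid{raise}\;\Varid{k}\;\texttt{(}\Varid{h}\mathbin{\texttt{\color{red}{:\$:}}}\Varid{x}\texttt{)}\mathrel{=}\Varid{k}\;\Varid{h}\mathbin{\texttt{<*>}}\Varid{raise}\;\Varid{k}\;\Varid{x}}, using the \ensuremath{\Conid{Applicative}} structure of the target \ensuremath{\Varid{g}}. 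Equivalently, \ensuremath{\Varid{raise}\;\Varid{k}} is the composite of \ensuremath{\Varid{liftT}\;\Varid{k}} (Proposition \ref{prop:FreeAFunctor}) with the evaluation transformation \ensuremath{\Conid{FreeA}\;\Varid{g}\to \Varid{g}}, but the direct definition is more convenient.

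The first real piece of work is to show that \ensuremath{\Varid{raise}\;\Varid{k}} is genuinely an \emph{applicative} natural transformation. Law \ref{eq:applicative1} is immediate from the first defining clause. Law \ref{eq:applicative2}, i.e. \ensuremath{\Varid{raise}\;\Varid{k}\;\texttt{(}\Varid{h}\mathbin{\texttt{<*>}}\Varid{x}\texttt{)}\equiv \Varid{raise}\;\Varid{k}\;\Varid{h}\mathbin{\texttt{<*>}}\Varid{raise}\;\Varid{k}\;\Varid{x}}, I would prove by induction on the size of the first argument (the measure justified in section \ref{sec:totality}), following the same pattern as the computation already carried out for \ensuremath{\Varid{liftT}} in Proposition \ref{prop:FreeAFunctor}, but now discharging the algebraic steps using the applicative laws \emph{of} \ensuremath{\Varid{g}} — chiefly the composition law \ref{app:comp} and the general-applicative version of Lemma \ref{lem:pure} (which follows from the applicative laws by a short calculation using homomorphism and composition). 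This is the step I expect to be the main obstacle, since it is the only place that requires a genuine induction together with nontrivial law-chasing; factoring \ensuremath{\Varid{raise}\;\Varid{k}} through \ensuremath{\Varid{liftT}} localises the difficulty to showing that the single evaluation map \ensuremath{\Conid{FreeA}\;\Varid{g}\to \Varid{g}} preserves \ensuremath{\texttt{(}\mathbin{\texttt{<*>}}\texttt{)}}, but does not avoid the induction.

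Next I would check that \ensuremath{\Varid{lower}} and \ensuremath{\Varid{raise}} are mutually inverse. For \ensuremath{\Varid{raise}\;\texttt{(}\Varid{lower}\;\Varid{t}\texttt{)}\equiv \Varid{t}} I would induct on \ensuremath{\Varid{u}\mathbin{::}\Conid{FreeA}\;\Varid{f}\;\Varid{a}}: the \ensuremath{\Conid{Pure}} case is law \ref{eq:applicative1} for \ensuremath{\Varid{t}}, and for \ensuremath{\Varid{u}\mathrel{=}\Varid{h}\mathbin{\texttt{\color{red}{:\$:}}}\Varid{w}} one has \ensuremath{\Varid{raise}\;\texttt{(}\Varid{lower}\;\Varid{t}\texttt{)}\;\texttt{(}\Varid{h}\mathbin{\texttt{\color{red}{:\$:}}}\Varid{w}\texttt{)}\mathrel{=}\Varid{t}\;\texttt{(}\Varid{one}\;\Varid{h}\texttt{)}\mathbin{\texttt{<*>}}\Varid{t}\;\Varid{w}\equiv \Varid{t}\;\texttt{(}\Varid{one}\;\Varid{h}\mathbin{\texttt{<*>}}\Varid{w}\texttt{)}\equiv \Varid{t}\;\texttt{(}\Varid{h}\mathbin{\texttt{\color{red}{:\$:}}}\Varid{w}\texttt{)}}, using the inductive hypothesis, law \ref{eq:applicative2} for \ensuremath{\Varid{t}}, and Lemma \ref{lem:one_app}. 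For the other composite, \ensuremath{\Varid{lower}\;\texttt{(}\Varid{raise}\;\Varid{k}\texttt{)}\;\Varid{g}\mathrel{=}\Varid{raise}\;\Varid{k}\;\texttt{(}\Varid{one}\;\Varid{g}\texttt{)}\mathrel{=}\Varid{k}\;\texttt{(}\Varid{fmap}\;\Varid{const}\;\Varid{g}\texttt{)}\mathbin{\texttt{<*>}}\Varid{pure}\;\texttt{(}\texttt{)}}, and then naturality of \ensuremath{\Varid{k}} rewrites \ensuremath{\Varid{k}\;\texttt{(}\Varid{fmap}\;\Varid{const}\;\Varid{g}\texttt{)}} to \ensuremath{\Varid{fmap}\;\Varid{const}\;\texttt{(}\Varid{k}\;\Varid{g}\texttt{)}}, after which the interchange law \ref{app:interchange} in \ensuremath{\Varid{g}} and a short calculation with \ensuremath{\Varid{const}} collapse the expression to \ensuremath{\Varid{k}\;\Varid{g}}.

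Finally, naturality of the bijection \ref{adjunction} in both arguments is routine. In \ensuremath{\Varid{g}} it is immediate for \ensuremath{\Varid{lower}} (post-composition commutes with precomposition by \ensuremath{\Varid{one}}) and a short induction on \ensuremath{\Conid{FreeA}\;\Varid{f}} for \ensuremath{\Varid{raise}} (post-composition by an applicative natural transformation commutes with \ensuremath{\Varid{pure}} and \ensuremath{\texttt{(}\mathbin{\texttt{<*>}}\texttt{)}}). In \ensuremath{\Varid{f}} it reduces, via \ensuremath{\Varid{lower}}, to the identity \ensuremath{\Varid{liftT}\;\psi \circ \Varid{one}\equiv \Varid{one}\circ \psi } for a natural transformation \ensuremath{\psi }, which is immediate from the definitions of \ensuremath{\Varid{liftT}} and \ensuremath{\Varid{one}} and naturality of \ensuremath{\psi }. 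Assembling these facts establishes the adjunction, and by the isomorphism of section \ref{sec:isomorphism} it carries over verbatim to \ensuremath{\Conid{FreeAL}}.
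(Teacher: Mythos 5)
Your proposal is correct and follows essentially the same route as the paper: the same definitions of \ensuremath{\Varid{raise}} (by structural recursion) and \ensuremath{\Varid{lower}} (precomposition with \ensuremath{\Varid{one}}), the same appeal to the \ensuremath{\Varid{liftT}}-style induction for the applicative-naturality of \ensuremath{\Varid{raise}\;\Varid{k}}, the same induction using law \ref{eq:applicative2} and Lemma \ref{lem:one_app} for \ensuremath{\Varid{raise}\;\texttt{(}\Varid{lower}\;\Varid{t}\texttt{)}\equiv \Varid{t}}, and the same naturality-plus-applicative-laws calculation for the other composite. You even correctly identify the step the paper mislabels as ``\ensuremath{\Varid{t}} is natural'' as an instance of the interchange law \ref{app:interchange}.
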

\begin{proof}
Given a functor \ensuremath{\Varid{f}} and an applicative functor \ensuremath{\Varid{g}}, we define a natural
bijection between \ensuremath{\Conid{Nat}\;\Varid{f}\;\Varid{g}} and \ensuremath{\Conid{AppNat}\;\texttt{(}\Conid{FreeA}\;\Varid{f}\texttt{)}\;\Varid{g}} as such:

\begin{hscode}\SaveRestoreHook
\column{B}{@{}>{\hspre}l<{\hspost}@{}}%
\column{8}{@{}>{\hspre}l<{\hspost}@{}}%
\column{E}{@{}>{\hspre}l<{\hspost}@{}}%
\>[B]{}\Varid{raise}{}\<[8]%
\>[8]{}\mathbin{::}\texttt{(}\Conid{Functor}\;\Varid{f}\mathbin{\texttt{,}}\Conid{Applicative}\;\Varid{g}\texttt{)}{}\<[E]%
\\
\>[8]{}\Rightarrow \Conid{Nat}\;\Varid{f}\;\Varid{g}{}\<[E]%
\\
\>[8]{}\to \Conid{AppNat}\;\texttt{(}\Conid{FreeA}\;\Varid{f}\texttt{)}\;\Varid{g}{}\<[E]%
\\
\>[B]{}\Varid{raise}\;\anonymous \;\texttt{(}\Conid{Pure}\;\Varid{x}\texttt{)}\mathrel{=}\Varid{pure}\;\Varid{x}{}\<[E]%
\\
\>[B]{}\Varid{raise}\;\Varid{k}\;\texttt{(}\Varid{g}\mathbin{\texttt{\color{red}{:\$:}}}\Varid{x}\texttt{)}\mathrel{=}\Varid{k}\;\Varid{g}\mathbin{\texttt{<*>}}\Varid{raise}\;\Varid{k}\;\Varid{x}{}\<[E]%
\\[\blanklineskip]%
\>[B]{}\Varid{lower}{}\<[8]%
\>[8]{}\mathbin{::}\texttt{(}\Conid{Functor}\;\Varid{f}\mathbin{\texttt{,}}\Conid{Applicative}\;\Varid{g}\texttt{)}{}\<[E]%
\\
\>[8]{}\Rightarrow \Conid{AppNat}\;\texttt{(}\Conid{FreeA}\;\Varid{f}\texttt{)}\;\Varid{g}{}\<[E]%
\\
\>[8]{}\to \Conid{Nat}\;\Varid{f}\;\Varid{g}{}\<[E]%
\\
\>[B]{}\Varid{lower}\;\Varid{k}\mathrel{=}\Varid{k}\hsdot{\circ }{\texttt{.}}\Varid{one}{}\<[E]%
\ColumnHook
\end{hscode}\resethooks

A routine verification shows that \ensuremath{\Varid{raise}} and \ensuremath{\Varid{lower}} are natural in
\ensuremath{\Varid{f}} and \ensuremath{\Varid{g}}. The proof that \ensuremath{\Varid{raise}\;\Varid{k}} satisfies the applicative natural 
transformation laws \ref{eq:applicative1} and \ref{eq:applicative2} is a
straightforward induction having the same structure as the proof that \ensuremath{\Varid{liftT}\;\Varid{k}}
satisfies these laws (proposition \ref{prop:FreeAFunctor}).
To show that \ensuremath{\Varid{f}} and \ensuremath{\Varid{g}} are inverses of each other,
we reason by induction and calculate in one direction:

\begin{hscode}\SaveRestoreHook
\column{B}{@{}>{\hspre}l<{\hspost}@{}}%
\column{5}{@{}>{\hspre}l<{\hspost}@{}}%
\column{E}{@{}>{\hspre}l<{\hspost}@{}}%
\>[5]{}\Varid{raise}\;\texttt{(}\Varid{lower}\;\Varid{t}\texttt{)}\;\texttt{(}\Conid{Pure}\;\Varid{x}\texttt{)}{}\<[E]%
\\
\>[B]{}\equiv \mbox{\commentbegin  definition of \ensuremath{\Varid{raise}}  \commentend}{}\<[E]%
\\
\>[B]{}\hsindent{5}{}\<[5]%
\>[5]{}\Varid{pure}\;\Varid{x}{}\<[E]%
\\
\>[B]{}\equiv \mbox{\commentbegin  \ensuremath{\Varid{t}} is an applicative natural transformation  \commentend}{}\<[E]%
\\
\>[B]{}\hsindent{5}{}\<[5]%
\>[5]{}\Varid{t}\;\texttt{(}\Varid{pure}\;\Varid{x}\texttt{)}{}\<[E]%
\\
\>[B]{}\equiv \mbox{\commentbegin  definition of \ensuremath{\Varid{pure}}  \commentend}{}\<[E]%
\\
\>[B]{}\hsindent{5}{}\<[5]%
\>[5]{}\Varid{t}\;\texttt{(}\Conid{Pure}\;\Varid{x}\texttt{)}{}\<[E]%
\ColumnHook
\end{hscode}\resethooks

\begin{hscode}\SaveRestoreHook
\column{B}{@{}>{\hspre}l<{\hspost}@{}}%
\column{5}{@{}>{\hspre}l<{\hspost}@{}}%
\column{E}{@{}>{\hspre}l<{\hspost}@{}}%
\>[5]{}\Varid{raise}\;\texttt{(}\Varid{lower}\;\Varid{t}\texttt{)}\;\texttt{(}\Varid{g}\mathbin{\texttt{\color{red}{:\$:}}}\Varid{x}\texttt{)}{}\<[E]%
\\
\>[B]{}\equiv \mbox{\commentbegin  definition of \ensuremath{\Varid{raise}}  \commentend}{}\<[E]%
\\
\>[B]{}\hsindent{5}{}\<[5]%
\>[5]{}\Varid{lower}\;\Varid{t}\;\Varid{g}\mathbin{\texttt{<*>}}\Varid{raise}\;\texttt{(}\Varid{lower}\;\Varid{t}\texttt{)}\;\Varid{x}{}\<[E]%
\\
\>[B]{}\equiv \mbox{\commentbegin  induction hypothesis  \commentend}{}\<[E]%
\\
\>[B]{}\hsindent{5}{}\<[5]%
\>[5]{}\Varid{lower}\;\Varid{t}\;\Varid{g}\mathbin{\texttt{<*>}}\Varid{t}\;\Varid{x}{}\<[E]%
\\
\>[B]{}\equiv \mbox{\commentbegin  definition of \ensuremath{\Varid{lower}}  \commentend}{}\<[E]%
\\
\>[B]{}\hsindent{5}{}\<[5]%
\>[5]{}\Varid{t}\;\texttt{(}\Varid{one}\;\Varid{g}\texttt{)}\mathbin{\texttt{<*>}}\Varid{t}\;\Varid{x}{}\<[E]%
\\
\>[B]{}\equiv \mbox{\commentbegin  \ensuremath{\Varid{t}} is an applicative natural transformation  \commentend}{}\<[E]%
\\
\>[B]{}\hsindent{5}{}\<[5]%
\>[5]{}\Varid{t}\;\texttt{(}\Varid{one}\;\Varid{g}\mathbin{\texttt{<*>}}\Varid{x}\texttt{)}{}\<[E]%
\\
\>[B]{}\equiv \mbox{\commentbegin  lemma \ref{lem:one_app}  \commentend}{}\<[E]%
\\
\>[B]{}\hsindent{5}{}\<[5]%
\>[5]{}\Varid{t}\;\texttt{(}\Varid{g}\mathbin{\texttt{\color{red}{:\$:}}}\Varid{x}\texttt{)}{}\<[E]%
\ColumnHook
\end{hscode}\resethooks

The other direction:

\begin{hscode}\SaveRestoreHook
\column{B}{@{}>{\hspre}l<{\hspost}@{}}%
\column{5}{@{}>{\hspre}l<{\hspost}@{}}%
\column{E}{@{}>{\hspre}l<{\hspost}@{}}%
\>[5]{}\Varid{lower}\;\texttt{(}\Varid{raise}\;\Varid{t}\texttt{)}\;\Varid{x}{}\<[E]%
\\
\>[B]{}\equiv \mbox{\commentbegin  definition of \ensuremath{\Varid{lower}}  \commentend}{}\<[E]%
\\
\>[B]{}\hsindent{5}{}\<[5]%
\>[5]{}\Varid{raise}\;\Varid{t}\;\texttt{(}\Varid{one}\;\Varid{x}\texttt{)}{}\<[E]%
\\
\>[B]{}\equiv \mbox{\commentbegin  definition of \ensuremath{\Varid{one}}  \commentend}{}\<[E]%
\\
\>[B]{}\hsindent{5}{}\<[5]%
\>[5]{}\Varid{raise}\;\Varid{t}\;\texttt{(}\Varid{fmap}\;\Varid{const}\;\Varid{x}\mathbin{\texttt{\color{red}{:\$:}}}\Conid{Pure}\;\texttt{(}\texttt{)}\texttt{)}{}\<[E]%
\\
\>[B]{}\equiv \mbox{\commentbegin  definition of \ensuremath{\Varid{raise}}  \commentend}{}\<[E]%
\\
\>[B]{}\hsindent{5}{}\<[5]%
\>[5]{}\Varid{t}\;\texttt{(}\Varid{fmap}\;\Varid{const}\;\Varid{x}\texttt{)}\mathbin{\texttt{<*>}}\Varid{pure}\;\texttt{(}\texttt{)}{}\<[E]%
\\
\>[B]{}\equiv \mbox{\commentbegin  \ensuremath{\Varid{t}} is natural  \commentend}{}\<[E]%
\\
\>[B]{}\hsindent{5}{}\<[5]%
\>[5]{}\Varid{fmap}\;\Varid{const}\;\texttt{(}\Varid{t}\;\Varid{x}\texttt{)}\mathbin{\texttt{<*>}}\Varid{pure}\;\texttt{(}\texttt{)}{}\<[E]%
\\
\>[B]{}\equiv \mbox{\commentbegin  \ensuremath{\Varid{fmap}\;\Varid{h}} $\equiv$ \ensuremath{\texttt{(}\texttt{(}\Varid{pure}\;\Varid{h}\texttt{)}\mathbin{\texttt{<*>}}\texttt{)}} in an applicative functor  \commentend}{}\<[E]%
\\
\>[B]{}\hsindent{5}{}\<[5]%
\>[5]{}\Varid{pure}\;\Varid{const}\mathbin{\texttt{<*>}}\Varid{t}\;\Varid{x}\mathbin{\texttt{<*>}}\Varid{pure}\;\texttt{(}\texttt{)}{}\<[E]%
\\
\>[B]{}\equiv \mbox{\commentbegin  \ensuremath{\Varid{t}} is natural  \commentend}{}\<[E]%
\\
\>[B]{}\hsindent{5}{}\<[5]%
\>[5]{}\Varid{pure}\;\texttt{(}\mathbin{\texttt{\$}}\texttt{(}\texttt{)}\texttt{)}\mathbin{\texttt{<*>}}\texttt{(}\Varid{pure}\;\Varid{const}\mathbin{\texttt{<*>}}\Varid{t}\;\Varid{x}\texttt{)}{}\<[E]%
\\
\>[B]{}\equiv \mbox{\commentbegin  applicative law \ref{app:comp}  \commentend}{}\<[E]%
\\
\>[B]{}\hsindent{5}{}\<[5]%
\>[5]{}\Varid{pure}\;\texttt{(}\hsdot{\circ }{\texttt{.}}\texttt{)}\mathbin{\texttt{<*>}}\Varid{pure}\;\texttt{(}\mathbin{\texttt{\$}}\texttt{(}\texttt{)}\texttt{)}\mathbin{\texttt{<*>}}\Varid{pure}\;\Varid{const}\mathbin{\texttt{<*>}}\Varid{t}\;\Varid{x}{}\<[E]%
\\
\>[B]{}\equiv \mbox{\commentbegin  applicative law \ref{app:hom} applied twice  \commentend}{}\<[E]%
\\
\>[B]{}\hsindent{5}{}\<[5]%
\>[5]{}\Varid{pure}\;\Varid{id}\mathbin{\texttt{<*>}}\Varid{t}\;\Varid{x}{}\<[E]%
\\
\>[B]{}\equiv \mbox{\commentbegin  applicative law \ref{app:id}  \commentend}{}\<[E]%
\\
\>[B]{}\hsindent{5}{}\<[5]%
\>[5]{}\Varid{t}\;\Varid{x}{}\<[E]%
\ColumnHook
\end{hscode}\resethooks
\end{proof}

\subsection{Example: option parsers (continued)}\label{example:option_raise}

With the help of the adjunction defined above by \ensuremath{\Varid{raise}} and \ensuremath{\Varid{lower}}
we are able to define some useful functions. In the case of
command-line option parsers, for example, it can be used for
computing the global default value of a parser:

\begin{hscode}\SaveRestoreHook
\column{B}{@{}>{\hspre}l<{\hspost}@{}}%
\column{E}{@{}>{\hspre}l<{\hspost}@{}}%
\>[B]{}\Varid{parserDefault}\mathbin{::}\Conid{FreeA}\;\Conid{Option}\;\Varid{a}\to \Conid{Maybe}\;\Varid{a}{}\<[E]%
\\
\>[B]{}\Varid{parserDefault}\mathrel{=}\Varid{raise}\;\Varid{optDefault}{}\<[E]%
\ColumnHook
\end{hscode}\resethooks

or for extracting the list of all the options in a parser:

\begin{hscode}\SaveRestoreHook
\column{B}{@{}>{\hspre}l<{\hspost}@{}}%
\column{3}{@{}>{\hspre}l<{\hspost}@{}}%
\column{5}{@{}>{\hspre}l<{\hspost}@{}}%
\column{E}{@{}>{\hspre}l<{\hspost}@{}}%
\>[B]{}\Varid{allOptions}\mathbin{::}\Conid{FreeA}\;\Conid{Option}\;\Varid{a}\to [\mskip1.5mu \Conid{String}\mskip1.5mu]{}\<[E]%
\\
\>[B]{}\Varid{allOptions}\mathrel{=}\Varid{getConst}\hsdot{\circ }{\texttt{.}}\Varid{raise}\;\Varid{f}{}\<[E]%
\\
\>[B]{}\hsindent{3}{}\<[3]%
\>[3]{}\texttt{\color{blue}\textbf{where}}{}\<[E]%
\\
\>[3]{}\hsindent{2}{}\<[5]%
\>[5]{}\Varid{f}\;\Varid{opt}\mathrel{=}\Conid{Const}\;[\mskip1.5mu \Varid{optName}\;\Varid{opt}\mskip1.5mu]{}\<[E]%
\ColumnHook
\end{hscode}\resethooks

\ensuremath{\Varid{allOptions}} works by first defining a function that takes an
option and returns a one-element list with the name of the option, and then
lifting it to the \ensuremath{\Conid{Const}} applicative functor.

The \ensuremath{\Varid{raise}} function can be thought of as a way to define a ``semantics'' for
the whole syntax of the DSL corresponding to \ensuremath{\Conid{FreeA}\;\Varid{f}}, given one for just the
individual atomic actions, expressed as a natural transformation from the
functor \ensuremath{\Varid{f}} to any applicative functor \ensuremath{\Varid{g}}.

When defining such a semantics using \ensuremath{\Varid{raise}}, the resulting function is
automatically an applicative natural transformation.  In some circumstances,
however, it is more convenient to define a function by pattern matching directly
on the constructors of \ensuremath{\Conid{FreeA}\;\Varid{f}}, like when the target does not have an obvious
applicative functor structure that makes the desired function an applicative
natural transformation.

For example, we can write a function that runs an applicative option parser
over a list of command-line arguments, accepting them in any order:

\begin{hscode}\SaveRestoreHook
\column{B}{@{}>{\hspre}l<{\hspost}@{}}%
\column{3}{@{}>{\hspre}l<{\hspost}@{}}%
\column{11}{@{}>{\hspre}l<{\hspost}@{}}%
\column{E}{@{}>{\hspre}l<{\hspost}@{}}%
\>[B]{}\Varid{matchOpt}{}\<[11]%
\>[11]{}\mathbin{::}\Conid{String}\to \Conid{String}{}\<[E]%
\\
\>[11]{}\to \Conid{FreeA}\;\Conid{Option}\;\Varid{a}{}\<[E]%
\\
\>[11]{}\to \Conid{Maybe}\;\texttt{(}\Conid{FreeA}\;\Conid{Option}\;\Varid{a}\texttt{)}{}\<[E]%
\\
\>[B]{}\Varid{matchOpt}\;\anonymous \;\anonymous \;\texttt{(}\Conid{Pure}\;\anonymous \texttt{)}\mathrel{=}\Conid{Nothing}{}\<[E]%
\\
\>[B]{}\Varid{matchOpt}\;\Varid{opt}\;\Varid{value}\;\texttt{(}\Varid{g}\mathbin{\texttt{\color{red}{:\$:}}}\Varid{x}\texttt{)}{}\<[E]%
\\
\>[B]{}\hsindent{3}{}\<[3]%
\>[3]{}\mid \Varid{opt}\equiv \text{\tt '-'}\mathbin{:}\text{\tt '-'}\mathbin{:}\Varid{optName}\;\Varid{g}{}\<[E]%
\\
\>[B]{}\hsindent{3}{}\<[3]%
\>[3]{}\mathrel{=}\Varid{fmap}\;\texttt{(}\mathbin{\texttt{<\$>}}\Varid{x}\texttt{)}\;\texttt{(}\Varid{optReader}\;\Varid{g}\;\Varid{value}\texttt{)}{}\<[E]%
\\
\>[B]{}\hsindent{3}{}\<[3]%
\>[3]{}\mid \Varid{otherwise}{}\<[E]%
\\
\>[B]{}\hsindent{3}{}\<[3]%
\>[3]{}\mathrel{=}\Varid{fmap}\;\texttt{(}\Varid{g}\mathbin{\texttt{\color{red}{:\$:}}}\texttt{)}\;\texttt{(}\Varid{matchOpt}\;\Varid{opt}\;\Varid{value}\;\Varid{x}\texttt{)}{}\<[E]%
\ColumnHook
\end{hscode}\resethooks

The \ensuremath{\Varid{matchOpt}} function looks for options in the parser which match the given
command-line argument, and, if successful, returns a modified parser where the
option has been replaced by a pure value.

Clearly, \ensuremath{\Varid{matchOpt}\;\Varid{opt}\;\Varid{value}} is not applicative, since, for instance, equation
\ref{eq:applicative1} is not satisfied.

\begin{hscode}\SaveRestoreHook
\column{B}{@{}>{\hspre}l<{\hspost}@{}}%
\column{3}{@{}>{\hspre}l<{\hspost}@{}}%
\column{5}{@{}>{\hspre}l<{\hspost}@{}}%
\column{12}{@{}>{\hspre}l<{\hspost}@{}}%
\column{E}{@{}>{\hspre}l<{\hspost}@{}}%
\>[B]{}\Varid{runParser}{}\<[12]%
\>[12]{}\mathbin{::}\Conid{FreeA}\;\Conid{Option}\;\Varid{a}{}\<[E]%
\\
\>[12]{}\to [\mskip1.5mu \Conid{String}\mskip1.5mu]{}\<[E]%
\\
\>[12]{}\to \Conid{Maybe}\;\Varid{a}{}\<[E]%
\\
\>[B]{}\Varid{runParser}\;\Varid{p}\;\texttt{(}\Varid{opt}\mathbin{:}\Varid{value}\mathbin{:}\Varid{args}\texttt{)}\mathrel{=}{}\<[E]%
\\
\>[B]{}\hsindent{3}{}\<[3]%
\>[3]{}\texttt{\color{blue}\textbf{case}}\;\Varid{matchOpt}\;\Varid{opt}\;\Varid{value}\;\Varid{p}\;\texttt{\color{blue}\textbf{of}}{}\<[E]%
\\
\>[3]{}\hsindent{2}{}\<[5]%
\>[5]{}\Conid{Nothing}\to \Conid{Nothing}{}\<[E]%
\\
\>[3]{}\hsindent{2}{}\<[5]%
\>[5]{}\Conid{Just}\;\Varid{p'}\to \Varid{runParser}\;\Varid{p'}\;\Varid{args}{}\<[E]%
\\
\>[B]{}\Varid{runParser}\;\Varid{p}\;[\mskip1.5mu \mskip1.5mu]\mathrel{=}\Varid{parserDefault}\;\Varid{p}{}\<[E]%
\\
\>[B]{}\Varid{runParser}\;\anonymous \;\anonymous \mathrel{=}\Conid{Nothing}{}\<[E]%
\ColumnHook
\end{hscode}\resethooks

Finally, \ensuremath{\Varid{runParser}} calls \ensuremath{\Varid{matchOpt}} with successive pairs of arguments, until
no arguments remain, at which point it uses the default values of the remaining
options to construct a result.

\section{Totality}\label{sec:totality}

All the proofs in this paper apply to a total fragment of Haskell, and
completely ignore the presence of bottom.  The Haskell subset we use can be
given a semantics in any locally presentable cartesian closed category.

In fact, if we assume that all the functors used throughout the paper are
accessible, all our inductive definitions can be regarded as initial algebras of
accessible functors.

For example, to realise \ensuremath{\Conid{FreeA}\;\Varid{f}}, assume \ensuremath{\Varid{f}} is $\kappa$-accessible for some
regular cardinal $\kappa$.  Then define a functor:
\[
A : \mathsf{Func}_\kappa(\mathcal{C}, \mathcal{C}) \to
    \mathsf{Func}_\kappa(\mathcal{C}, \mathcal{C}),
\]
where $\mathsf{Func}_\kappa$ is the category of $\kappa$-accessible endofunctors
of $\mathcal{C}$, which is itself locally presentable by proposition
\ref{prop:func_lp}.

The inductive definition of \ensuremath{\Conid{FreeA}\;\Varid{f}} above can then be regarded as the initial
algebra of $A$, given by:
\begin{equation} \label{eq:higher_functor}
(A G) a = a + \int^{b : \mathcal{C}} F [b, a] \times G b,
\end{equation}
where $[-,-]$ denotes the internal hom (exponential) in $\C$. Since $F$ and $G$
are locally presentable, and $\mathcal{C}$ is cocomplete, the coend exists by
lemma \ref{lem:day_alt}, and $A G$ is $\kappa$-accessible by lemma
\ref{lem:day_accessible}, provided $\kappa$ is large enough.

Furthermore, the functor $A$ itself is accessible by proposition
\ref{prop:day_op_accessible}, hence it has an initial algebra. Equation
\ref{naturality} is then a trivial consequence of this definition.

As for function definitions, most use primitive recursion, so they can be
realised by using the universal property of the initial algebra directly.

One exception is the definition of \ensuremath{\texttt{(}\mathbin{\texttt{<*>}}\texttt{)}}:

\begin{hscode}\SaveRestoreHook
\column{B}{@{}>{\hspre}l<{\hspost}@{}}%
\column{3}{@{}>{\hspre}l<{\hspost}@{}}%
\column{20}{@{}>{\hspre}l<{\hspost}@{}}%
\column{E}{@{}>{\hspre}l<{\hspost}@{}}%
\>[3]{}\texttt{(}\Varid{h}\mathbin{\texttt{\color{red}{:\$:}}}\Varid{x}\texttt{)}\mathbin{\texttt{<*>}}\Varid{y}{}\<[20]%
\>[20]{}\mathrel{=}\Varid{fmap}\;\Varid{uncurry}\;\Varid{h}\mathbin{\texttt{\color{red}{:\$:}}}\texttt{(}\texttt{(}\mathbin{\texttt{,}}\texttt{)}\mathbin{\texttt{<\$>}}\Varid{x}\mathbin{\texttt{<*>}}\Varid{y}\texttt{)}{}\<[E]%
\ColumnHook
\end{hscode}\resethooks

which contains a recursive call where the first argument, namely \ensuremath{\texttt{(}\mathbin{\texttt{,}}\texttt{)}\mathbin{\texttt{<\$>}}\Varid{x}}, is
not structurally smaller than the original one (\ensuremath{\Varid{h}\mathbin{\texttt{\color{red}{:\$:}}}\Varid{x}}).

To prove that this function is nevertheless well defined, we introduce a notion
of \emph{size} for values of type \ensuremath{\Conid{FreeA}\;\Varid{f}\;\Varid{a}}:

\begin{hscode}\SaveRestoreHook
\column{B}{@{}>{\hspre}l<{\hspost}@{}}%
\column{E}{@{}>{\hspre}l<{\hspost}@{}}%
\>[B]{}\Varid{size}\mathbin{::}\Conid{FreeA}\;\Varid{f}\;\Varid{a}\to \mathbb{N}{}\<[E]%
\\
\>[B]{}\Varid{size}\;\texttt{(}\Conid{Pure}\;\anonymous \texttt{)}\mathrel{=}\mathrm{0}{}\<[E]%
\\
\>[B]{}\Varid{size}\;\texttt{(}\anonymous \mathbin{\texttt{\color{red}{:\$:}}}\Varid{x}\texttt{)}\mathrel{=}\mathrm{1}\mathbin{+}\Varid{size}\;\Varid{x}{}\<[E]%
\ColumnHook
\end{hscode}\resethooks

To conclude that the definition of \ensuremath{\texttt{(}\mathbin{\texttt{<*>}}\texttt{)}} can be made sense of in our target
category, we just need to show that the size of the argument in the recursive
call is smaller than the size of the original argument, which is an immediate
consequence of the following lemma.

\begin{lem}
For any function \ensuremath{\Varid{f}\mathbin{::}\Varid{a}\to \Varid{b}} and \ensuremath{\Varid{u}\mathbin{::}\Conid{FreeA}\;\Varid{f}\;\Varid{a}},
\begin{equation*}
\ensuremath{\Varid{size}\;\texttt{(}\Varid{fmap}\;\Varid{f}\;\Varid{u}\texttt{)}} \equiv \ensuremath{\Varid{size}\;\Varid{u}}
\end{equation*}
\end{lem}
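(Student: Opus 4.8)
The plan is to prove this by structural induction on \ensuremath{\Varid{u}} --- though, as will become apparent, no inductive hypothesis is actually needed, so a plain case analysis on the outermost constructor of \ensuremath{\Varid{u}} suffices. The observation that makes everything routine is that \ensuremath{\Varid{fmap}} for \ensuremath{\Conid{FreeA}\;\Varid{f}} rewrites only the head component of a value built with the infix constructor \ensuremath{\texttt{(}\mathbin{\texttt{\color{red}{:\$:}}}\texttt{)}} and leaves its \ensuremath{\Conid{FreeA}\;\Varid{f}}-valued tail untouched, whereas \ensuremath{\Varid{size}} counts precisely the number of \ensuremath{\texttt{(}\mathbin{\texttt{\color{red}{:\$:}}}\texttt{)}} nodes.

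First I would dispatch the case \ensuremath{\Varid{u}\mathrel{=}\Conid{Pure}\;\Varid{x}}: the first clause of \ensuremath{\Varid{fmap}} gives \ensuremath{\Varid{fmap}\;\Varid{f}\;\texttt{(}\Conid{Pure}\;\Varid{x}\texttt{)}\mathrel{=}\Conid{Pure}\;\texttt{(}\Varid{f}\;\Varid{x}\texttt{)}}, and by the first clause of \ensuremath{\Varid{size}} both \ensuremath{\Varid{size}\;\texttt{(}\Conid{Pure}\;\texttt{(}\Varid{f}\;\Varid{x}\texttt{)}\texttt{)}} and \ensuremath{\Varid{size}\;\texttt{(}\Conid{Pure}\;\Varid{x}\texttt{)}} reduce to \ensuremath{\mathrm{0}}. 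Then, for \ensuremath{\Varid{u}\mathrel{=}\Varid{h}\mathbin{\texttt{\color{red}{:\$:}}}\Varid{w}} (with \ensuremath{\Varid{w}} the \ensuremath{\Conid{FreeA}}-valued tail), the second clause of \ensuremath{\Varid{fmap}} rewrites \ensuremath{\Varid{u}} to \ensuremath{\Varid{fmap}\;\texttt{(}\Varid{f}\hsdot{\circ }{\texttt{.}}\texttt{)}\;\Varid{h}\mathbin{\texttt{\color{red}{:\$:}}}\Varid{w}}, whose tail is literally \ensuremath{\Varid{w}}; hence the second clause of \ensuremath{\Varid{size}} yields \ensuremath{\Varid{size}\;\texttt{(}\Varid{fmap}\;\Varid{f}\;\Varid{u}\texttt{)}\mathrel{=}\mathrm{1}\mathbin{+}\Varid{size}\;\Varid{w}\mathrel{=}\Varid{size}\;\Varid{u}}.

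I do not expect any real obstacle. The lemma is needed only to certify that the left argument of the recursive \ensuremath{\texttt{(}\mathbin{\texttt{<*>}}\texttt{)}} call in the clause of \ensuremath{\texttt{(}\mathbin{\texttt{<*>}}\texttt{)}} for \ensuremath{\Varid{h}\mathbin{\texttt{\color{red}{:\$:}}}\Varid{x}} --- namely \ensuremath{\Varid{x}} wrapped in an \ensuremath{\Varid{fmap}} --- has strictly smaller \ensuremath{\Varid{size}} than \ensuremath{\Varid{h}\mathbin{\texttt{\color{red}{:\$:}}}\Varid{x}}, and the calculation above settles this immediately, without even invoking the inductive hypothesis, since it gives \ensuremath{\Varid{size}\;\texttt{(}\Varid{fmap}\;\texttt{(}\mathbin{\texttt{,}}\texttt{)}\;\Varid{x}\texttt{)}\mathrel{=}\Varid{size}\;\Varid{x}<\mathrm{1}\mathbin{+}\Varid{size}\;\Varid{x}}. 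Should a uniformly inductive presentation be preferred, the \ensuremath{\texttt{(}\mathbin{\texttt{\color{red}{:\$:}}}\texttt{)}} case can equally be closed by appealing to the inductive hypothesis on \ensuremath{\Varid{w}}, but this adds nothing beyond the direct argument.
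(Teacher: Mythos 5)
Your proof is correct and matches the paper's argument exactly: both cases are settled by unfolding the definitions of \ensuremath{\Varid{fmap}} and \ensuremath{\Varid{size}}, and your observation that the nominal induction never actually uses an inductive hypothesis (since \ensuremath{\Varid{fmap}} leaves the \ensuremath{\texttt{(}\mathbin{\texttt{\color{red}{:\$:}}}\texttt{)}} tail untouched) is accurate of the paper's proof as well. No issues.
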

\begin{proof}
By induction:
\begin{hscode}\SaveRestoreHook
\column{B}{@{}>{\hspre}l<{\hspost}@{}}%
\column{5}{@{}>{\hspre}l<{\hspost}@{}}%
\column{E}{@{}>{\hspre}l<{\hspost}@{}}%
\>[5]{}\Varid{size}\;\texttt{(}\Varid{fmap}\;\Varid{f}\;\texttt{(}\Conid{Pure}\;\Varid{x}\texttt{)}\texttt{)}{}\<[E]%
\\
\>[B]{}\equiv \mbox{\commentbegin  definition of \ensuremath{\Varid{fmap}}  \commentend}{}\<[E]%
\\
\>[B]{}\hsindent{5}{}\<[5]%
\>[5]{}\Varid{size}\;\texttt{(}\Conid{Pure}\;\texttt{(}\Varid{f}\;\Varid{x}\texttt{)}\texttt{)}{}\<[E]%
\\
\>[B]{}\equiv \mbox{\commentbegin  definition of \ensuremath{\Varid{size}}  \commentend}{}\<[E]%
\\
\>[B]{}\hsindent{5}{}\<[5]%
\>[5]{}\mathrm{0}{}\<[E]%
\\
\>[B]{}\equiv \mbox{\commentbegin  definition of \ensuremath{\Varid{size}}  \commentend}{}\<[E]%
\\
\>[B]{}\hsindent{5}{}\<[5]%
\>[5]{}\Varid{size}\;\texttt{(}\Conid{Pure}\;\Varid{x}\texttt{)}{}\<[E]%
\ColumnHook
\end{hscode}\resethooks

\begin{hscode}\SaveRestoreHook
\column{B}{@{}>{\hspre}l<{\hspost}@{}}%
\column{5}{@{}>{\hspre}l<{\hspost}@{}}%
\column{E}{@{}>{\hspre}l<{\hspost}@{}}%
\>[5]{}\Varid{size}\;\texttt{(}\Varid{fmap}\;\Varid{f}\;\texttt{(}\Varid{g}\mathbin{\texttt{\color{red}{:\$:}}}\Varid{x}\texttt{)}\texttt{)}{}\<[E]%
\\
\>[B]{}\equiv \mbox{\commentbegin  definition of \ensuremath{\Varid{fmap}}  \commentend}{}\<[E]%
\\
\>[B]{}\hsindent{5}{}\<[5]%
\>[5]{}\Varid{size}\;\texttt{(}\Varid{fmap}\;\texttt{(}\Varid{f}\hsdot{\circ }{\texttt{.}}\texttt{)}\;\Varid{g}\mathbin{\texttt{\color{red}{:\$:}}}\Varid{x}\texttt{)}{}\<[E]%
\\
\>[B]{}\equiv \mbox{\commentbegin  definition of \ensuremath{\Varid{size}}  \commentend}{}\<[E]%
\\
\>[B]{}\hsindent{5}{}\<[5]%
\>[5]{}\mathrm{1}\mathbin{+}\Varid{size}\;\Varid{x}{}\<[E]%
\\
\>[B]{}\equiv \mbox{\commentbegin  definition of \ensuremath{\Varid{size}}  \commentend}{}\<[E]%
\\
\>[B]{}\hsindent{5}{}\<[5]%
\>[5]{}\Varid{size}\;\texttt{(}\Varid{g}\mathbin{\texttt{\color{red}{:\$:}}}\Varid{x}\texttt{)}{}\<[E]%
\ColumnHook
\end{hscode}\resethooks

\end{proof}

In most of our proofs using induction we carry out induction on the size of the
first argument of \ensuremath{\texttt{(}\mathbin{\texttt{<*>}}\texttt{)}} where size is defined by the above \ensuremath{\Varid{size}} function.

\section{Semantics}\label{sec:semantics}

In this section, we establish the results about accessible functors of locally
presentable categories that we used in section \ref{sec:totality} to justify the
inductive definition of \ensuremath{\Conid{FreeA}\;\Varid{f}}.

We begin with a technical lemma:

\begin{lem}\label{lem:kan_accessible}
Suppose we have the following diagram of categories and functors:
\[
\xymatrix{
\A \ar[d]_i \\
\B \ar[r]^F \ar[d]_K &
\C \\
\D \ar@{-->}[ru]_L
}
\]

where $\B$, and $\C$ are locally presentable, $F$ is accessible, and $i$ is the
inclusion of a dense small full subcategory of compact objects of $\B$.  Then
the pointwise left Kan extension $L$ of $F$ along $K$ exists and is equal to the
left Kan extension of $F i$ along $K i$.

If, furthermore, $\D$ is locally presentable and $K$ is accessible, then $L$ is
accessible.
\end{lem}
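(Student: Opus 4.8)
The plan is to reduce the left Kan extension along $K$ to one along the small category $\A$, where cocompleteness of $\C$ guarantees the pointwise extension exists, and then transport the result back. The first step is to show that $F$ is itself the pointwise left Kan extension of $Fi$ along $i$. Density of $i$ says precisely that the identity of $\B$ is the pointwise left Kan extension of $i$ along itself, i.e. that each $b\in\B$ is the colimit of the canonical diagram $i\downarrow b\to\A\xrightarrow{i}\B$. I would then choose a regular cardinal $\kappa$ large enough that $F$ is $\kappa$-accessible, every object of $\A$ is $\kappa$-presentable in $\B$, and --- using local presentability of $\B$ together with the density of $\A$ --- the canonical diagrams $i\downarrow b$ are $\kappa$-filtered; since $F$ preserves $\kappa$-filtered colimits it then preserves all of these, which is exactly the assertion $F\cong\mathrm{Lan}_i(Fi)$, pointwise.

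Next, since $\A$ is small and $\C$ is cocomplete, the pointwise left Kan extension $\mathrm{Lan}_{Ki}(Fi)\colon\D\to\C$ exists and is given by the coend $d\mapsto\int^{a:\A}\D(Kia,d)\cdot Fia$. By the standard pasting law for pointwise left Kan extensions --- if $\mathrm{Lan}_i(Fi)$ and $\mathrm{Lan}_{Ki}(Fi)$ are pointwise (the latter automatically, $\A$ being small), then $\mathrm{Lan}_K(\mathrm{Lan}_i(Fi))$ exists, is pointwise, and coincides with $\mathrm{Lan}_{Ki}(Fi)$ --- together with the previous step, I conclude that $L:=\mathrm{Lan}_K F$ exists, is pointwise, and equals $\mathrm{Lan}_{Ki}(Fi)$.

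For the accessibility claim, assume in addition that $\D$ is locally presentable and $K$ is accessible. Accessible functors between locally presentable categories preserve $\lambda$-presentable objects for all sufficiently large $\lambda$, and each $ia$ is compact, hence presentable, in $\B$; so I would enlarge $\kappa$ to a regular cardinal $\lambda$ for which $\A$ is $\lambda$-small and every $Kia$ is $\lambda$-presentable in $\D$. In the coend formula above, a $\lambda$-filtered colimit $d=\colim_j d_j$ in $\D$ is then preserved by each $\D(Kia,-)$; the copower $(-)\cdot Fia\colon\mathbf{Set}\to\C$ is a left adjoint, hence preserves it as well; and the coend, being a colimit, commutes with the colimit over $j$ by Fubini. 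Therefore $L$ preserves $\lambda$-filtered colimits, i.e. $L$ is accessible.

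The main obstacle is the first step: securing a single cardinal $\kappa$ for which the density presentations $i\downarrow b$ are genuinely $\kappa$-filtered --- so that $\kappa$-accessibility of $F$ can be invoked --- using only that $\A$ is a \emph{dense} small full subcategory of compact objects rather than all of $\B_\kappa$. Once that is arranged, the remaining steps are routine coend manipulations and an appeal to the pasting law.
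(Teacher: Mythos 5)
Your proof is essentially the paper's: both reduce the large colimit computing the pointwise extension over the comma category $(K\downarrow d)$ to a small one indexed by $\A$, using that $F$ preserves the canonical colimits presenting objects of $\B$ by compact objects of $\A$ (your $F\cong\mathrm{Lan}_i(Fi)$ followed by the pasting law is a repackaging of the paper's direct construction and verification of the universal cocone), and your accessibility argument via the coend formula $Ld=\int^{a:\A}\D(Ka,d)\cdot Fa$ and the three commutation facts is word-for-word the paper's. The obstacle you flag --- that density of a small full subcategory of compact objects does not by itself make the canonical diagrams $i\downarrow b$ filtered (the one-point set is dense in $\mathbf{Set}$ with discrete canonical diagrams) --- is a genuine subtlety, but the paper makes exactly the same silent assumption when it asserts that each $b$ is a \emph{canonical $\kappa$-filtered} colimit of objects of $\A$; in the intended applications $\A$ is the full subcategory of $\kappa$-presentable objects, for which this holds, so under that reading of the hypotheses your argument is complete and matches the paper's.
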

\begin{proof}
Let $\kappa$ be a regular cardinal such that $\B$ and $\C$ are $\kappa$-locally
presentable.

The pointwise left Kan extension $L$ can be obtained as a colimit:
\begin{equation} \label{eq:kan_colimit}
Ld = \colim_{\substack{b : \B\\g : Kb \to d}} F b,
\end{equation}
Where the indices range over the comma category $(K \downarrow d)$.  To show
that $L$ exists, it is therefore enough to prove that the colimit
\ref{eq:kan_colimit} can be realised as the small colimit:
\[
\colim_{\substack{a : \A\\f : K a \to d}} F a.
\]

For any $b : \B$, and $g : K b \to d$, we can express $b$ as a canonical
$\kappa$-filtered colimit of compact objects:
\[
b \cong \colim_{\substack{a : \A\\h : a \to b}} a.
\]
Since $F$ preserves $\kappa$-filtered colimits, we then get a morphism:
\[
F b \to \colim_{\substack{a : \A\\h : a \to b}} F a \to
\colim_{\substack{a : \A\\f : K a \to d}} F a.
\]
This gives a cocone for the colimit \ref{eq:kan_colimit}, and a straightforward
verification shows that it is universal.

As for the second statement, suppose $\D$ is also $\kappa$-locally presentable.
By possibly increasing $\kappa$, we can assume that $K a$ is $\kappa$-compact
for all $a : \A$ (such a $\kappa$ exists because $\A$ is small, and every object
of $\D$ is $\lambda$-compact for some $\lambda$).

Then, by the first part:
\[
L d = \int^{a : \A} F a \cdot \D(K a, d).
\]

Now, a filtered colimit in $d$ commutes with $\D(K a, -)$ because $K a$ is
compact, it commutes with $F a \cdot -$ because copowers are left adjoints, and
it commutes with coends because they are both colimits.

Therefore, $L$ is accessible.
\end{proof}

From now on, let $\B$ and $\C$ be categories with finite products, and $F, G :
\B \to \C$ be functors.

\begin{defn}\label{def:day}
The \emph{Day convolution} of $F$ and $G$, denoted $F \ast G$, is the pointwise
left Kan extension of the diagonal functor in the following diagram:
\[
\xymatrix{
\B \times \B \ar[r]^{F \times G} \ar[d]_{\times} \ar[rd] &
\C \times \C \ar[d]^{\times} \\
\B \ar@{-->}[r]_{F \ast G} &
\C
}
\]
\end{defn}

Note that the Day convolution of two functors might not exist, but it certainly
does if $\B$ is small and $\C$ is cocomplete.

\begin{lem}\label{lem:day_accessible}
Suppose that $\B$ and $\C$ are locally presentable and $F$ and $G$ are
accessible.  Then the Day convolution of $F$ and $G$ exists and is accessible.
\end{lem}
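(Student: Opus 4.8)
The plan is to reduce the statement directly to Lemma \ref{lem:kan_accessible}, since by Definition \ref{def:day} the Day convolution $F \ast G$ is, by construction, a pointwise left Kan extension. Concretely, I would instantiate Lemma \ref{lem:kan_accessible} with the category $\B \times \B$ in place of $\B$, the functor ${\times} \colon \B \times \B \to \B$ (the binary product of $\B$) in place of $K$, the category $\B$ itself in place of $\D$, and the functor $D := {\times} \circ (F \times G) \colon \B \times \B \to \C$ (the unlabelled diagonal functor of the square in Definition \ref{def:day}) in place of $F$. With these choices the left Kan extension $L$ produced by the lemma is exactly $F \ast G$, so the existence and accessibility of $F \ast G$ will follow from the two conclusions of Lemma \ref{lem:kan_accessible}, provided its hypotheses hold.

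Checking those hypotheses is the bulk of the argument, and it is mostly bookkeeping with regular cardinals. First, $\B \times \B$ is locally presentable, being a finite product of locally presentable categories, and $\C$ is locally presentable by assumption. Second, $D$ is accessible: $F \times G \colon \B \times \B \to \C \times \C$ preserves $\kappa$-filtered colimits whenever $F$ and $G$ both do, since filtered colimits in a product category are computed componentwise; and the product functor ${\times} \colon \C \times \C \to \C$ is accessible because in a locally presentable category finite limits commute with sufficiently filtered colimits; composing, $D$ is $\kappa$-accessible for a suitable regular $\kappa$. Third, fixing a regular $\mu \geq \kappa$ for which $\B \times \B$ is locally $\mu$-presentable and $D$ is $\mu$-accessible, I would take $i \colon \A \to \B \times \B$ to be the inclusion of (a skeleton of) the full subcategory of $\mu$-presentable objects, which is small, dense, and consists of $\mu$-compact objects, as the lemma requires. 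This gives the first conclusion: $F \ast G = L$ exists.

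For accessibility of $F \ast G$ one invokes the second part of Lemma \ref{lem:kan_accessible}, which additionally needs $\D$ locally presentable and $K$ accessible. Here $\D = \B$ is locally presentable by assumption, and $K = {\times} \colon \B \times \B \to \B$ is accessible for the same reason as the product on $\C$ — finite limits commute with filtered colimits in the locally presentable category $\B$. The lemma then yields that $L = F \ast G$ is accessible, which completes the proof.

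I expect the only genuinely delicate point to be the uniform choice of regular cardinal: one must pick a single $\mu$ large enough to witness simultaneously the $\mu$-accessibility of $F$, $G$, and hence $D$; the $\mu$-local presentability of $\B$, $\B \times \B$, and $\C$; and the $\mu$-compactness of the objects of $\A$ as well as of $K a$ for $a \in \A$ (the latter being precisely what the accessibility half of Lemma \ref{lem:kan_accessible} consumes). All of these hold after finitely many enlargements of $\kappa$, so there is no real obstruction here, only care. The accessibility of the product functors ${\times}$ on $\B$ and on $\C$, though standard, is the one fact I would state explicitly rather than leave implicit.
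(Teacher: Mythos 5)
Your proposal is correct and is exactly the paper's argument: the paper's proof of this lemma is the single line ``Immediate consequence of lemma \ref{lem:kan_accessible},'' and you have supplied precisely the intended instantiation ($\B\times\B$ for the source, the product functor for $K$, $\B$ for $\D$, and ${\times}\circ(F\times G)$ for the functor being extended) together with the routine cardinal bookkeeping. The one point worth stating explicitly, as you note, is the accessibility of the product functors, which in the paper's cartesian closed setting follows most cleanly from $-\times y$ being a left adjoint.
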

\begin{proof}
Immediate consequence of lemma \ref{lem:kan_accessible}.
\end{proof}

\begin{lem}\label{lem:day_alt}
Suppose that $\B$ is cartesian closed.  Then the Day convolution of $F$ and $G$
can be obtained as the coend:
\[
(F \ast G) b = \int^{y : \B} F [y, b] \times G y
\]
\end{lem}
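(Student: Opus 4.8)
The plan is to compute both sides as coends and match them up using the cartesian closed adjunction together with the co-Yoneda (density) formula. Whenever the Day convolution exists --- for instance under the hypotheses of Lemma~\ref{lem:day_accessible}, or when $\B$ is small and $\C$ is cocomplete --- the pointwise left Kan extension defining $F \ast G$ is computed by the standard coend formula
\[
(F \ast G)\, b \;\cong\; \int^{(x,y)\,:\,\B\times\B} \B(x \times y,\, b) \cdot \bigl(F x \times G y\bigr),
\]
where $\cdot$ denotes the copower (set-indexed coproduct) in $\C$ and the domain-side hom is $\B(-,-)$ evaluated on the product functor. So it suffices to transform the right-hand side into $\int^{y:\B} F[y,b] \times G y$ by a chain of isomorphisms natural in $b$.

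First I would apply the adjunction $-\times y \dashv [y,-]$ witnessing cartesian closedness of $\B$, which gives $\B(x \times y,\, b) \cong \B(x,\, [y,b])$ naturally in $x$, $y$ and $b$; the coend becomes $\int^{(x,y)} \B(x,[y,b]) \cdot (F x \times G y)$. Next I would invoke the Fubini theorem for coends to rewrite this as the iterated coend $\int^{y}\!\bigl(\int^{x} \B(x,[y,b]) \cdot (F x \times G y)\bigr)$. In the inner coend $G y$ is constant in $x$, and $-\times G y$ is a left adjoint (to $[G y,-]$, using that the target category is cartesian closed --- which holds in our setting, e.g.\ $\C = \mathbf{Set}$ or a presheaf category), so it preserves copowers and coends; hence the inner coend is isomorphic to $\bigl(\int^{x} \B(x,[y,b]) \cdot F x\bigr) \times G y$. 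Finally, the co-Yoneda formula $\int^{x} \B(x,c) \cdot F x \cong F c$ with $c = [y,b]$ collapses the inner coend to $F[y,b]$, leaving $\int^{y} F[y,b] \times G y$, as claimed.

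The genuinely delicate points are only bookkeeping ones. Each isomorphism used --- the closed-structure adjunction, Fubini, and co-Yoneda --- is natural in all of its variables, in particular in $b$; one must record this so that the composite is an isomorphism of \emph{functors} $F \ast G \cong \int^{y} F[y,-] \times G y$ rather than merely an objectwise bijection. The one step that uses structure on $\C$ beyond finite products is pulling $G y$ out of the coend over $x$, which relies on $-\times G y$ preserving colimits; this is harmless since the ambient categories of interest are cartesian closed, but it is worth stating explicitly. Everything else is routine coend calculus, and the existence of the coends involved is either part of the hypotheses (for the Day convolution itself) or automatic (the inner coend equals $F[y,b]$ by co-Yoneda).
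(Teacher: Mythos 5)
Your proof is correct and follows essentially the same coend-calculus computation as the paper: the standard colimit formula for the pointwise left Kan extension, Fubini, the exponential adjunction $\B(x\times y,b)\cong\B(x,[y,b])$, and co-Yoneda, differing only in the (immaterial) order of the adjunction and Fubini steps. Your explicit remark that pulling $G y$ out of the inner coend needs $-\times G y$ to preserve colimits is a point the paper's calculation leaves implicit, and is satisfied in its setting since $\C$ is taken to be (locally presentable) cartesian closed.
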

\begin{proof}
By coend calculus:
\begin{equation*}
\begin{aligned}
  & (F \ast G) b \\
= & \int^{x y : \B} F x \times G y \cdot \C(x \times y, b) \\
= & \int^{y : \B} \left( \int^{x : \B} F x \cdot \C(x \times y, b) \right) \times G y \\
= & \int^{y : \B} \left( \int^{x : \B} F x \cdot \C(x, [y, b]) \right) \times G y \\
= & \int^{y : \B} F [y, b] \times G y
\end{aligned}
\end{equation*}
\end{proof}

\begin{prop}\label{prop:func_lp}
Let $\kappa$ be a regular cardinal, and $\B$ and $\C$ be locally
$\kappa$-presentable.  Then the category $\mathsf{Func}_\kappa(\B, \C)$ of
$\kappa$-accessible functors is locally $\kappa$-presentable.
\end{prop}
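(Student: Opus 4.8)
The plan is to reduce the statement to two standard facts about locally presentable categories. Since $\B$ is locally $\kappa$-presentable, the full subcategory $\B_\kappa \subseteq \B$ of $\kappa$-presentable objects is essentially small, and the inclusion $J : \B_\kappa \hookrightarrow \B$ exhibits $\B$ as the free cocompletion of $\B_\kappa$ under $\kappa$-filtered colimits, i.e.\ $\B \simeq \mathsf{Ind}_\kappa(\B_\kappa)$. Replacing $\B_\kappa$ by a small skeleton, the universal property of $\mathsf{Ind}_\kappa$ then says that, because $\C$ (being locally presentable) has all $\kappa$-filtered colimits, restriction along $J$ is an equivalence
\[
J^{*} : \mathsf{Func}_\kappa(\B,\C) \;\xrightarrow{\ \simeq\ }\; [\,\B_\kappa,\C\,],
\]
with quasi-inverse the pointwise left Kan extension $\mathrm{Lan}_J$ (which exists since $\C$ is cocomplete); the content of the universal property is precisely that $\mathrm{Lan}_J G$ is again $\kappa$-accessible for every $G : \B_\kappa \to \C$, and that $J^{*}$ and $\mathrm{Lan}_J$ are mutually inverse up to natural isomorphism.

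The second fact is that, for any small category $\mathcal{S}$ and any locally $\kappa$-presentable $\C$, the functor category $[\mathcal{S},\C]$ is locally $\kappa$-presentable. I would either cite this (e.g.\ Ad\'amek--Rosick\'y) or argue directly: $[\mathcal{S},\C]$ is cocomplete, with colimits computed pointwise; for $s : \mathcal{S}$ and a $\kappa$-presentable $c : \C$, the functor $\mathcal{S}(s,-)\cdot c$ (the copower of the representable at $s$ with $c$) is $\kappa$-presentable in $[\mathcal{S},\C]$, since
\[
[\mathcal{S},\C]\big(\mathcal{S}(s,-)\cdot c,\; G\big) \;\cong\; \C\big(c,\; G s\big)
\]
naturally in $G$, and both $\C(c,-)$ and evaluation at $s$ preserve $\kappa$-filtered colimits (evaluation preserves all colimits, as they are pointwise); finally every $G$ is a canonical colimit of such functors, so they form a strong generator. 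Taking $\mathcal{S} = \B_\kappa$ and using that local $\kappa$-presentability is invariant under equivalence of categories, the equivalence above yields the proposition.

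The main obstacle is the first reduction, namely the claim that $\mathrm{Lan}_J$ carries \emph{arbitrary} functors on $\B_\kappa$ into $\kappa$-accessible functors, so that $J^{*}$ is genuinely an equivalence onto $\mathsf{Func}_\kappa(\B,\C)$ rather than merely a fully faithful functor. This is exactly the universal property of the $\kappa$-filtered cocompletion; the point underneath it is that in $\mathbf{Set}$, hence pointwise in $\C$, $\kappa$-filtered colimits commute with $\kappa$-small limits, together with the fact that every object of $\B$ is a canonical $\kappa$-filtered colimit of objects of $\B_\kappa$ which $\mathrm{Lan}_J G$ must preserve. The remaining steps --- pointwise cocompleteness of $[\B_\kappa,\C]$, the hom-isomorphism displayed above, and the strong-generator property --- are routine.
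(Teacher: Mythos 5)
Your proposal is correct and follows essentially the same route as the paper: the paper also reduces to the equivalence $\mathsf{Func}_\kappa(\B,\C) \simeq \mathsf{Func}(\A,\C)$ for $\A$ a small dense full subcategory of ($\kappa$-presentable) objects, with inverse given by left Kan extension along the inclusion, and then cites Ad\'amek--Rosick\'y (corollary 1.54) for the local $\kappa$-presentability of the functor category. Your write-up merely fills in more of the details that the paper leaves to the references.
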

\begin{proof}
Let $\A$ be a dense small full subcategory of $\B$.  The obvious functor
$\mathsf{Func}_\kappa(\B, \C) \to \mathsf{Func}(\A, \C)$ is an equivalence of
categories (its inverse is given by left Kan extensions along the inclusion $\A
\to \B$), and $\mathsf{Func}(\A, \C)$ is locally $\kappa$-presentable (see for
example \cite{adamek}, corollary 1.54).
\end{proof}

\begin{prop}\label{prop:day_op_accessible}
Let $\kappa$ be a regular cardinal such that $\B$ and $\C$ are locally
$\kappa$-presentable, and the Day convolution of any two $\kappa$-accessible
functors is $\kappa$-accessible (which exists by lemma \ref{lem:day_accessible}).

Then the Day convolution operator
\[
\mathsf{Func}_\kappa(\B, \C) \times \mathsf{Func}_\kappa(\B, \C) \to
\mathsf{Func}_\kappa(\B, \C)
\]
is itself a $\kappa$-accessible functor.
\end{prop}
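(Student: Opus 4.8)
The plan is to show that the Day convolution operator is $\kappa$-accessible by exhibiting it as a left Kan extension of an accessible functor, so that Lemma \ref{lem:kan_accessible} applies directly — exactly as Lemma \ref{lem:day_accessible} was proved for a single Day convolution, but now one categorical level up. The domain category here is $\mathsf{Func}_\kappa(\B, \C) \times \mathsf{Func}_\kappa(\B, \C)$, which by Proposition \ref{prop:func_lp} is locally $\kappa$-presentable (a product of two locally $\kappa$-presentable categories), and the codomain $\mathsf{Func}_\kappa(\B, \C)$ is likewise locally $\kappa$-presentable. So both source and target of the Day convolution functor are in the good class, and it remains to check accessibility.

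First I would identify a dense small full subcategory of compact objects in $\mathsf{Func}_\kappa(\B, \C) \times \mathsf{Func}_\kappa(\B, \C)$. Using the equivalence $\mathsf{Func}_\kappa(\B, \C) \simeq \mathsf{Func}(\A, \C)$ from the proof of Proposition \ref{prop:func_lp} (where $\A$ is a dense small full subcategory of $\B$), the $\kappa$-compact functors are generated, under $\kappa$-filtered colimits, by functors of the form ``copower of a representable'', i.e. objects of the shape $\C(a, -)$-like building blocks; concretely one takes the small full subcategory on functors $a^* \colon b \mapsto \coprod_{\A(a,\,?)} \dots$ — in any case, a dense small full subcategory $\mathcal{G}$ of compact objects exists because $\mathsf{Func}_\kappa(\B,\C)$ is locally $\kappa$-presentable. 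Then $\mathcal{G} \times \mathcal{G}$ is a dense small full subcategory of compact objects of the product. On such generating pairs, the Day convolution is computed by the coend formula of Lemma \ref{lem:day_alt}, and since the inputs are compact the coend reduces to a finite/small expression built from representables, products, and copowers — hence lands in $\mathsf{Func}_\kappa(\B, \C)$ and the restriction of the Day convolution to $\mathcal{G} \times \mathcal{G}$ is honestly small-valued and well-behaved.

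Next I would invoke Lemma \ref{lem:kan_accessible} with $\B := \mathcal{G} \times \mathcal{G}$'s ambient category, $F :=$ the Day convolution restricted appropriately, $K := $ identity or the relevant inclusion, and $i := $ the inclusion of $\mathcal{G} \times \mathcal{G}$. The key hypotheses to verify are: the ambient categories are locally presentable (done above), $F$ restricted to the compact generators is accessible (it preserves $\kappa$-filtered colimits because the coend formula of Lemma \ref{lem:day_alt} is built from left adjoints — copowers, products with a fixed object in a cartesian closed setting commute with filtered colimits when the objects are compact, and coends are colimits, so all commute with $\kappa$-filtered colimits, essentially repeating the argument in the proof of Lemma \ref{lem:kan_accessible}), and $K$ is accessible. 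The conclusion is that the left Kan extension — which is the full Day convolution operator, since Day convolution is itself a pointwise left Kan extension and hence preserves the relevant colimits — is accessible.

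The main obstacle I expect is the bookkeeping around dense subcategories of compact objects at the functor-category level: one must make sure that $\mathsf{Func}_\kappa(\B, \C)$ genuinely has a \emph{dense} small full subcategory consisting of \emph{$\kappa$-compact} objects (not merely a small generating set), so that the hypotheses of Lemma \ref{lem:kan_accessible} are literally met, and that the Day convolution of two such compact functors is again $\kappa$-compact — this last point is where one uses that $\kappa$ was chosen large enough in the hypothesis of the proposition. Once that is pinned down, the accessibility argument is a direct transcription of the one-level-down proof: every operation in the coend formula is a colimit or a left adjoint, hence commutes with $\kappa$-filtered colimits, so the Day convolution operator does too.
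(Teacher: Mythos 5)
Your proposal buries the correct argument inside a framework that does not close. The paper's proof is a two-line direct check: a bifunctor between locally presentable categories is $\kappa$-accessible as soon as it preserves $\kappa$-filtered colimits in each variable separately (the diagonal is cofinal in a filtered index category squared), and preservation is checked pointwise because colimits in $\mathsf{Func}_\kappa(\B,\C)$ are computed pointwise; then the coend formula of lemma \ref{lem:day_alt} exhibits $(F \ast G)\,b$ as built from finite products, copowers and a coend, all of which commute with filtered colimits in the variables $F$ and $G$. Your closing sentence states exactly this, and it is the whole proof. The problem is that your main line of attack --- exhibiting the operator $\ast : \mathsf{Func}_\kappa(\B,\C)^2 \to \mathsf{Func}_\kappa(\B,\C)$ as a left Kan extension so that lemma \ref{lem:kan_accessible} applies ``one level up'' --- is circular as organized. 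Lemma \ref{lem:kan_accessible} produces an accessible functor as $\operatorname{Lan}_K F$ for an accessible $F$ and a nontrivial $K$; here there is no such $K$ (you offer ``identity or the relevant inclusion''), and if you instead Kan-extend the \emph{restriction} of $\ast$ to a dense small subcategory $\mathcal{G}\times\mathcal{G}$ of compact objects along its inclusion, you obtain an accessible functor, but identifying that Kan extension with $\ast$ itself is \emph{equivalent} to $\ast$ preserving $\kappa$-filtered colimits --- the very statement to be proved. The remark that ``Day convolution is itself a pointwise left Kan extension'' conflates the Kan extension in definition \ref{def:day}, which lives at the level of functors $\B \to \C$ for fixed $F$ and $G$, with the (nonexistent) Kan-extension presentation of the operator $(F,G) \mapsto F \ast G$.

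A secondary slip: you assert that ``$F$ restricted to the compact generators is accessible'' and justify this with the filtered-colimit argument, but a functor whose domain is a small category of generators has no $\kappa$-filtered colimits to preserve; the filtered-colimit argument must be applied to the operator on the whole of $\mathsf{Func}_\kappa(\B,\C)^2$, where it succeeds immediately. If you delete the Kan-extension scaffolding and promote your final sentence to the body of the proof --- adding the explicit reduction to separate variables and the observation that everything is pointwise --- you recover the paper's argument.
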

\begin{proof}
It is enough to show that $\ast$ preserves filtered colimits pointwise in its
two variables separately.  But this is clear, since filtered colimits commute
with finite products, copowers and coends.
\end{proof}

We can recast equation \ref{eq:higher_functor} in terms of Day convolution as
follows:
\begin{equation} \label{eq:higher_functor_day}
A G = \mathsf{Id} + F \ast G.
\end{equation}

Equation \ref{eq:higher_functor_day} makes precise the intuition that free
applicative functors are in some sense lists (i.e. free monoids).  In fact, the
functor $A$ is exactly the one appearing in the usual recursive definition of
lists, only in this case the construction is happening in the monoidal category
of accessible endofunctors equipped with Day convolution.

We also sketch the following purely categorical construction of free applicative
(i.e. lax monoidal) functors, which is not essential for the rest of the paper,
but is quite an easy consequence of the machinery developed in this section.

The  idea is  to  perform the  ``list''  construction in  one  step, instead  of
iterating individual Day convolutions using recursion.  Namely, for any category
$\C$, let  $\C^*$ be the \emph{free  monoidal category} generated by  $\C$.  The
objects (resp. morphisms) of $\C^*$  are lists of objects (resp. morphisms) of
$\C$.  Clearly, $\C^*$ is accessible if $\C$ is.

If $\C$ has finite products, there is a functor
\[
\epsilon : \C^* \to \C
\]
which maps a list to its corresponding product.  Note that $\epsilon$ is
accessible.  Furthermore, the assigment $\C \mapsto \C^*$ extends to a 2-functor on
$\mathsf{Cat}$ which preserves accessibility of functors.

Now, the free applicative $G$ on a functor $F : \C \to \C$ is simply defined to
be the Kan extension of $\epsilon \circ F^*$ along $\epsilon$:
\[
\xymatrix{
\C^* \ar[r]^{F^*} \ar[d]_\epsilon \ar[dr] &
\C^* \ar[d]^\epsilon \\
\C \ar@{-->}[r]_G & \C
}
\]

The functor $G$ is accessible by an appropriate generalisation of lemma
\ref{lem:kan_accessible}, and it is not hard to see that it is lax monoidal (see
for example \cite{constructing_applicatives}, proposition 4).  We omit the proof
that $G$ is a free object, which can be obtained by diagram chasing using the
universal property of Kan extensions.

\section{Related work}\label{sec:related}

The idea of free applicative functors is not entirely new.  There have been a
number of different definitions of free applicative functor over a given Haskell
functor, but none of them includes a proof of the applicative laws.

The first author of this paper published a specific instance of applicative
functors\footnote{\url{http://paolocapriotti.com/blog/2012/04/27/applicative-option-parser}}
similar to our example shown in section \ref{example:option_intro}.  The example
has later been expanded into a fully-featured Haskell library for command line
option parsing.\footnote{\url{http://hackage.haskell.org/package/optparse-applicative}}

Tom Ellis proposes a definition very similar to
ours,\footnote{\url{http://web.jaguarpaw.co.uk/~tom/blog/posts/2012-09-09-towards-free-applicatives.html}}
but uses a separate inductive type for the case corresponding to our \ensuremath{\texttt{(}\mathbin{\texttt{\color{red}{:\$:}}}\texttt{)}}
constructor.  He then observes that law \ref{app:hom} probably holds because of
the existential quantification, but does not provide a proof.  We solve this
problem by deriving the necessary equation \ref{naturality} as a ``free
theorem''.

Gerg\H{o} \'Erdi gives another similar
definition\footnote{\url{http://gergo.erdi.hu/blog/2012-12-01-static_analysis_with_applicatives/}},
but his version presents some redundancies, and thus fails to obey the
applicative laws.  For example, \ensuremath{\Conid{Pure}\;\Varid{id}\mathbin{\texttt{<*>}}\Varid{x}} can easily be distinguished from
\ensuremath{\Varid{x}} using a function like our \ensuremath{\Varid{count}} above, defined by pattern matching on the
constructors.

However, this is remedied by only exposing a limited interface which includes
the equivalent of our \ensuremath{\Varid{raise}} function, but \emph{not} the \ensuremath{\Conid{Pure}} and \ensuremath{\Conid{Free}}
constructors.  It is probably impossible to observe a violation of the laws
using the reduced interface, but that also means that definitions by pattern
matching, like the one for our \ensuremath{\Varid{matchOpt}} in section \ref{example:option_raise},
are prohibited.

The \text{\tt free} package on
hackage\footnote{\url{http://hackage.haskell.org/package/free}} contains a
definition essentially identical to our \ensuremath{\Conid{FreeAL}}, differing only in the order of
arguments.

Another approach, which differs significantly from the one presented in the
paper, underlies the definition contained in the \text{\tt free\char45{}functors} package on
hackage,\footnote{\url{http://hackage.haskell.org/package/free-functors}} and
uses a Church-like encoding (and the \text{\tt ConstraintKinds} GHC extension) to
generalise the construction of a free \ensuremath{\Conid{Applicative}} to any superclass of
\ensuremath{\Conid{Functor}}.

The idea is to use the fact that, if a functor $T$ has a left adjoint $F$, then
the monad $T \circ F$ is the codensity monad of $T$ (i.e. the right Kan
extension of $T$ along itself).  By taking $T$ to be the forgetful functor
$\mathcal{A} \to \mathcal{F}$, one can obtain a formula for $F$ using the
expression of a right Kan extension as an end.

One problem with this approach is that the applicative laws, which make up the
definition of the category $\mathcal{A}$, are left implicit in the universal
quantification used to represent the end.

In fact, specializing the code in \text{\tt Data\char46{}Functor\char46{}HFree} to the \text{\tt Applicative}
constraint, we get:

\begin{hscode}\SaveRestoreHook
\column{B}{@{}>{\hspre}l<{\hspost}@{}}%
\column{3}{@{}>{\hspre}l<{\hspost}@{}}%
\column{5}{@{}>{\hspre}l<{\hspost}@{}}%
\column{13}{@{}>{\hspre}c<{\hspost}@{}}%
\column{13E}{@{}l@{}}%
\column{17}{@{}>{\hspre}l<{\hspost}@{}}%
\column{E}{@{}>{\hspre}l<{\hspost}@{}}%
\>[B]{}\texttt{\color{blue}\textbf{data}}\;\Conid{FreeA'}\;\Varid{f}\;\Varid{a}\mathrel{=}\Conid{FreeA'}\;\{\mskip1.5mu {}\<[E]%
\\
\>[B]{}\hsindent{3}{}\<[3]%
\>[3]{}\Varid{runFreeA}{}\<[13]%
\>[13]{}\mathbin{::}{}\<[13E]%
\>[17]{}\forall \Varid{g}\hsforall \hsdot{\circ }{\texttt{.}}\Conid{Applicative}\;\Varid{g}{}\<[E]%
\\
\>[13]{}\Rightarrow {}\<[13E]%
\>[17]{}\texttt{(}\forall \Varid{x}\hsforall \hsdot{\circ }{\texttt{.}}\Varid{f}\;\Varid{x}\to \Varid{g}\;\Varid{x}\texttt{)}\to \Varid{g}\;\Varid{a}\mskip1.5mu\}{}\<[E]%
\\[\blanklineskip]%
\>[B]{}\texttt{\color{blue}\textbf{instance}}\;\Conid{Functor}\;\Varid{f}\Rightarrow \Conid{Functor}\;\texttt{(}\Conid{FreeA'}\;\Varid{f}\texttt{)}\;\texttt{\color{blue}\textbf{where}}{}\<[E]%
\\
\>[B]{}\hsindent{3}{}\<[3]%
\>[3]{}\Varid{fmap}\;\Varid{h}\;\texttt{(}\Conid{FreeA'}\;\Varid{t}\texttt{)}\mathrel{=}\Conid{FreeA'}\;\texttt{(}\Varid{fmap}\;\Varid{h}\hsdot{\circ }{\texttt{.}}\Varid{t}\texttt{)}{}\<[E]%
\\[\blanklineskip]%
\>[B]{}\texttt{\color{blue}\textbf{instance}}\;\Conid{Functor}\;\Varid{f}\Rightarrow \Conid{Applicative}\;\texttt{(}\Conid{FreeA'}\;\Varid{f}\texttt{)}\;\texttt{\color{blue}\textbf{where}}{}\<[E]%
\\
\>[B]{}\hsindent{3}{}\<[3]%
\>[3]{}\Varid{pure}\;\Varid{x}\mathrel{=}\Conid{FreeA'}\;\texttt{(}\lambda \anonymous \to \Varid{pure}\;\Varid{x}\texttt{)}{}\<[E]%
\\
\>[B]{}\hsindent{3}{}\<[3]%
\>[3]{}\Conid{FreeA'}\;\Varid{t1}\mathbin{\texttt{<*>}}\Conid{FreeA'}\;\Varid{t2}\mathrel{=}{}\<[E]%
\\
\>[3]{}\hsindent{2}{}\<[5]%
\>[5]{}\Conid{FreeA'}\;\texttt{(}\lambda \Varid{u}\to \Varid{t1}\;\Varid{u}\mathbin{\texttt{<*>}}\Varid{t2}\;\Varid{u}\texttt{)}{}\<[E]%
\ColumnHook
\end{hscode}\resethooks

Now, for law \ref{app:id} to hold, for example, we need to prove that the term
\ensuremath{\lambda \Varid{u}\to \Varid{pure}\;\Varid{id}\mathbin{\texttt{<*>}}\Varid{t}\;\Varid{u}} is equal to \ensuremath{\Varid{t}}.  This is strictly speaking false, as
those terms can be distinguished by taking any functor with an \ensuremath{\Conid{Applicative}}
instance that does not satisfy law \ref{app:id}, and as \ensuremath{\Varid{t}} a constant function
returning a counter-example for it.

Intuitively, however, the laws should hold provided we never make use of invalid
\ensuremath{\Conid{Applicative}} instances.  To make this intuition precise, one would probably
need to extend the language with quantification over equations, and prove a
parametricity result for this extension.

Another problem of the Church encoding is that, like \'Erdi's solution above, it
presents a more limited interface, and thus it is harder to use.  In fact, the
destructor \ensuremath{\Varid{runFreeA}} is essentially equivalent to our \ensuremath{\Varid{raise}} function, which
can only be used to define \emph{applicative} natural transformation.  Again, a
function like \ensuremath{\Varid{matchOpt}}, which is not applicative, could not be defined over
\ensuremath{\Conid{FreeA'}} in a direct way.

\section{Discussion and further work}\label{sec:discussion}

We have presented a practical definition of free applicative functor over any
Haskell functor, proved its properties, and showed some of its applications.  As
the examples in this paper show, free applicative functors solve certain
problems very effectively, but their applicability is somewhat limited.

For example, applicative parsers usually need an \ensuremath{\Conid{Alternative}} instance as well,
and the free applicative construction does not provide that.  One possible
direction for future work is trying to address this issue by modifying the
construction to yield a free \ensuremath{\Conid{Alternative}} functor, instead.

Unfortunately, there is no satisfactory set of laws for alternative functors: if
we simply define an alternative functor as a monoid object in $\mathcal{A}$,
then many commonly used instances become invalid, like the one for \ensuremath{\Conid{Maybe}}.
Using rig categories and their lax functors to formalise alternative functors
seems to be a workable strategy, and we are currently exploring it.

Another direction is formalizing the proofs in this paper in a proof assistant,
by embedding the total subset of Haskell under consideration into a type theory
with dependent types.

Our attempts to replicate the proofs in Agda have failed, so far, because of
subtle issues in the interplay between parametricity and the encoding of
existentials with dependent sums.

In particular, equation \ref{naturality} is inconsistent with a representation
of the existential as a $\Sigma$ type in the definition of \ensuremath{\Conid{FreeA}}.  For
example, terms like \ensuremath{\Varid{const}\;\texttt{(}\texttt{)}\mathbin{\texttt{\color{red}{:\$:}}}\Conid{Pure}\;\mathrm{3}} and \ensuremath{\Varid{id}\mathbin{\texttt{\color{red}{:\$:}}}\Conid{Pure}\;\texttt{(}\texttt{)}} are equal by
equation \ref{naturality}, but can obviously be distinguished using large
elimination.

This is not too surprising, as we repeatedly made use of size restrictions in
sections \ref{sec:totality} and \ref{sec:semantics}, and those will definitely
need to be somehow replicated in a predicative type theory like the one
implemented by Agda.

A reasonable compromise is to develop the construction only for
\emph{containers} \cite{containers}, for which one can prove that the
free applicative on the functor $S \rhd P$ is given, using the
notation at the end of section \ref{sec:semantics}, by $S^* \rhd
(\epsilon \circ P^*)$, where $S$ is regarded as a discrete category.

Another possible further development of the results in this paper is trying to
generalise the construction of a free applicative functor to functors of any
monoidal category.  In section \ref{sec:semantics} we focused on categories with
finite products, but it is clear that monoidal categories are the most natural
setting, as evidenced by the appearance of the corresponding 2-comonad on
$\mathsf{Cat}$.

Furthermore, an applicative functor is defined in \cite{applicative} as a lax
monoidal functor \emph{with a strength}, but we completely ignore strengths in
this paper.  This could be remedied by working in the more general setting of
$\mathcal{V}$-categories and $\mathcal{V}$-functors, for some monoidal category
$\mathcal{V}$.

\section{Acknowledgements}

We would like to thank Jennifer Hackett, Thorsten Altenkirch, Venanzio Capretta,
Graham Hutton, Edsko de Vries and Christian Sattler, for helpful suggestions
and insightful discussions on the topics presented in this paper.

\bibliographystyle{eptcs}
\bibliography{b}

\end{document}